\newcommand{\C}{{\mathbb C}}
\newcommand{\N}{{\mathbb N}}
\newcommand{\one}{\mathbbm{1}}
\newcommand{\cH}{{\mathcal H}}
\newcommand{\id}{\mathbbm{1}}
\newcommand{\be}{\begin{equation}}
\newcommand{\ee}{\end{equation}}
\newcommand{\beq}{\begin{eqnarray}}
\newcommand{\eeq}{\end{eqnarray}}
\newcommand{\bea}{\begin{eqnarray}}
\newcommand{\eea}{\end{eqnarray}}
\newcommand{\nn}{\nonumber}
\newcommand{\bra}{\langle}
\newcommand{\ket}{\rangle}
\newcommand{\tr}{{\mathrm Tr}}
\newcommand{\rd}{\mathrm{d}}
\newcommand{\bpm}{\begin{pmatrix}}
\newcommand{\epm}{\end{pmatrix}}
\newcommand{\bvm}{\begin{vmatrix}}
\newcommand{\evm}{\end{vmatrix}}
\def\nn{\nonumber}
\newtheorem{theorem}{Theorem}[section]
\newtheorem{lemma}[theorem]{Lemma}
\begin{document}

\title{Pachner moves in a 4d Riemannian holomorphic Spin Foam model}

\author{{\bf Andrzej Banburski}\email{abanburski@perimeterinstitute.ca}}
\affiliation{ Perimeter Institute for Theoretical Physics,
Waterloo, Ontario, Canada.}
\affiliation{Department of Physics, University of Waterloo, Waterloo, Ontario, Canada}

\author{{\bf Lin-Qing Chen}\email{lchen@perimeterinsititute.ca}}
\affiliation{ Perimeter Institute for Theoretical Physics,
Waterloo, Ontario, Canada.}
\affiliation{Department of Physics, University of Waterloo, Waterloo, Ontario, Canada}

\author{{\bf Laurent Freidel}\email{lfreidel@perimeterinsititute.ca}}
\affiliation{ Perimeter Institute for Theoretical Physics,
Waterloo, Ontario, Canada.}

\author{{\bf Jeff Hnybida}\email{jhnybida@perimeterinsititute.ca}}
\affiliation{ Institute for Mathematics, Astrophysics, and Particle Physics
Faculty of Science, Radboud University, Nijmegen, Netherlands}

\begin{abstract}
In this work we study a  Spin Foam model for 4d Riemannian gravity, and propose a new way of imposing the simplicity constraints  that 
uses the recently developed holomorphic representation.
Using the power of the holomorphic integration techniques, 
and with the introduction of two new tools: the homogeneity map and the loop identity,  for the first time we give the analytic expressions for the behaviour of the Spin Foam amplitudes under 4-dimensional Pachner moves.
 It turns out that this behaviour is controlled by an insertion of nonlocal mixing operators.
 In the case of the  5--1 move, the expression governing the change of the amplitude can be interpreted as a vertex renormalisation equation.
 We find  a natural truncation scheme that allows us to get an invariance up to an overall factor for the 4--2 and 5--1 moves, but not for the 3--3 move. 
 The study of the divergences shows that there is a range of parameter space for which  the 4--2 move is finite while the 5--1 move diverges.
 This opens up the possibility  to recover diffeomorphism invariance in the   continuum limit of Spin Foam models for 4D Quantum Gravity. 
\end{abstract}

\maketitle
\tableofcontents

\section{Introduction}

Spin Foam models attempt to rigorously define a path integral for transition amplitudes in Quantum Gravity \cite{Rovelli:2011eq, Perez:2012wv}.  These models are well defined thanks to  a discrete regularization of spacetime.  The dynamics of this discrete structure is still not understood and is currently under intense investigation \cite{Mamone:2009pw, Riello:2013bzw,Dittrich:2014mxa,Bahr:2014qza}.  In this paper we study the Spin Foam  dynamics for one particular model with the goal of determining its continuum limit. 
We also develop along the way new strategies, tools and techniques that can be applied to a larger class of models.

Spin Foam models for gravity are usually constructed in analogy with a formulation of General Relativity, due to Plebanski \cite{Plebanski:1977zz}, as a constrained topological gauge theory known as BF theory.  Each Spin Foam model is a proposal for a discretized version of these constraints, known as simplicity constraints.  The most popular proposals are due to EPRL and FK \cite{Engle:2007wy, Engle:2007uq, Engle:2007qf, Freidel:2007py}.  In this paper we will focus on a Spin(4) Spin Foam model which is closely related to a model proposed by Dupuis and Livine \cite{Dupuis:2011fz}, for 4d Riemannian Quantum Gravity.  We choose this model because it is formulated in a holomorphic representation which is particularly suitable for performing analytical calculations.

The DL model arose from recent work in rewriting spin foams in a coherent state (or holomorphic) representation beginning with \cite{coh1, Livine:2007ya} and leading to many new tools and interesting results \cite{Freidel:2009nu, Freidel:2009ck, Freidel:2010tt, Borja:2010rc, Livine:2011gp, Dupuis:2012vp, Dupuis:2010iq, Dupuis:2011dh, Dupuis:2011wy}.  In the holomorphic representation complicated integrals over SU(2) group elements can be rewritten as simple integrals over the complex plane. This allows for exact evaluations of complicated spin network functions \cite{Freidel:2012ji}, and gives hope to study the dynamics of Spin Foam amplitudes analytically, and eventually numerically.

The natural path towards finding a continuum limit of a discrete theory involves studying coarse graining and applying renormalization methods. Note that already in flat spacetime lattice QCD \cite{mm}, this is non-trivial, as one needs to study the critical behaviour of the model. In order to obtain locally covariant continuum theory, it would seem that the usual global scale transformations might be not appropriate. Some early ideas \cite{Markopoulou:2002ja, Oeckl:2002ia, Oeckl:2004yf} in Spin Foam models have instead focused on
defining coarse graining via local scale transformations. A notion of refinement scale can be provided by embedding finer triangulations into coarser ones, while requiring so-called \emph{cylindrical consistency} \cite{Ashtekar:1994mh, Ashtekar:1994wa, Baratin:2010nn, Dittrich:2012jq, Bahr:2014qza}. Not much work has been done in this direction however, as the dynamics of Spin Foam models have not been understood beyond triangulations built out of more than few basic building blocks \cite{Mamone:2009pw, Riello:2013bzw}. Recently, a more global approach with the use of Tensor Network Renormalization scheme \cite{levin, Singh:2012np, Gu:2009dr} has been used to numerically study dimensionally reduced analogue Spin Foam models - so-called spin nets \cite{Dittrich:2011zh, Dittrich:2011av, Bahr:2012qj, Dittrich:2013uqe, Dittrich:2013aia, Dittrich:2013voa}, but the ideas have yet to be applied to full 4d models. Another approach under investigation is renormalization of Group Field Theories (GFT) \cite{Livine:2005tp, Freidel:2009hd, BenGeloun:2011rc, Rivasseau:2011hm, BenGeloun:2012pu, Geloun:2012bz, Carrozza:2012uv, Samary:2012bw, Carrozza:2013wda, Geloun:2013saa, Carrozza:2014rba}, which generate Spin Foam amplitudes. However, the renormalizable GFTs studied so far have not been of relevance to 4d gravity. It thus still seems crucial to understand what are the dynamics of Spin Foams for configurations that can be iteratively coarse grained.

The most basic local coarse graining move on a simplicial decomposition of a manifold is a specific type of so-called Pachner moves \cite{Pachner}. Pachner moves are local changes of triangulation that allow to go from some triangulation of a manifold to any other triangulation in a finite number of steps. In 4 dimensions there are three different Pachner moves: The 3--3 move, 4--2 move and  5--1 move (and their inverses). An $n$--$(2+d-n)$  Pachner move changes a triangulation composed out of $n$ d-simplices to one with $(2+d-n)$  d-simplices. Only the $n$--1 Pachner moves are pure coarse graining moves. The action of classical 4d Regge Calculus is known to be invariant under 5--1 and the 4--2 moves \cite{Dittrich:2011vz}, but it has been a long standing open problem to make any statement about invariance under these moves for non-topological Spin Foam models. Only the naive degree of divergence of the 5--1 move has been estimated for the EPRL model \cite{Perini:2008pd}. This question is not even obvious in linearized gravity, as the partition function of the quantum linearized Regge Calculus has recently been found to be not invariant, as it picks up a nonlocal measure factor \cite{Dittrich:2014rha}.

In this paper, we calculate the 4-dimensional Pachner moves for the first time in a Spin Foam model with simplicity constraints describing 4d Riemannian Quantum Gravity. We find that the model considered is not invariant under the 5--1 Pachner move, as the configuration of five 4-simplices reduces to a single 4-simplex with an insertion of a nonlocal operator inside. Similar behaviour occurs also for the 4--2 move. We conjecture that this is also the case for the other  Spin Foam models of 4d gravity studied in the literature and discuss the possible meaning of this operator and the necessity for truncation in defining coarse graining. We then find a natural  truncation scheme that allows us to make both the 4--2 and 5--1 moves invariant up to a weight depending on the boundary data. The 3--3 move is not invariant, unless very special symmetric boundary data are considered, as expected for a model of 4D Quantum Gravity.

The plan of the paper is as follows: in section \ref{sec:spinfoams} we review the construction of discretized BF theory and quantize it. We then present the diagrammatics and discuss their geometrical meaning. Next, we go on to reviewing the holomorphic representation, which simplifies the construction of SU(2) invariants. This allows us to write the partition function of BF theory in the holomorphic representation. We then review the holomorphic simplicity constraints. We show that these can be imposed in two different ways. The usual way is to impose them on the boundary spin network function, resulting in the Dupuis-Livine model \cite{Dupuis:2011fz, Dupuis:2011dh}, which is very similar to the EPRL-FK model. The alternative way is to impose the constraints onto a Spin(4) projector. Surprisingly, the two models have the same semi-classical limit for one 4-simplex \cite{spinfoamfactory} . In section \ref{sec:3dPachner} we review the calculation of Pachner moves in the Ponzano-Regge model for 3d Quantum Gravity. We start with defining the notion of Pachner moves. We then discuss the gauge fixing procedure for the internal rotational SU(2) symmetry. Next, we derive a crucial identity (which we refer to as the loop identity) that allows for the calculation of Pachner moves. We finish the section with calculating the 3--2 and 4--1 Pachner moves and discussing the fate of diffeomorphism symmetry. In section \ref{sec:4dPachner} we calculate the Pachner moves in the 4d Spin Foam model constructed from the constrained projectors. We obtain the loop identity for the constrained model and find that it exhibits extra mixing of strands in graphs, compared to the topological case. We then apply the loop identity to  the 3--3, 4--2 and the 5--1 moves, and find them to be not invariant. In section \ref{sec:coarse} we discuss the necessity for coarse graining and show that there is a natural truncation scheme, which makes the 4--2 and 5--1 moves invariant up to a factor, while keeping the 3--3 move not invariant. We then go on to calculate and analyze the degrees of divergence of the 4--2 and 5--1 moves, and find that for a range of parameters, the latter can be divergent, while the former can be convergent. We conclude by discussing the implications of these results and the connection to recovering diffeomorphism invariance in the continuum limit of Spin Foam models.

\section{Holomorphic Spin Foam Models}\label{sec:spinfoams}
In this section, we will start from reviewing the discretized path integral formalism of BF theory and introducing cable diagrams notation. We will then review the holomorphic representation and the simplicity constraints. After giving a brief review of the holomorphic Spin Foam model, which was proposed in \cite{Dupuis:2011fz, Dupuis:2011dh}, we will introduce an alternative model through a different imposition of these constraints. The section will end with introducing the homogeneity map, the key tool which makes the computation in this paper possible.

\subsection{BF theory and Cable Diagrams}
Spin Foam models originated from the insight that classical gravity can be described as a topological field theory (BF theory) with a simplicity constraint. As BF theory only has topological degrees of freedom, it can be easily quantized.  Spin Foam models are then a path integral quantization for BF theory, with simplicity constraints imposed at the quantum level.  In this section, we  introduce the discretized path integral formalism of BF theory, and also give an intuitive graphic notation.

Let $\Delta$ be a simplicial complex homeomorphic to a $d$-dimensional manifold $\mathcal{M}$ and let $\Delta^\ast$ be its dual 2-complex. The partition function of SU(2) BF theory is defined in terms of the edges $e$ and faces $f$ of $\Delta^\ast$ by 
\begin{equation}
\mathcal{Z}_{BF} (\mathcal{M}) =\int \prod _{e\in \Delta^*} dg_e \prod_{f\in \Delta^*} \delta(g_{e_1}...g_{e_n}) .
\end{equation}
The $\delta$ functions for each face can be expanded in representations $j_f$ using the Peter-Weyl theorem as $\delta(g_{e_1}...g_{e_n}) = \sum_{j_f} (2j_f+1) \text{tr}_{j_f} \left(g_{1} \cdots g_{e_n} \right)$, where the trace is over the representation $j_f$.
Inserting the resolution of identity on the representation space $V^{j_{f_1}} \otimes \cdots \otimes V^{j_{f_d}}$ between each group element in the trace, we can write
where $P^{j_{f_1},...,j_{f_d}}$ is the projector onto the SU(2) invariant subspace of $V^{j_{f_1}} \otimes \cdots \otimes V^{j_{f_d}}$ given by group averaging the tensor product of irreducible representations $\rho^j : V^j \rightarrow V^j$
\be \label{eqn_Haar_projector}
  P^{j_{f_1},...,j_{f_d}} = \int \rd g_e \: \rho^{j_{f_1}}(g_e) \otimes \cdots \otimes \rho^{j_{f_d}}(g_e) ,
\ee
where the basis labels of the representations $\rho^{j_{f}}$ have been supressed.  The projector is the unique map $P^{j_1,...,j_d} : \rho^{j_1} \otimes ....\otimes \rho^{j_d} \rightarrow \text{Inv}_\text{SU(2)}[\rho^{j_1} \otimes ....\otimes \rho^{j_d} ]$, and is often called the Haar projector.  For more details, see for example \cite{Perez:2012wv}.

Cable diagrams are an  intuitive and useful graphic notation, for  the computations of Spin Foam partition functions (a good review of these techniques  is given in \cite{Oeckl:2002ia}). Here it is used to represent the structure of partition function on the dual 2-complex $\Delta^\ast$. Cable diagrams are basically composed by strands passing through boxes: a strand denotes a representation of a symmetry group living on the edge e of $\Delta^\ast$, and a box denotes  the group averaging of a set of representations in the projector.
\be \label{cable}
\rho^{j} = 
\begin{tikzpicture}[baseline=0,scale=0.45]
  \draw (-2,0) -- (2,0); 
  \node at (0,0.5) {$j$}; 
\end{tikzpicture}
\qquad \text{and} \qquad
P^{j_{1},...,j_{d}} = 
\begin{tikzpicture}[baseline=0,scale=0.4]
  \node at (-2,2) {$j_1$}; \draw (-3,1.5) -- (-1,1.5);
  \node at (-2,1) {$j_2$}; \draw (-3,0.5) -- (-1,0.5);
  \node at (-2,0) {$j_3$}; \draw (-3,-0.5) -- (-1,-0.5);
  \node at (-2,-1) {$j_4$}; \draw (-3,-1.5) -- (-1,-1.5);
  \draw (-1,-2) --(-1,2) -- (1,2) -- (1,-2) -- (-1,-2);
  \draw (1,1.5) -- (3,1.5); \node at (2,2) {$j_1$}; 
  \draw (1,0.5) -- (3,0.5); \node at (2,1) {$j_2$}; 
  \draw (1,-0.5) -- (3,-0.5); \node at (2,0) {$j_3$}; 
  \draw (1,-1.5) -- (3,-1.5); \node at (2,-1) {$j_4$}; 
\end{tikzpicture}
\ \ \ (d=4)
\ee


Strands form closed loops, which correspond to the faces in the dual 2-complex $\Delta^\ast$. Fig.\ref{dual_tetrahedron} and Fig.\ref{dual_cable_tetrahedron} give an example in 3 dimensions: a 3-simplex, its dual 2-complex and the corresponding cable diagram.

\begin{figure} [h]
\centering
\parbox{6cm}{
\includegraphics[width=0.3\textwidth]{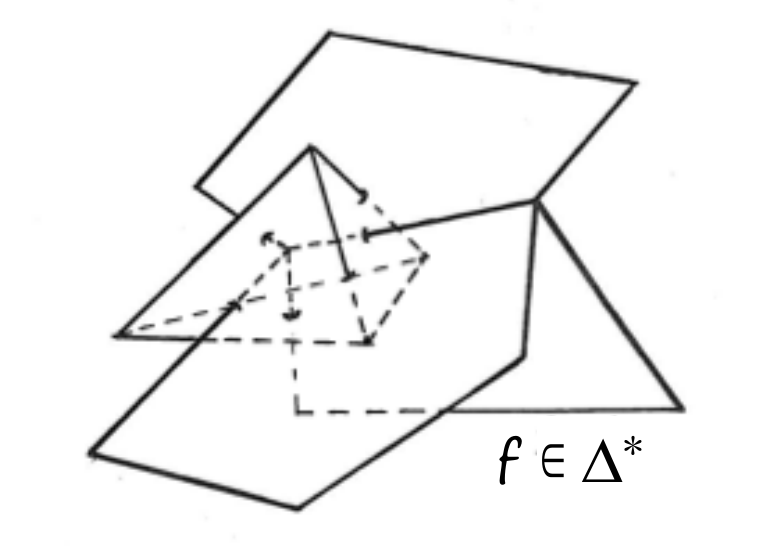}
\caption{A tetrahedron and its dual 2-complex}
\label{dual_tetrahedron}}
\qquad
\begin{minipage}{5 cm}
\includegraphics[width=0.9\textwidth]{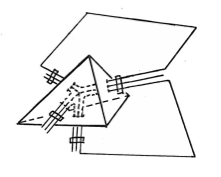}
\caption{A tetrahedron and its cable diagram}
\label{dual_cable_tetrahedron}
\end{minipage}
\end{figure}

The projector $P^{j_1,..,j_d}$ can be expressed as a sum over a basis of invariant tensors called intertwiners, as is usually done in Spin Foam models:
\be
  P^{j_1,..,j_d} = \sum_{\iota} \|j_i, \iota \ket \bra j_i, \iota\|
  = \sum_{\iota} \:
  \begin{tikzpicture}[baseline=0,scale=0.45]
  \node at (-3.5,1.5) {$j_1$}; \draw (-3,1.5) -- (-1,0);
  \node at (-3.5,0.5) {$j_2$}; \draw (-3,0.5) -- (-1,0);
  \node at (-3.5,-0.5) {$j_3$}; \draw (-3,-0.5) -- (-1,0);
  \node at (-3.5,-1.5) {$j_4$}; \draw (-3,-1.5) -- (-1,0);
  \draw (1,0) -- (3,1.5); \node at (3.5,1.5) {$j_1$}; 
  \draw (1,0) -- (3,0.5); \node at (3.5,0.5) {$j_2$}; 
  \draw (1,0) -- (3,-0.5); \node at (3.5,-0.5) {$j_3$}; 
  \draw (1,0) -- (3,-1.5); \node at (3.5,-1.5) {$j_4$}; 
  \node at (-0.5,0) {$\iota$}; \node at (0.5,0) {$\iota$};
\end{tikzpicture}
\ee
where $\iota$ labels a basis of normalized intertwiners. We see that the projector factorizes on the edges, while the intertwiners contract at the vertices of $\Delta^\ast$ expressing the partition function in terms of so called vertex amplitudes. For example, if $\mathcal{M}$ is 4 dimensional, the BF partition function can be writen as
\be\label{eqn_BF_intertwiners}
\mathcal{Z}_{BF} (\mathcal{M})=\sum_{j_f} \prod_{f \in \Delta^\ast} (2j_f +1) \sum_{\iota_e} \prod_{v \in \Delta^\ast} 
\begin{tikzpicture}[baseline=-1,scale=1.3]
  \draw (0,1) -- (0.951,0.309) -- (0.588,-0.809) -- (-0.588,-0.809) -- (-0.951,0.309) -- (0,1) -- (0.588,-0.809) -- (-0.951,0.309) -- (0.951,0.309) -- (-0.588,-0.809) -- (0,1);
  \node at (0,1.15) {$\iota_{e_1}$};
  \node at (1.151,0.409) {$\iota_{e_2}$};
  \node at (0.788,-1.00) {$\iota_{e_3}$};
  \node at (-0.738,-0.970) {$\iota_{e_4}$};
  \node at (-1.151,0.409) {$\iota_{e_5}$};
\end{tikzpicture} .
\ee
This is perhaps the most familiar form of BF theory.  In 3d the vertex amplitudes of the Ponzano-Regge model are 6j symbols with no intertwiner labels since rank three intertwiners are one dimensional.  In 4d the vertex amplitudes of the Oorguri model are 15j symbols labelled by 10 spins and 5 intertwiner labels, which are also spins. In this work, we will be most interested in the coherent state representation which can be defined in any dimension.  The Livine-Speziale coherent intertwiners form an over complete basis and are labelled by a set of $d$ spinors $\{z_i\}_{i=1}^{d}$.  The vertex amplitude therefore depends on $d(d+1)$ spinors.




\subsection{The Holomorphic Representation and Diagramatics}
We choose to use a spinor representation of SU(2) in the Bargmann-Fock space $L^2_{hol}(\C^2,d\mu)$  of holomorphic polynomials of a spinor \cite{Bargmann, Schwinger, Livine:2011gp}.  One of the features of this representation that will facilitate our calculations is that the Hermitian inner product is Gaussian:
\be \label{barg_in_prod}
  \bra f | g \ket = \int_{\C^2} \overline{f(z)} g(z) \rd\mu(z),
\ee
where $\rd\mu(z) = \pi^{-2} e^{-\bra z | z \ket}  \rd^{4}z$ and $\rd^{4}z$ is the Lebesgue measure on $\C^2$.

Given $z \in \C^2$ we denote its conjugate by $\check{z}$.  We use a bra-ket notation for $z$ and square brackets $\check{z}$ as in
\be
  |z\ket = \bpm z_0 \\ z_1 \epm, \qquad |z] = \bpm -\overline{z}_1 \\ \overline{z}_0 \epm .
\ee
That is $|\check{z}\ket = |z]$.  Notice that while $\bra z|$ is anti-holomorphic, $[z|$ is holomorphic and orthogonal to $|z\ket$, i.e. $[z|z\ket = 0$. This non-standard notation for spinors will turn out to be useful, as we will always work with contractions of spinors, without the need for writing out the indices. Our notation is related to the usual one as follows: $z_A = | z \ket$, $\bar{z}_{A'} = \bra z |$, and the spinor invariants are $[z|w\ket = z_{A'}w_A\epsilon^{A' A}$ and $\bra z|w\ket = \bar{z}_{A'}w_A\delta^{A' A}$.
The bracket $[z|w\ket$ associated with the $\epsilon$ tensor is skew-symmetric, holomorphic and SL$(2,\mathbb{C})$ invariant.
The bracket $\bra z|w\ket$ associated with the identity  tensor is hermitian, and only SU$(2)$ invariant.

Let us now study the identity on the Bargmann-Fock space $L^2_{hol}(\C^2,d\mu)$. The delta distribution on this space is given by $\delta_{w}(z) = e^{\bra z | w \ket}$, since for any holomorphic function $\int \rd \mu(z) f(z) e^{\bra z | w \ket} = f(w)$.  Let us use a line to represent the  delta  graphically by
\be
e^{\bra z|w\ket} = 
\begin{tikzpicture}[baseline=0,scale=0.45]
  \node at (-3,0) {$\bra z |$}; \draw (-2,0) -- (2,0); 
  \node at (3,0) {$|w\ket$}; 
\end{tikzpicture} \qquad \text{and} \qquad
\frac{\bra z|w\ket^{2j}}{(2j)!} = 
\begin{tikzpicture}[baseline=0,scale=0.45]
  \node at (-3,0) {$\bra z |$}; \draw (-2,0) -- (2,0); 
  \node at (3,0) {$|w\ket$}; 
  \node at (0,1) {$j$};
\end{tikzpicture}.
\ee
Therefore the Gaussian integral $\int \rd \mu(w) e^{\bra z|w\ket + \bra w | z'\ket} = e^{\bra z|z'\ket}$ implies the contraction
\be \label{eqn_contraction}
\int \rd \mu(w)
\begin{tikzpicture}[baseline=0,scale=0.45]
  \node at (-3,0) {$\bra z|$}; \draw (-2,0) -- (2,0); 
  \node at (3,0) {$|w\ket$}; 
  \node at (0,1) {$j$};
  \node at (4,0) {$\bra w |$}; \draw (5,0) -- (9,0); 
  \node at (10,0) {$|z'\ket$}; 
  \node at (7,1) {$j'$};
\end{tikzpicture}
= \delta_{j,j'}
\begin{tikzpicture}[baseline=0,scale=0.45]
  \node at (-3,0) {$\bra z|$}; \draw (-2,0) -- (2,0); 
  \node at (3,0) {$|z'\ket$}; 
  \node at (0,1) {$j$};
\end{tikzpicture}.
\ee

For a function of four spinors (with obvious generalization to $n$ spinors) we can thus define the \emph{trivial projector}, which we will denote as
\be \label{trivial_proj}
  \one(z_i;w_i) = e^{\sum_{i=1}^{4} [z_i|w_i\ket} = 
  \begin{tikzpicture}[baseline=0,scale=0.45]
  \node at (-4,1.5) {$[z_1|$}; \draw (-3,1.5) -- (3,1.5);
  \node at (-4,0.5) {$[z_2|$}; \draw (-3,0.5) -- (3,0.5);
  \node at (-4,-0.5) {$[z_3|$}; \draw (-3,-0.5) -- (3,-0.5);
  \node at (-4,-1.5) {$[z_4|$}; \draw (-3,-1.5) -- (3,-1.5);
  \node at (4,1.5) {$|w_1\ket$}; 
  \node at (4,0.5) {$|w_2\ket$}; 
  \node at (4,-0.5) {$|w_3\ket$}; 
  \node at (4,-1.5) {$|w_4\ket$}; 
\end{tikzpicture}.
\ee

Next, we will study how SU(2) acts on the elements of the Bargmann-Fock space. For a generic holomorphic function $f\in L^2_{hol}(\C^2,d\mu)$, the group action is given by
\be
  g \cdot f(z) = f(g^{-1}z).
\ee
The group SU(2) acts irreducibly on the subspaces of holomorphic polynomials homogeneous of degree $2j$. Holomorphic polynomials with different  degrees of homogeneity are orthognal with each other. Indeed,
$L^2_{hol}(\C^2,d\mu)=\bigoplus_{j\in \N /2} V^j$ and
an orthonormal basis of $V^j$ is given by 
\be
  e^{j}_{m}(z) \equiv \frac{z_{0}^{j+m} z_{1}^{j-m}}{\sqrt{(j+m)!(j-m)!}}
\ee
and it is of dimension $2j+1$.

In the study of gauge-invariant Spin Foam models, we will be interested in the SU(2) invariant functions on $n$ spinors 
\be
f(g z_1, g z_2, ... , g z_n) = f(z_1, z_2, ... , z_n)\,, \quad \forall g \in {\rm SU(2)}.
\ee 
We will denote the invariant elements of $L^2(\C^2,\rd\mu)^{\otimes n}$ to be in $\cH_n$, which is the Hilbert space of $n$-valent intertwiners:
\be
  \cH_n = \bigoplus_{j_i} \cH_{j_1,...,j_n}\equiv \bigoplus_{j_i} \text{Inv}_{\text{SU}(2)}\left[V^{j_1} \otimes \cdots \otimes V^{j_n} \right] .
\ee
One way to construct an element of $\cH_n$ is to average a function of $n$ spinors over the group using the Haar measure.  In this way we can construct a projector $P :L^2(\C^2,\rd\mu)^{\otimes n} \rightarrow \cH_n$ which is called the Haar projector as
\be
  P(f)(w_i) = \int \prod_i \rd\mu(z_i) P(\check{z}_i;w_i) f(z_1, z_2, ... , z_n) = \int_{\text{SU(2)}} \rd g f(g w_1, g w_2, ... , g w_n)
\ee
where the kernel is given by\footnote{For a review of Guassian integration techniques see Appendix \ref{app_gauss}.}
\be
  P(z_i;w_i) = \int_{\text{SU}(2)} \rd g \,e^{\sum_i [z_i|g|w_i\ket} = 
\begin{tikzpicture}[baseline=0,scale=0.45]
  \node at (-4,1.5) {$[z_1|$}; \draw (-3,1.5) -- (-1,1.5);
  \node at (-4,0.5) {$[z_2|$}; \draw (-3,0.5) -- (-1,0.5);
  \node at (-4,-0.5) {$[z_3|$}; \draw (-3,-0.5) -- (-1,-0.5);
  \node at (-4,-1.5) {$[z_4|$}; \draw (-3,-1.5) -- (-1,-1.5);
  \draw (-1,-2) --(-1,2) -- (1,2) -- (1,-2) -- (-1,-2);
  \draw (1,1.5) -- (3,1.5); \node at (4,1.5) {$|w_1\ket$}; 
  \draw (1,0.5) -- (3,0.5); \node at (4,0.5) {$|w_2\ket$}; 
  \draw (1,-0.5) -- (3,-0.5); \node at (4,-0.5) {$|w_3\ket$}; 
  \draw (1,-1.5) -- (3,-1.5); \node at (4,-1.5) {$|w_4\ket$}; 
\end{tikzpicture},
\label{proj}
\ee
 where we use a box to represent group averaging with respect to the Haar measure over SU(2). Hence the projector onto the invariant subspace is simply the group average of $\one(z_i;w_i)$. From the above, we see that a contraction of two spinors on the same strand but belonging to two different projectors is obtained by setting $z^1_i = \check{w}^2_i$. This implies that the kernel of the  projector satisfies the projection property
\be\label{eq:PPisP}
 \int \prod_i \rd \mu(w_i) P(z_i;w_i) P(\check{w}_i;z'_i) = P(z_i;z'_i) .
\ee
 We will also refer from now on to the kernel $P(z_i;w_i)$ as a projector for convenience.   As shown in \cite{Freidel:2010tt, Freidel:2012ji}, we can perform the integration over $g$  in Eq. (\ref{proj}) explicitly, which gives a power series in the holomorphic spinor invariants:
\be  \label{proj2}
  P(z_i;w_i) = \sum_{[k]} \frac{1}{(J+1)!} \prod_{i<j} \frac{([z_i|z_j\ket[w_i|w_j\ket)^{k_{ij}}}{k_{ij}!} ,
\ee
where the sum is over a set of $n(n-1)/2$ non-negative integers $[k]\equiv (k_{ij})_{i\neq j = 1,\cdots, n}$  with $1\leq i<j\leq n$ and $k_{ij}=k_{ji}$. A short proof of this  statement is  given in the Appendix \ref{app_thm_proj} for the reader's convenience. Thus a basis of n-valent intertwiners  is given by
\be\label{C}
  ( z_{i} | k_{ij} \ket  \equiv \prod_{i<j}\frac{ [z_{i}|z_{j}\ket^{k_{ij}}}{k_{ij}!}.
\ee
The non-negative integers $(k_{ij})_{i\neq j = 1,\cdots, n}$ are satisfying the $n$ homogeneity conditions 
\be\label{kj}
\sum_{j\neq i} k_{ij} =2j_{i}.
\ee

The sum of spins at the vertex is defined by $J = \sum_i j_i = \sum_{i<j} k_{ij}$ and is required to be a positive integer.  We also see from Eq. (\ref{proj2})  that the identity on ${\cal H}_{j_{i}}$ is resolved as follows
\be \label{eqn_res_id_disc}
  \one_{{\cal H}_{j_{i}}} = \sum_{[k]\in K_{j}} \frac{| k_{ij} \ket \bra k_{ij} | }{||k_{ij}||^{2}}, \qquad \|k_{ij}\|^{2} =\frac{ (J+1)!}{\prod_{i<j}k_{ij}!}.
\ee
with the set $K_j$ defined by integers $k_{ij}$ satisfying Eq.(\ref{kj}). For more details on these intertwiners and the coherent states defined by them, see \cite{Freidel:2013fia} where this basis was introduced for the first time.

Before we go on to the discussion of simplicity constraints, let us notice that using a multinomial expansion Eq.(\ref{proj2}) can be writen in terms of total spin:
\be \label{eqn_res_id_UN}
  P(z_i;w_i) = \sum_{J=0}^\infty \frac{\left( \sum_{i<j} [z_i|z_j\ket[w_i|w_j\ket\right)^J}{J!(J+1)!},
\ee
which will turn out to be a quite useful expression for the projector for computation purposes. Note that this is an expansion in U(N) coherent intertwiners of total area $J$.

\subsection{Holomorphic Simplicity Constraints}

Holomorphic simplicity constraints for spinorial Spin Foam models were first  introduced in \cite{Dupuis:2011fz} for Riemannian gravity. Here we give a short summary, but refer the reader to the original paper for their full derivation.

For the Riemannian 4d Spin Foam models, we use the gauge group Spin(4) = SU(2)$_L$ $\times$ SU(2)$_R$, which is the double cover of SO(4).   The holomorphic simplicity constraints are isomorphisms between the two representation spaces of SU(2): for any two edges $i,j$ that are a part of  the same vertex $a$, they are defined by
\begin{equation}
[z^a_{iL} | z^a_{jL} \ket = \rho^2 [z^a_{iR} | z^a_{jR} \ket,
\label{hsc}
\end{equation}
where $\rho$ is a function of the Immirzi parameter $\gamma$ given by
\begin{equation}
\rho^2 = \left\{ 
  \begin{array}{ll}
 (1-\gamma)/(1+\gamma), & \quad  |\gamma| < 1\\
 (\gamma-1)/(1+\gamma), & \quad  |\gamma| > 1
  \end{array} \right .
\end{equation}
The holomorphic simplicity constraints Eq.(\ref{hsc}) essentially tell us that there exists a unique group element $g_a \in \mathrm{SL}(2,\C)$ for each vertex $a$, such that
\begin{equation}\label{eq:simplicityonz}
\forall i, \ \ \  g_{a} |z_{iL}^a \ket = \rho\  |z_{iR}^a \ket.
\end{equation}

A general element of $\mathrm{SL}(2,\C)$ can be decomposed into the product of an hermitian matrix times an element of $\mathrm{SU}(2)$, so that 
$g_a = h_a u_a$ with $h^\dagger_a = h_a$.
It is only when $h_a=\one$ that the holomorphic simplicity constraints imply the usual geometrical simplicity constraints.
In the FK formulation of the spin foam model which is only partially holomorphic this is implied since the norm of the spinors is fixed. 
The fully  holomorphic formulation of DL therefore relaxes at the quantum level the simplicity constraints. Fortunately, one one can check following \cite{Freidel:2013fia} that in the semi-classical limit of Holomorphic amplitudes the Gauss constraints due to the gauge invariance of the amplitude can be realized in the form 
\be
\sum_i |z_{iL}^a \ket\bra z_{iL}^a | = A_L \one, \quad  \sum_i |z_{iR}^a \ket\bra z_{iR}^a | = A_R \one.
\ee
This imposes that in the classical limit $h_a =\one$ and the geometrical simplicity constraints  $u_{a} |z_{iL}^a \ket = \rho\  |z_{iR}^a \ket$ with $u_a \in \mathrm{SU}(2)$ are satisfied.

Let us recall the geometrical meaning of these:
Each spinor defines a three vector $\vec V (z) \in \bold{R}^3$ through the equation
\begin{equation}
|z \ket \bra z| =\frac{1}{2} \left(   1 \!\! 1  \bra z | z \ket  + \vec V(z)  \cdot \vec\sigma \right),\qquad
|z ] [ z| =\frac{1}{2} \left( 1 \!\! 1   [ z | z ]   - \vec V(z)  \cdot \vec\sigma \right)
\end{equation}
where $\vec\sigma$ is the vector made by Pauli matrices. Thus around a vertex in a spin-network, each link  dual to a triangle in the simplicial manifold, is associated with two 3-vectors $\vec V_L(z)$ and $\vec V_R(z)$ given by the left and right spinors.  Classically, they correspond to the selfdual $b_+$ and anti-selfdual $b_-$ components of the $B$ field respectively :
\begin{equation}
V^i_L(z) = b_+^i := B^{0i} +\frac{1}{2} \epsilon^i_{kl} B^{kl}, \ \ \ \  V^i_R (z) =b_-^i :=  -B^{0i} +\frac{1}{2} \epsilon^i_{kl} B^{kl} .
\end{equation}
Note here that the time norm is chosen to be $N_I=(1,0,0,0)^T$. For the Hodge dual of the B field, we find $(\ast b)_+ = b_+ = \vec V _L(z)$, and $(\ast b)_- = - b_- = - \vec V_R (z) $.

For the vectors  $\vec V_L(z)$ and $\vec V_R(z)$  defined by the spinors of the two copies of SU(2)  this means that the holomorphic simplicity constraints imply
\begin{equation}
 g_{a} \triangleright  \vec V_L(z_{i}^a)  = \rho^2  \vec V_R(z_{i}^a),\ \ \ \ \ \forall i \in a
\end{equation}
which leads to the constraint that the norm of the selfdual and anti-selfdual components of the bivector $(g_{a}, \one) \triangleright  (B+ \gamma \ast B)\ $ have to be equal to each other:
\begin{equation}
|(1+\gamma) g_{a} \triangleright b_+| = |(1-\gamma)  \triangleright b_-|  .
\end{equation}
Thus $B$ and $\ast B$ are simple bivectors, and for the spin network vertex  $a$, there exists a common time norm to all the bivectors:
\begin{equation}\label{eq:4dnormal}
\mathcal{N}_a = (g_{a}, \one)^{-1} \triangleright (1,0,0,0) .
\end{equation}
The existence of this shared time norm implies the linear simplicity constraints introduced by the EPRL and FK models \cite{Freidel:2007py, Engle:2007uq, Engle:2007qf, Engle:2007wy}.

It is interesting to note that $g_a$ can be expressed purely in terms of spinors as 
\begin{equation}
 g_{a} =\frac{ |z_{iR}^a \ket  \bra z^a_{iL}|  + | z^a_{iR} ] [z^a_{iL}|}{\sqrt{\bra z^a_{iL}|z^a_{iL} \ket  \bra z^a_{iR}|z^a_{iR} \ket }}, \ \ \ \forall i\in a .
\label{g}
\end{equation}
It is easy to check that this satisfies Eq. (\ref{eq:simplicityonz}). Note here that $g_a$ is a unique group element for all strands belonging to the same vertex.

\subsection{Imposing constraints}
We will now impose the holomorphic simplicity constraints on the Spin(4) BF theory in order to obtain a model of 4d Riemannian Quantum Gravity. There are two natural ways of imposing these constraints - either on the boundary spin network defined by contraction of coherent states \cite{Dupuis:2011fz}, or on the whole projector (\ref{proj}). We will first summarize the usual approach, which we will refer to as the DL prescription. Then we will introduce an alternative model in which the constraint is imposed on the whole projector. It is very surprising that the alternative model actually has the same asymptotic behavior \cite{spinfoamfactory} as the DL prescription and EPRL/FK model (with $|\gamma|<1$)\cite{Conrady:2008mk, Barrett:2009gg, Han:2011rf}, i.e. the amplitude is weighted by a cosine of the Regge \cite{Regge:1961px}  action. It leads however to a much simpler calculation when we evaluate the Pachner moves than the DL case. Even though there is no technical obstacle to use the DL prescription, we will study it in a subsequent article, and focus on the constrained projector model in this paper.
\subsubsection{DL prescription  \label{sec:DLprojector}}
In \cite{Dupuis:2011fz, Dupuis:2011dh} Dupuis and Livine  introduced a Spin Foam model similar to the EPRL/FK models, but written in terms of spinorial coherent states with the holomorphic simplicity constraints. Since BF amplitudes can be seen as evaluations of spin network functions, the simplicity constraints in this model are imposed in the usual way -- on the boundary spin network given by the amplitude. The amplitude for a single 4-simplex $\sigma$ is given by a product of contraction of coherent states for left and right sectors, with the simplicity constraints imposed on the boundary spinors as follows
\be
\mathcal{A}_\sigma(\{z_\Delta^\tau\}) = \int \left[ \rd g^L_\tau\right]^5\left[ \rd g^R_\tau\right]^5 e^{\sum_{\Delta\in\sigma} \rho^2 [z_\Delta^{s(\Delta)}|g_{s(\Delta)}^{L\ \ \ -1}g_{t(\Delta)}^L|z_\Delta^{t(\Delta)}\ket + [z_\Delta^{s(\Delta)}|g_{s(\Delta)}^{R\ \ \ -1}g_{t(\Delta)}^R|z_\Delta^{t(\Delta)}\ket }
\ee
where $\Delta$ label different triangles/strands and $\tau , s(\Delta), t(\Delta)$ label tetrahedra/projectors. Graphically this is presented in Fig. \ref{fig:DLamplitude}. 
\begin{figure}[h]
	\centering
		\includegraphics[width=0.4\textwidth]{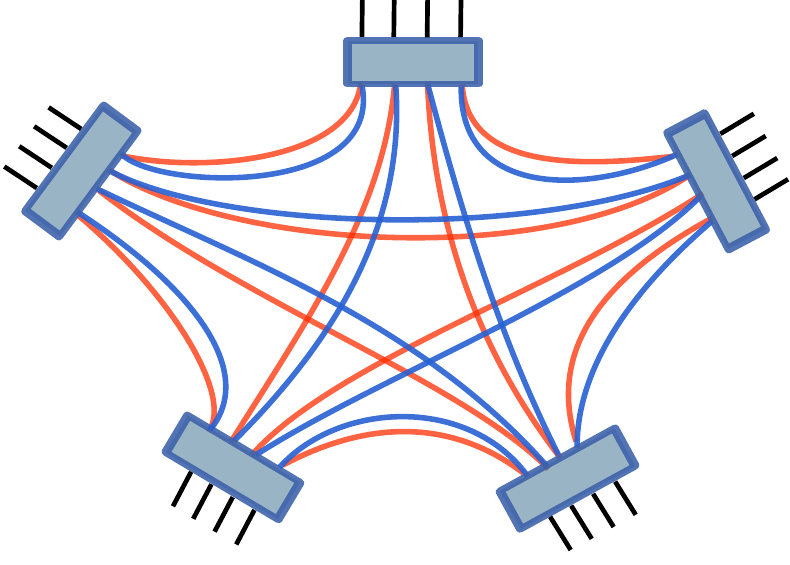}
	\caption{Graph for the 4-simplex amplitude in the DL model. The contractions inside correspond to two copies of BF 20j symbols, constrained on the boundary.}
		\label{fig:DLamplitude}
\end{figure}
This amplitude corresponds to two copies of 20j symbols from BF theory constrained by $[z_L^{s(\Delta)}|z_L^{t(\Delta)}\ket = \rho^2 [z_R^{s(\Delta)}|z_R^{t(\Delta)}\ket$ on the boundary.

\subsubsection{Constrained projector}\label{sec:constrainedprojector}
Since spin foam amplitudes for BF theory are constructed by gluing together projectors  (\ref{eqn_res_id_UN}) into graphs corresponding to 4d quantum geometries, we find it natural to instead impose the constraints on the arguments of the projectors themselves.  
Let us consider the Spin(4) projector obtained by taking a product of two SU(2) projectors
\be
  P(z_i;w_i) P(z'_i;w'_i) = \sum_{J} \frac{\left( \sum_{i<j} [z_i|z_j\ket[w_i|w_j\ket\right)^J}{J!(J+1)!} \sum_{J'} \frac{\left( \sum_{i<j} [z'_i|z'_j\ket[w'_i|w'_j\ket\right)^{J'}}{J'!(J'+1)!},
\ee
where we use a prime to distinguish the left and right 
SU(2) sectors.  We will now impose the holomorphic simplicity constraints on both incoming and outgoing strands in the Spin(4) projector
$$[z'_i|z'_j\ket = \rho^2 [z_i|z_j\ket \qquad [w'_i|w'_j\ket = \rho^2 [w_i|w_j\ket .$$ 
This will make the two products of spinors in the two projectors proportional to each other, with the proportionality constant being $\rho^4$. Note that the imposition of simplicity constraints on all of the spinors also forces the measure of integration on $\C^2$ to change to
\be
\rd\mu_\rho(z) := \frac{(1+\rho^2)^2}{\pi^{2}}e^{-(1+\rho^2)\bra z|z\ket} \rd^2 z .
\ee
 The factor of $(1+\rho^2)^2$ is added for normalization. It insures that 
\be
\int \rd\mu_\rho(z) =1.
\ee  
Moreover this choice of normalization  is confirmed by the study of asymptotics of both this and the DL model \cite{spinfoamfactory}.
It is exactly this choice that insures that both models have the same semi-classical limit.
We are now ready to define a new \emph{constrained propagator} $P_\rho$ by applying the simplicity constraints on the Spin(4) projector 
\be
  P_\rho (z_i;w_i) \equiv P(z_i;w_i) P(\rho z_i;\rho w_i) = \sum_{J} \sum_{J'} \frac{\rho^{4J'}}{J!(J+1)!J'!(J'+1)!} \left( \sum_{i<j} [z_i|z_j\ket[w_i|w_j\ket\right)^{J+J'} .
\ee
The two sums over integers $J$ and $J'$ are independent, so we can simplify this expression for the constrained propagator into a single sum by letting $J+J' \rightarrow J$. This allows us  to arrive at a more compact form of the constrained propagator, given by
\be
  P_\rho(z_i;w_i)  = \sum_{J} F_\rho(J) \frac{\left( \sum_{i<j} [z_i|z_j\ket[w_i|w_j\ket\right)^{J}}{J!(J+1)!} \label{eq:PH},
\ee
where we have recognized that the numerical factor in front of the spinors is actually the power series expansion of the hypergeometric function
\be
F_\rho(J):=  {}_2F_1(-J-1,-J;2;\rho^4) = \sum_{J'=0}^{J} \frac{J!(J+1)! \rho^{4J'}}{(J-J')!(J-J'+1)!J'!(J'+1)!}.
\ee

Notice that the constrained Spin(4) propagator is just an SU(2) projector with non-trivial weights (greater than 1) for each term that depend on the Barbero-Immirzi parameter. In general, this hypergeometric function is a complicated function of $\rho$, but let us look at two interesting limiting cases. For $\rho =0$, which corresponds to $\gamma \rightarrow 1$, we have
\be
  {}_2F_1(-J-1,-J;2;0)=1,
\ee
so we end up with pure SU(2) BF theory. This is obvious, as $\rho = 0$ forces all the left spinors to be 0. Another limit often considered is $\rho = 1$, which in this construction surprisingly corresponds to both of the limits $\gamma \rightarrow 0$ and $\gamma \rightarrow\infty$. In this limit we get also a relatively simple expression
\be
  {}_2F_1(-J-1,-J;2;1) = \frac{(2J+2)!}{(J+2)!(J+1)!}.
\ee
This limit does not have an obvious interpretation apart from its simplicity.

We can now define the partition function for the Spin Foam model made up from these constrained propagators. Since the propagators are just BF projectors with non-trivial weight, we can write the partition function on a 2-complex $\Delta^\ast$ as
\be
\mathcal{Z}^{\Delta^\ast}_{G}=\sum_{j_f} \prod_{f \in \Delta^\ast}\mathcal{A}_f(j_f) \int \left\{\prod_{all}\rd\mu_\rho(z)\rd\mu_\rho(w)\right\}\sum_{k^e_{ff'}\in K_j}\prod_e  P_\rho^{k^e_{ij}}(z^e_i;w^e_i),
\ee
where $\mathcal{A}_f(j_f)$ is a face weight, the set $K_j$ was defined previously in Eq. (\ref{kj}) to be the set of integers $k_{ij}$ satisfying $\sum_{i\neq j}k_{ij}=2j_i$ and contraction of spinors according to the 2-complex $\Delta^\ast$ on different edges is implied. The constrained propagator at fixed spins is given by
\be
P_\rho^{k^e_{ij}}(z^e_i;w^e_i) :=\frac{ F_\rho(J_e)}{(J_e+1)!}\prod_{i<j}\frac{([z^e_i|z^e_j\ket[w^e_i|w^e_j\ket)^{k^e_{ij}}}{k^e_{ij}!} .
\ee

Each constrained propagator comes with an orientation, with spinors $z$ incoming into the box and spinors $w$ outgoing in this paper's convention. A change of this edge orientation results in overall minus sign for the amplitude. Additionally we also put an orientation on each strand, which dictates how spinors on different propagators are contracted. An example is shown in Fig. \ref{fig:contraction}. When we glue 4-simplices, we have two propagators contracted on the dual edge along which they are glued. 

\begin{figure}[h]
	\centering
		\includegraphics[width=0.5\textwidth]{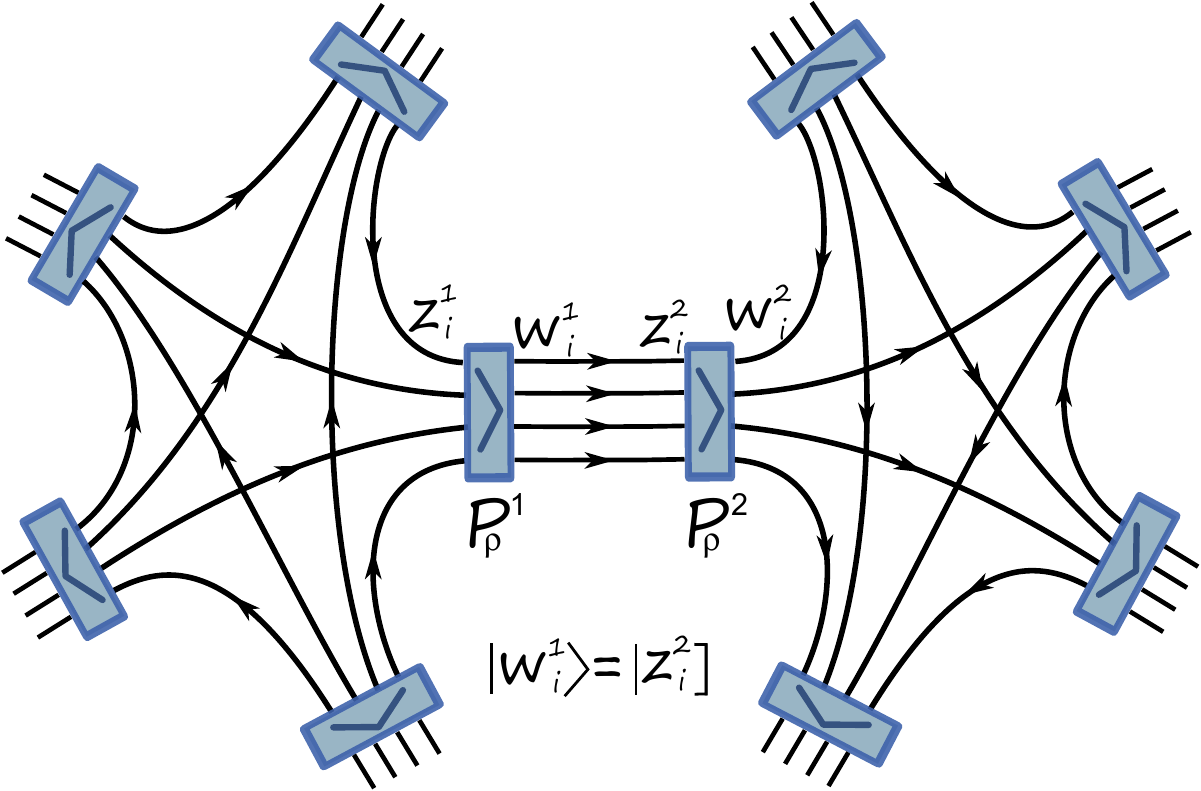}
	\caption{Graph for the amplitude of contraction of two 4-simplices. Propagators $P_\rho^1$ and $P_\rho^2$ belong to the same edge but two different 4-simplices. The spinors belonging on the same strand but belonging to different propagators are contracted according to the strand orientation. For example, spinors $w^1_i = \check{z}^2_i$.}
		\label{fig:contraction}
\end{figure}

It is interesting to note here that, unlike in the usual Spin Foam models, this definition in terms of propagators does not necessarily constrain the partition function to be a product of vertex amplitudes, thus allowing for more general non-geometrical structures.

\subsection{The Homogeneity Map}
In this section we will introduce a very useful tool that will allow us to make calculations of Pachner moves more tractable. Notice that the propagator $P_\rho$ is a polynomial obtained from  products of monomials $ [z_i|z_j\ket[w_i|w_j\ket^{k_{ij}}$ which possess a degree of homogeneity of $4 k_{ij}$. 
These products of monomials of different homogeneity degrees are orthogonal in the Bargmann-Fock space. The homogeneity property allows us to always separate out and track  terms of given homogeneity in a power series expansion. This means that we can perform transformations term by term in the series expansion of the propagator and independently integrate each term.

Let us hence define a more general propagator $G_\tau$ that can be exponentiated
\be
  G_\tau(z_i;w_i)  = \sum_{J} \tau^J \frac{\left( \sum_{i<j} [z_i|z_j\ket[w_i|w_j\ket\right)^{J}}{J!} = e^{\tau \sum_{i<j}[z_i|z_j\ket [w_i|w_j\ket  }
\ee
and denote it graphically by
\be 
G_\tau(z_i;w_i) = 
\begin{tikzpicture}[baseline=0,scale=0.45]
  \node at (-4,1.5) {$[z_1|$}; \draw (-3,1.5) -- (-1,1.5);
  \node at (-4,0.5) {$[z_2|$}; \draw (-3,0.5) -- (-1,0.5);
  \node at (-4,-0.5) {$[z_3|$}; \draw (-3,-0.5) -- (-1,-0.5);
  \node at (-4,-1.5) {$[z_4|$}; \draw (-3,-1.5) -- (-1,-1.5);
  \draw (-1,-2) --(-1,2) -- (1,2) -- (1,-2) -- (-1,-2); \node at (0,0) {$\tau$};
  \draw (1,1.5) -- (3,1.5); \node at (4,1.5) {$|w_1\ket$}; 
  \draw (1,0.5) -- (3,0.5); \node at (4,0.5) {$|w_2\ket$}; 
  \draw (1,-0.5) -- (3,-0.5); \node at (4,-0.5) {$|w_3\ket$}; 
  \draw (1,-1.5) -- (3,-1.5); \node at (4,-1.5) {$|w_4\ket$}; 
\end{tikzpicture}.
\ee
We can see that $\tau^J$ tracks homogeneity of the polynomial in spinors. If we transform each of these $\tau^J$ into a function of $J$, the integrals of the polynomials stay the same.
In this way we can perform complicated calculations with $G_\tau$ and in the end we can use the following map, defined by a functional $H_f$ mapping $G_\tau$ to the desired function $f$:
\begin{align}
H_\rho: G_\tau \rightarrow P_\rho &  \qquad \text{with} \qquad H_\rho:  \tau^J \rightarrow \frac{F_\rho(J)}{(J+1)!} \qquad \text{(Simplicity Constraints)} \label{eqn_SC_map} \\
H_P: G_\tau \rightarrow P & \qquad \text{with} \qquad	H_P: \tau^J \rightarrow \frac{1}{(J+1)!}  \qquad \text{(BF Theory)} \label{eqn_BF_map}
\end{align}
in order to recover the propagators of the BF theory  or the one of the gravity model with simplicity constraints imposed. Not that $P_0=P$ so the BF model is included in our more general description.  We are of course not limited to only these choices and could in principle study a much wider class of spin foam models built by non-trivial propagators.

By considering how BF projectors compose in Eq. (\ref{eq:PPisP}), it is quite easy to find the homogeneity map for composing the propagator $P_\rho$ $n$ times: $P_\rho \circ \cdots \circ P_\rho$. To do this, we just realize that if one  reintroduces back the factor of $1/(J+1)!$ into the definition of $G_\tau$, it then defines just the BF projector $P$ with the spinors $z$ rescaled to $\sqrt{\tau}z$. The homogeneity map for the composition is therefore given by\footnote{Note that the factor of $(1+\rho^2)^{2J}$ in (\ref{eqn_SC_map}) comes from the fact that the measure has changed under the simplicity constraints to $\rd\mu_\rho(z) = (1+\rho^2)^2\pi^{-2}e^{-(1+\rho^2)\bra z|z\ket}$.  Hence every contraction produces a factor $(1+\rho^2)^{-2j}$ where $j$ is the representation of the line.  There is one such contraction for each $j$ where $J = \sum j$ for each $\tau$.}
\be
  \tau^J \rightarrow \frac{F_\rho(J)^n}{(J+1)!(1+\rho^2)^{2(n-1)J}} \qquad \text{for} \qquad  G_\tau \rightarrow P_\rho^n \qquad \text{($n$ Propagators)} .
\ee

For the purpose of calculating Pachner moves, we will need to consider contracted loops of spinors. In BF theory, such a loop should correspond to an SU(2) delta function. Using the spinorial language however, we get
\be
\raisebox{-7.5mm}{\includegraphics[keepaspectratio = true, scale = 0.5] {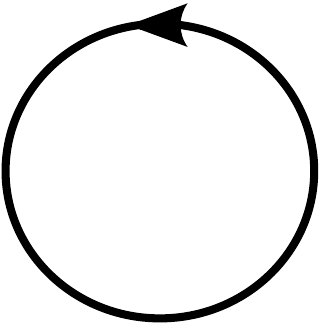}} \ \  = \ \ \int \rd \mu(z) e^{\bra z|z\ket} = \sum_j \int \rd \mu (z) \frac{\bra z|z\ket^{2j}}{(2j)!}  =\sum_j  \chi^j(\id) = \sum_j (2j+1),
\ee
whereas a delta function is $\delta_{SU(2)}(\id) = \sum_j (2j+1)^2$. One way of going around this is to change measure of integration for this loop to $\rd\tilde{\mu}(z) = (\bra z|z\ket -1)\rd\mu (z)$, as was suggested in \cite{Dupuis:2011dh}. This provides the additional factor of $(2j+1)$. An alternative way is to follow in the spirit of the homogeneity map and introduce a $\tau '$ that tracks the homogeneity in this loop. For clarity, we add a symbol for this face weight into the graph:
\be
\raisebox{-8mm}{\includegraphics[keepaspectratio = true, scale = 0.5] {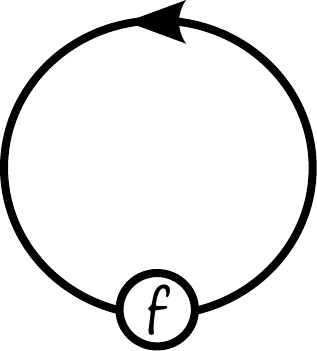}} \ \  = \ \ \int \rd \mu(z) e^{\tau '\bra z|z\ket} = \sum_j \tau '^{2j} (2j+1).
\ee
The replacement of $\tau '^{2j}\rightarrow (2j+1)$ now defines a homogeneity map for a BF loop. Of course, we now do not have to restrict ourselves to this simple face weight and can choose an arbitrary function of spin.

The homogeneity map we have developed in this section will be very useful in computing the 4-dimensional Pachner moves. In later sections we will define additional homogeneity maps as we go on, to simplify the calculations.

\section{Pachner moves in 3d quantum gravity}\label{sec:3dPachner}
In this section we review the notion of Pachner moves and their calculation in 3d SU(2) BF theory, to set up the stage for comparison to the 4-dimensional models. A crucial tool that allows the calculation is a so called \emph{loop identity}, which we will derive in the holomorphic representation.

\subsection{Definition of Pachner Moves}

To show that a theory defined on a triangulated manifold is topologically invariant, we need a way to relate different triangulations. This is provided by the Pachner moves, which are local replacements of a set of connected simplices by another set of connected simplices.

\begin{theorem}
Any simplicial piecewise linear manifold $\mathcal{M}$ can be transformed into any other simplicial piecewise linear manifold $\mathcal{M}'$ homeomorphic to $\mathcal{M}$ by a finite sequence of Pachner moves.

For proof, see \cite{Pachner}.
\end{theorem}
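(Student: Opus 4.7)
The plan is to reduce the theorem to two classical facts: Alexander's stellar subdivision theorem, and a combinatorial lemma identifying stellar subdivisions with finite sequences of bistellar moves. Concretely, any two simplicial complexes triangulating PL-homeomorphic manifolds admit a common stellar subdivision, and any stellar subdivision can in turn be realized by a finite sequence of Pachner moves. Chaining these together, I can pass from a triangulation of $\mathcal{M}$ up to a common refinement and then back down to any PL-homeomorphic triangulation of $\mathcal{M}'$ using Pachner moves alone.

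First I would fix the combinatorial setup. A Pachner $n$--$(d+2-n)$ move in dimension $d$ replaces $n$ adjacent $d$-simplices forming a subcomplex isomorphic to $n$ consecutive facets of the boundary $\partial\Delta^{d+1}$ of a $(d+1)$-simplex by the complementary $(d+2-n)$ facets. Equivalently, such a move corresponds to attaching (or removing) a $(d+1)$-simplex along part of its boundary. This "cone" viewpoint is the mechanism that lets one traverse the space of triangulations without changing the PL homeomorphism type, and it also makes it clear that each Pachner move is itself a PL homeomorphism.

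Next I would invoke Alexander's theorem to produce, given two triangulations of $\mathcal{M}$, a common stellar subdivision $S$. The technical core of the proof is the lemma that a single stellar subdivision, defined by choosing an interior point of a simplex $\sigma$ and coning the star of $\sigma$ from it, can be implemented by finitely many Pachner moves. The argument proceeds by a double induction, on $\dim\sigma$ and on the combinatorial complexity of $\mathrm{lk}(\sigma)$, reducing the general case to the situation where the star of $\sigma$ is itself isomorphic to a subcomplex of $\partial\Delta^{d+1}$, which is exactly a Pachner move by definition. Inverse stellar moves (welds) are handled by running the same argument backwards, using the invertibility of Pachner moves.

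The hard part is this stellar-equals-bistellar lemma: translating a single local refinement into a controlled sequence of Pachner moves requires a careful shellability argument for the relevant links, together with commutation rules for moves whose supports overlap so that intermediate non-PL configurations never appear. This is the content of Pachner's original combinatorial analysis, and I would quote it as the black box that completes the reduction rather than reprove it here, referring the reader to \cite{Pachner} for the detailed shelling argument.
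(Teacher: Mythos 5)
The paper offers no argument of its own here---the ``proof'' is just the citation to \cite{Pachner}---so your proposal has to be judged against the standard literature proof, whose architecture (Alexander's stellar theory plus the stellar-equals-bistellar lemma, with the combinatorial core deferred to Pachner) you reproduce correctly in outline. There is, however, one genuine misstatement in your first reduction: Alexander's theorem does \emph{not} say that two triangulations of PL-homeomorphic manifolds admit a common stellar subdivision. What it says is that they are \emph{stellarly equivalent}, i.e.\ connected by a finite zig-zag of stellar subdivisions and inverse stellar subdivisions (welds). Whether two stellarly equivalent complexes always possess a common \emph{stellar} subdivision is a well-known open problem, so the ``pass up to a common refinement and then back down'' picture is not available as you state it; what is classical is only the existence of a common simplicial subdivision (essentially the definition of PL homeomorphism), which is not stellar and hence not directly usable in your scheme.

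The good news is that the gap is repairable entirely within your own outline: since you already explain how welds are handled by the invertibility of Pachner moves, you should simply replace the common-refinement step by Alexander's actual zig-zag statement and implement each stellar subdivision or weld in the finite sequence by finitely many bistellar moves via the lemma you quote from \cite{Pachner}; finiteness of the total sequence is then immediate. One further caveat worth recording: the reduction to bistellar moves alone is valid for closed manifolds, whereas for manifolds with boundary Pachner's result (note the title of the cited paper) requires elementary shellings and inverse shellings in addition, so if the theorem is read for manifolds with boundary your argument needs that supplementary ingredient. With these two corrections your proposal coincides with the standard proof that the paper's bare citation points to.
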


Pachner moves are constructed by adding (or removing) vertices, edges, triangles etc. to (from) the existing triangulation. They can be also obtained  in d dimensions by gluing a (d+1)-simplex onto the d-dimensional triangulation. There are several Pachner moves in each dimension and they correspond to changing a configuration of $n$ basic building blocks (d-simplices) into a configuration of $m$ building blocks - we call them $n$--$m$ Pachner moves. In two dimensions we hence have the moves 2--2, 1--3 moves and their reverse.
\begin{figure}[h]
	\centering
		\includegraphics[width=0.7\textwidth]{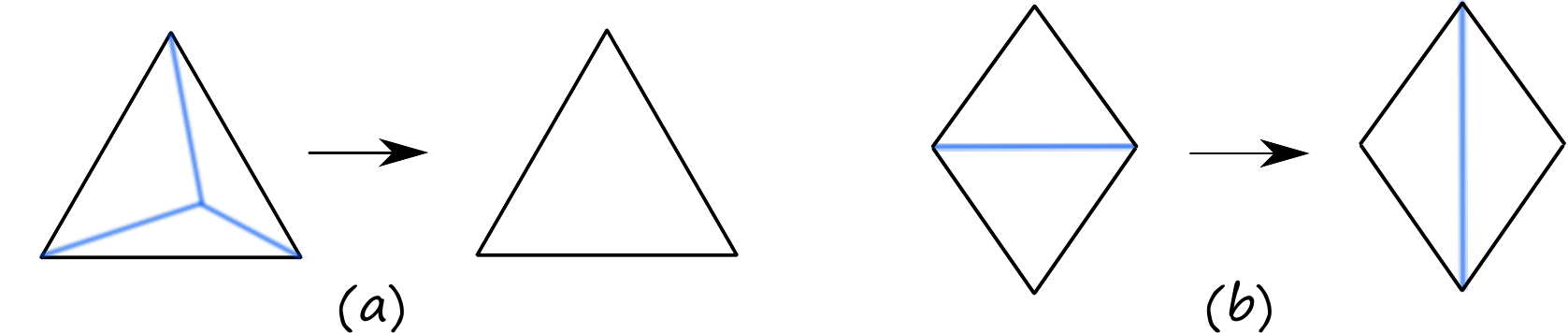}
	\caption{Two dimensional Pachner moves: a) 3--1  move, in which three triangles are merged into one by removing a vertex inside; b) 2--2 move, in which two triangles exchange the edge, along which they are glued.}
		\label{fig:2dmoves}
\end{figure}
 The 2--2 move corresponds to changing the edge along which two triangles are glued, while the 1--3 move corresponds to adding a vertex inside a triangle and connecting it to the other vertices by three edges, ariving in a configuration with three triangles. Fig. \ref{fig:2dmoves} shows the inverse.  In three dimensions we have 3--2, 4--1 moves and their reverse, see Fig. \ref{fig:3dmoves}.
\begin{figure}[h]
	\centering
		\includegraphics[width=1 \textwidth]{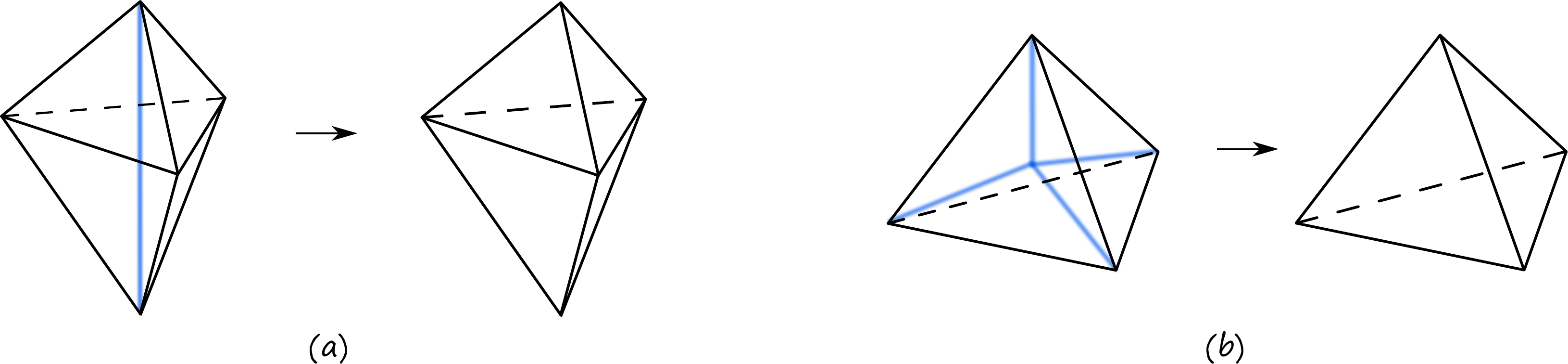}
	\caption{Three dimensional Pachner moves: a) 3--2  move, in which three tetrahedra are changed into two tetrahedra by removing a common edge; b) 4--1 move, in which four tetrahedra are combined into one by removing a common vertex.}
		\label{fig:3dmoves}
\end{figure}
 The 3--2 move corresponds to removing an edge, along which three tetrahedra were glued and changing it into a configuration of two tetrahedra. The 4--1 move is combining four tetrahedra into one tetrahedron through removing a common vertex.

\subsection{Fixing the gauge}\label{sec:gf}

3-dimensional gravity, as described by BF theory, has two important gauge symmetries -- internal rotational ``Lorentz'' $SU(2)$ gauge symmetry and the \emph{translational} symmetry \cite{Freidel:2002dw} following  from the Bianchi identity $\rd_\omega F = 0$. The action can be explicitly written as 
\be
S\left[e,\omega\right] = \int_\mathcal{M} \tr \left[e\wedge \left(\rd\omega + \omega\wedge\omega\right)\right].
\ee
The local Lorentz gauge transformations $\delta^L$  and the translational symmetry transformations $\delta^T$ are given by
\be
\begin{split}
\delta_X^L\omega &= \rd_\omega X, \ \ \ \ \delta_X^L e = \left[e,X\right] \\
\delta_X^T\omega &= 0, \ \ \ \ \ \ \ \  \delta_X^T e = \rd_\omega X,
\end{split}
\ee
where the parameter of transformations is an $\mathfrak{su}(2)$ Lie algebra element $X$. There is also obviously the diffeomorphism symmetry generated by a vector field $\xi_\mu$, but one can show \cite{Freidel:2002dw} that on-shell we have
\be
\delta^D_\xi = \delta^L_{\iota_\xi \omega}+\delta^T_{\iota_\xi e},
\ee
i.e. the three symmetries are related.

Let us now understand how to fix the ``Lorentz'' gauge on a spin network. Since the volume of the group SU(2) is finite, the gauge fixing  amounts to only a change of variables along a maximal tree. We  follow \cite{Freidel:2002xb} in defining the gauge fixing procedure.
Consider a graph $\Gamma$ with $E$ edges and $V$ vertices. Each edge is oriented so that it starts at a source vertex $s(e)$ and ends at target $t(e)$. Consider now a spin network function such that
\be
\psi^\Gamma(g_{e_1},\ldots , g_{e_E})=\psi^\Gamma(h^{-1}_{s(e_1)}g_{e_1}h_{t(e_1)},\ldots , h^{-1}_{s(e_E)}g_{e_E}h_{t(e_E)}) .\label{eq:gaugefixing}
\ee
Now choose a maximal tree $T$ in $\Gamma$, i.e. a collection of $V-1$ edges which passes through every vertex, without forming loops. Choose a vertex $A$ to be the root\footnote{One can show that the gauge fixing procedure is independent of this choice.} of the tree T and label $g_{vA}^T$ the product of group elements $g_{e_i}$ along T that connect vertex $v$ and $A$. Next we will use Eq. (\ref{eq:gaugefixing}) with $h_v = g_{vA}^T$, so that $\psi^\Gamma=\psi^\Gamma(G_1^T,\ldots ,G_E^T)$ with $G_e^T = g_{As(e)}^T g_e g_{t(e)A}^T$. 

Now, for any edge $e \in T$, there is a unique path along the tree connecting $A$ and $s(e)$ or $t(e)$. Let us choose this to be $t(e)$, since the other case works in the same way. It follows that $ g_{s(e) A}^T = g_e g_{t(e)A}^T$ and so $G_e^T = \id$ for $e \in T$. Hence the procedure for gauge fixing is to set all the group elements on the maximum tree to $\id$ and change all the other to $g_{e_i} = G_{e_i}^T$. Since $\int_{SU(2)} \rd g = 1$, ending up with empty integrations does not lead to any divergences. In the language of amplitudes written in terms of projectors, this corresponds to replacing the projectors $ P(z_i;w_i) = \int_{\text{SU}(2)} \rd g \,e^{\sum_i [z_i|g|w_i\ket}$ on the maximal tree by the trivial propagators $  \one(z_i;w_i) \equiv e^{\sum_i [z_i|w_i\ket}$. This procedure carries over to the 4-dimensional case trivially, since Spin(4) is just a product $SU(2)\times SU(2)$.

We will postpone the discussion of the translational symmetry to until after we have calculated the 4-1 Pachner move, as we will see that it is directly related to the divergence coming from that calculation. However, in 4-dimensional Spin Foam models the relation between divergences and translational symmetry is unknown -- we will discuss this in Section \ref{sec:coarse}.

\subsection{The Loop Identity}

The BF theory partition function is independent of the triangulation $\Delta$.  This can be shown by demonstrating its invariance (up to an overall factor) with respect to a finite set of coarse graining moves, constructed out of Pachner moves.  The Pachner moves can all be derived from one identity which we will call the loop identity. This identity follows from the coherent state representation of the SU(2) delta function
\be
  \delta(g) = \int \rd \tilde{\mu}(z) e^{\bra z | g | z \ket},
\ee
where $\rd \tilde{\mu}(z) = \rd \mu(z)(\bra z | z\ket - 1)$.  Therefore
\bea \label{eqn_loop_id}
  \int \rd \tilde{\mu}(z_n) P(z_1,...,z_n;w_1,...,\check{z}_n) 
  &=& \int \rd g e^{\sum_{i=1}^{n-1} [z_i|g|w_i\ket} \int \rd \tilde{\mu}(z_n) e^{[ z_n | g | z_n ]} \nn \\
  &=& \int \rd g e^{\sum_{i=1}^{n-1} [z_i|g|w_i\ket} \delta( g ) \nn \\ 
  &=& e^{\sum_{i=1}^{n-1} [z_i|w_i\ket} \nn \\
  &=& \one(z_1,...,z_{n-1};w_1,...,w_{n-1}),
\eea
which is represented graphically by
\be\label{eqn_loop_identity}
\raisebox{-17mm}{\includegraphics[keepaspectratio = true, scale = 1] {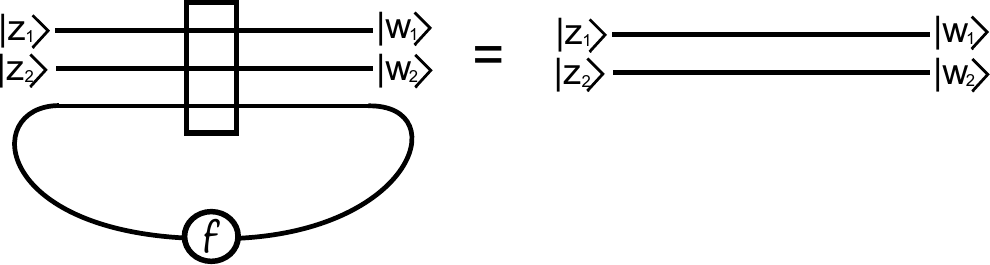}}
\ee

%

Since each closed loop of the BF partition function (\ref{eqn_BF_intertwiners}) has a factor of $2j_f+1$ we will use the convention that two lines are contracted with $\rd \mu(z)$ as in (\ref{eqn_contraction}), however, the contraction of a line with itself, i.e. a loop, is contracted with the measure $\rd \tilde{\mu}(z)$ as in (\ref{eqn_loop_identity}). An alternative way would be to use the homogeneity map to keep track of this face weight.

\subsection{Alternative method}\label{sec:BFloopagain}
The expression for the loop identity we have just derived, while compact, does not generalize straightforwardly to the case of 4-dimensional QG models with simplicity constraints (due to the presence of the group integrals). We will thus redo the above calculation with the projector written in terms of only spinors without group integration.

We expect that the loop identity (\ref{eqn_loop_id}) applied to the projector (\ref{eqn_res_id_UN}) implies that
\be \label{eqn_loop_integrated}
 \sum_{j_n}(2j_n+1) \int \rd \mu(z_n)  \frac{\left(\sum_{i<j}[z_i|z_j\ket[w_i|w_j\ket\right)^{J}}{ J!(J+1)!} =  \prod_{i=1}^{n-1} \frac{[z_i|w_i\ket^{2j_i}}{(2j_i)!} ,
\ee
where the integration is performed with $w_{n} = \check{z}_{n}$. Below we will directly show this. Let us perform the integration on the LHS explicitly by using the homogeneity map to keep track of the $1/(J+1)!$ and the face weight $(2j_n+1)$ and then summing over $j_i$. Namely, let us use the homogeneity maps $\tau^J\rightarrow 1/(J+1)!$ and $\tau '^{2j_n}\rightarrow (2j_n+1)$. The result is then
\be \label{eqn_gen_func}
  \int \rd \mu(z_n)  \exp\left({\tau \sum_{i<j<n}[z_i|z_j\ket [w_i|w_j\ket - \tau \tau ' \sum_{i<n} \bra z_n|z_i\ket [w_i|z_n\ket  }\right)
= \frac{e^{\tau \sum_{i<j<n}[z_i|z_j\ket [w_i|w_j\ket}}{\det \left(\one - \tau\tau ' \sum_{i<n}|w_i\ket [z_i|\right)}.
\ee

To continue, we have to be able to evaluate the determinant in the denominator. This is thankfully not too difficult, as the matrix in question is just a $2\times 2$ matrix made up by spinors. Indeed, the following lemma comes in handy

\begin{lemma} 
\label{eqn_det_lemma}
Let $M = \one - \sum_i C_i |A_i\ket[B_i|$ then
$$
  \det M = 1 - \sum_i C_i [B_i|A_i\ket + \sum_{i<j} C_i C_j [A_i|A_j\ket[B_i|B_j\ket .
$$
The proof is given in Appendix \ref{app_det_lemma}.
\end{lemma}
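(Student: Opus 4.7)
The plan is to reduce the determinant of $M$ to a two-dimensional computation. Since each rank-one operator $|A_i\ket[B_i|$ acts on spinor space $\C^2$, the matrix $M$ is $2\times 2$, and for any $2\times 2$ matrix $N$ one has the characteristic-polynomial identity
$$\det(\one + N)\;=\;1 + \tr N + \det N.$$
I would apply this with $N := -\sum_i C_i |A_i\ket[B_i|$ and compute $\tr N$ and $\det N$ separately.

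The trace piece is immediate from $\tr(|A\ket[B|) = [B|A\ket$, giving $\tr N = -\sum_i C_i [B_i|A_i\ket$, which already reproduces the linear term of the lemma. For the determinant piece, I would write $N = V W^{\!T}$, where $V$ is the $2\times k$ matrix whose $i$-th column is $-C_i|A_i\ket$ and $W^{\!T}$ is the $k\times 2$ matrix whose $i$-th row is $[B_i|$, and apply the Cauchy--Binet formula in dimension two:
$$\det N\;=\;\det(VW^{\!T})\;=\;\sum_{i<j}\det(v_i,v_j)\,\det(w_i,w_j),$$
where $v_i, w_i$ denote the corresponding columns of $V$ and $W$.

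The central identification is that the $2\times 2$ determinant of two 2-spinors is exactly the skew holomorphic bracket. From the explicit components $|z\ket = (z_0,z_1)^{\!T}$ and $[z| = (-z_1,z_0)$ recorded in the paper, one reads off $\det(|A_i\ket,|A_j\ket) = [A_i|A_j\ket$ and, after transposing the rows of $W^{\!T}$ back to columns of $W$, $\det(w_i,w_j) = [B_i|B_j\ket$. The two minus signs $-C_i,-C_j$ extracted from the columns of $V$ square to $+C_iC_j$, producing $\det N = \sum_{i<j}C_iC_j [A_i|A_j\ket[B_i|B_j\ket$. Combining the trace and determinant pieces then yields the stated formula.

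There is no serious obstacle here: the only non-trivial input is the determinant/bracket dictionary, which is a one-line check from the definition of $[z|$. A purely algebraic alternative would be to use $\det N = \tfrac12\bigl((\tr N)^2 - \tr(N^2)\bigr)$ and simplify the off-diagonal double sum via the 2-spinor Schouten identity $[a|b\ket[c|d\ket - [a|d\ket[c|b\ket = [a|c\ket[b|d\ket$; either route avoids any lengthy calculation and makes manifest that the formula terminates at quadratic order precisely because the underlying space is two-dimensional.
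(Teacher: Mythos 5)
Your proof is correct, and its main route differs from the paper's. The paper works directly with $M$ via the two-dimensional trace identity $2\det M = \tr(M)^2 - \tr(M^2)$: it expands both traces into double sums over $i,j$, subtracts, and collapses the antisymmetrized quadratic piece with the two-spinor (Schouten/epsilon) identity $[A_i|B_i\ket[B_j|A_j\ket - [A_i|B_j\ket[B_i|A_j\ket = [A_i|A_j\ket[B_j|B_i\ket$ --- which is precisely the ``purely algebraic alternative'' you sketch in your closing remark, so that variant of your argument essentially reproduces the paper's proof. Your primary route instead splits $\det(\one+N) = 1 + \tr N + \det N$ and evaluates $\det N$ by Cauchy--Binet on the rank factorization $N = VW^{T}$, using the dictionary $\det(|A_i\ket,|A_j\ket) = [A_i|A_j\ket$ (and its counterpart $\det(w_i,w_j)=[B_i|B_j\ket$ for the holomorphic rows $[B_i|=(-B_{i,1},B_{i,0})$, whose sign you handle correctly). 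What each buys: the paper's computation is fully self-contained, needing only the trace identity and one epsilon manipulation, whereas your Cauchy--Binet argument is more structural --- it exhibits the brackets as $2\times 2$ minors and makes manifest, rather than incidental, why the expansion terminates at quadratic order in the $C_i$ (there are no $3\times 3$ minors of a $2\times k$ matrix), which also indicates how the formula would extend to rank-one sums on higher-dimensional spaces.
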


Using this result, we can immediately find the determinant in (\ref{eqn_gen_func}). In our case, all $C_i = \tau \tau '$, hence we get that the loop identity for the homogenized projector $P_\tau$ becomes
\be
  \frac{e^{\sum_{1\leq i<j < n} \tau \tau ' [z_i|z_j\ket[w_i|w_j\ket }}{1- \sum_{i\neq n} \tau \tau ' [z_i|w_i\ket + \sum_{1\leq i<j < n} \tau^2\tau '^2 [z_{i}|z_{j}\ket [w_i|w_j\ket}.
\ee
Now we can expand both the numerator and the denominator in a power series and then use the homogeneity map to restore the $1/(J+1)!$ terms and the face weight. This allows us to get the loop identity for the projector (\ref{eqn_res_id_UN})
\be
\raisebox{-17mm}{\includegraphics[keepaspectratio = true, scale = 1] {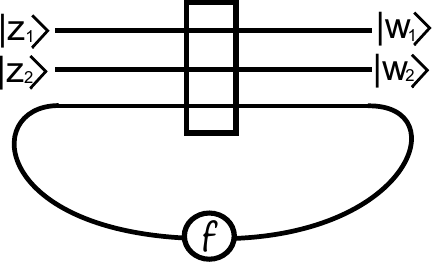}} \ \  = \ \ \sum_{J,J'}C_{BF}(J, J') \left( \sum_{i<n} [z_i|w_i\ket\right)^J\left(\sum_{i<j<n}[z_i|z_j\ket [w_i|w_j\ket\right)^{J'},
\ee
where we have defined the coefficient $C_{BF}(J, J')$ to be given by
\be
C_{BF}(J, J') =\sum_K (-1)^{J'-K}\frac{(J+J'-K)!(J+2J'-2K+1)}{J!(J'-K)!K!(J+2J'-K+1)!}.
\ee
At first glance, this is a worrisome result, as we do not only get the trivial projection (raised to power $J$), but also an unwanted \emph{mixing} term (raised to power $J'$). Notice though, that we have an additional free sum over the variable $K$ in the definition of the coefficient. We can actually explicitly evaluate this sum over $K$ to find the expected result
\be
C_{BF}(J,J') = \frac{\delta_{J', 0}}{J!}.
\ee
Hence only the $J'=0$ term is non-vanishing, so the mixing terms always drop out in BF theory. We thus recover the result (\ref{eqn_loop_integrated}) that we set out to prove. This calculation readily is generalized in the following section to the case with simplicity constraints. The major difference in this case is  the lack of the cancellation of the mixing terms.

\subsection{Invariance under Pachner moves and symmetry}

We will now proceed to show the invariance of the 3-dimensional SU(2) BF theory under 3--2 and 4--1 Pachner moves using the language of spinors. In the case of 4-1 move we  find a divergence directly related to the translational symmetry.

\subsubsection{3--2 move}
As can be seen in the Fig. \ref{fig:3dmoves} \! a), the configuration of three tetrahedra in the 3--2 move is glued along one edge. This corresponds to a loop of a single strand in the cable diagram, see Fig.\ref{fig:3-2move}.

\begin{figure}[h]
	\centering
		\includegraphics[width=0.9\textwidth]{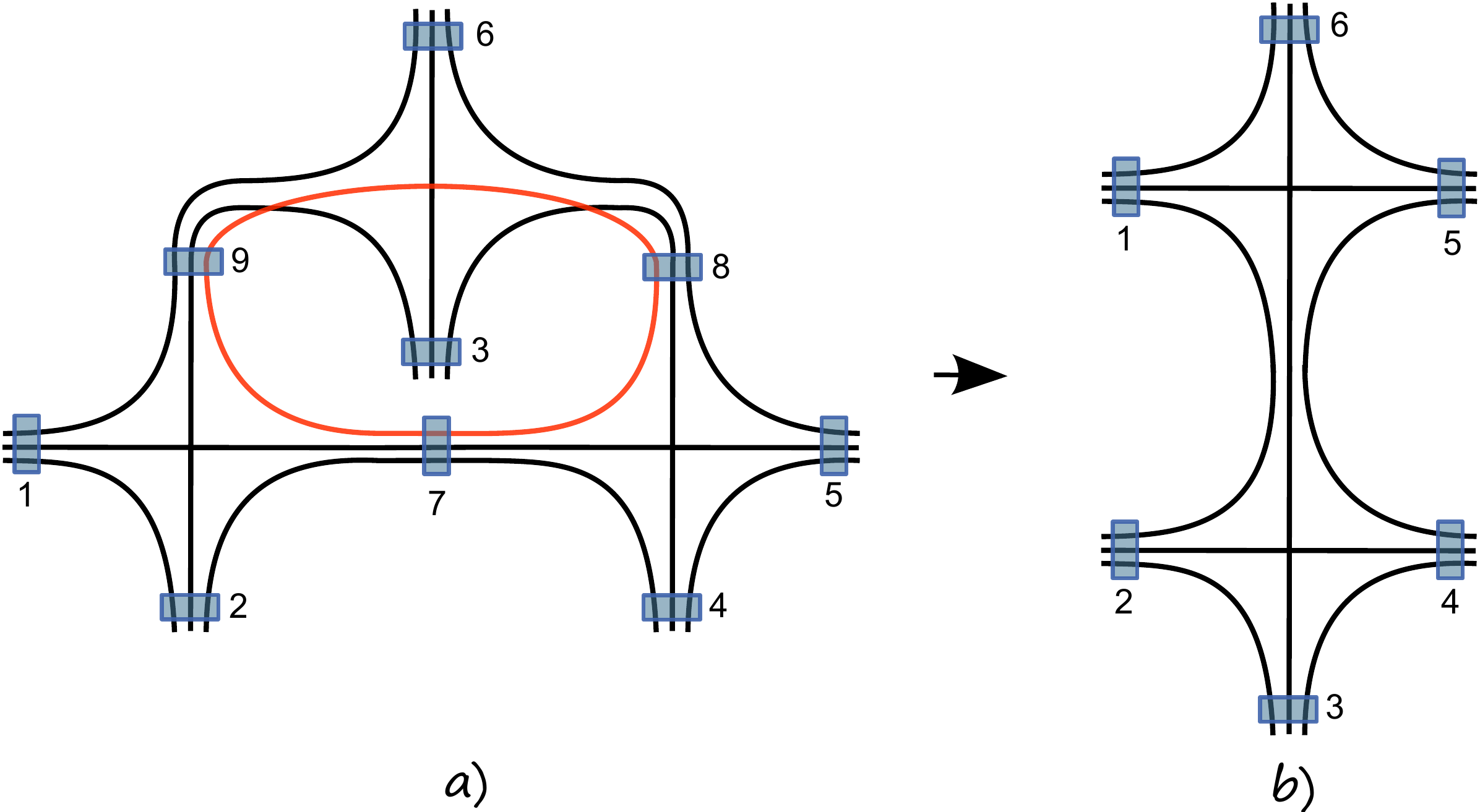}
	\caption{a) Cable diagram for the 3-2 move. The internal loop is colored. b) After gauge-fixing projectors 7 and 9 and performing loop identity on projector 8, the diagram reduces to gluing of two tetrahedral graphs.}
		\label{fig:3-2move}
\end{figure}

 By choosing a maximum tree (with a root at the projector 1) in the diagram, we can gauge fix the projectors number 7 and 9. This allows us to apply the loop identity (\ref{eqn_loop_id}) to integrate out the strand number 10 by performing the group integral in projector number 8. We can identify now that the resulting cable diagram is exactly that of the two tetrahedra glued together, see Fig. \ref{fig:3dmoves} \! b). Hence it is immediate that the SU(2) BF theory is invariant under the 3--2 Pachner move, as the two configurations are gauge equivalent.

\subsubsection{4--1 move}
The configuration of four tetrahedra in the 4--1 move shares in total four edges, which corresponds to four loops in a cable diagram, see Fig. \ref{fig:41move} a).

\begin{figure}[h]
	\centering
		\includegraphics[width=0.9\textwidth]{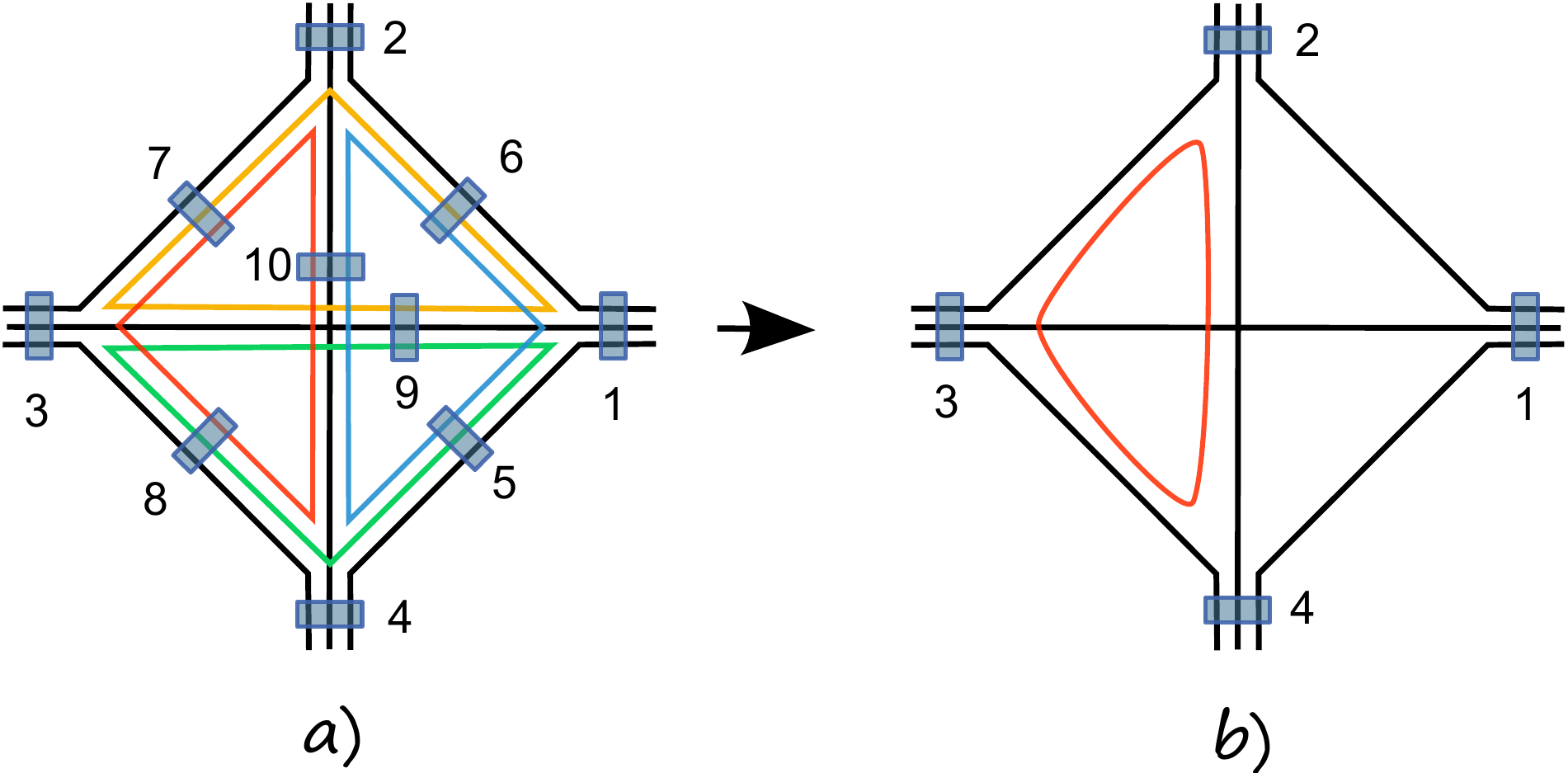}
	\caption{a) Cable diagram for the 4--1 move. The 4 different loops are colored. b) After applying three loop identities we are left with a tetrahedral cable graph with an insertion of one loop.}
		\label{fig:41move}
\end{figure}

We choose a maximum tree with a root at vertex 1, which allows us to gauge fix the projectors number 5, 6 and 9. We can now  apply the loop identity (\ref{eqn_loop_id}) to the projector 10 to remove the blue loop. Similarly we can apply the loop identities to projectors 7 and 8 to remove the yellow and green loops respectively. This leaves us with the last loop and no projectors left inside the graph, as in Fig. \ref{fig:41move} b). This final loop corresponds to the following integral
\be
\begin{split}
\int \rd \tilde{\mu}(z) e^{\bra z|z\ket} &= \sum_j \int \rd \tilde{\mu}(z) \frac{\bra z|z\ket^{2j}}{(2j)!} = \sum_j (2j+1) \int \rd {\mu}(z) \frac{\bra z|z\ket^{2j}}{(2j)!}\\ &=\sum_j (2j+1) \chi^j(\id)=\delta_{SU(2)}(\id).
\end{split}
\ee
Hence, we have shown that the BF partition function is invariant under the 4--1 move up to an overall divergent factor. 
The divergence we obtain  in SU(2) BF theory is exactly a SU(2) delta function $\delta_{SU(2)}(\id) = \sum_j (2j+1)^2$.  In \cite{Freidel:2002dw} it was shown that this is the same as the volume of the $\mathfrak{su}(2)$ Lie algebra. If we put on a cut-off $\Lambda$ on spins, then the divergence scales as $\sum_j (2j+1)^2 \sim \Lambda^3$. Since in 3d spin is proportional to length, we get a divergence that correponds to the translation symmetry of placing the extra vertex inside the tetrahedron. A correct Fadeev-Popov procedure \cite{Freidel:2002dw} divides the amplitude by exactly this divergence, so the Ponzano-Regge model is invariant after gauge fixing under both the 3--2 and 4--1 Pachner moves. This gauge fixing procedure was subsequently refined in \cite{FreidelL2, BarrettN, Matteo1, Matteo2} to lead to a complete  definition of 3 dimensional manifold invariant.  

\section{Pachner moves in 4d Riemannian holomorphic Spin Foam Model}\label{sec:4dPachner}
In this section we  obtain the main results of this paper. To calculate the 4d Pachner moves, we follow the strategy used in the BF case and calculate a constrained version of the loop identity, which together with gauge fixing  makes this involved calculation manageable. We find, unsurprisingly, that the models in question are not invariant under Pachner moves. The difference from the BF case is the presence of mixing of strands that exchanges the trivial propagators and delta functions with more complicated operators. We discuss the possible meaning of this mixing of srands as an insertion of an operator.

\subsection{Toy Loop}

To capture the essence of the computation without too much complexity, let us start with repeating the calculation of the BF loop identity, but with the constrained propagator $P_\rho(z_i;w_i)$ (\ref{eq:PH}) rather than the SU(2) projector. We will follow the treatment of the loop identity from Section \ref{sec:BFloopagain}.  We will thus find the loop identity for the generating functional 
\be
  G_\tau(z_i;w_i)  = e^{\tau \sum_{i<j}[z_i|z_j\ket [w_i|w_j\ket  } \nonumber
\ee
and at the end of the calculation use the homogeneity map to get the loop identity for $P_\rho(z_i;w_i)$ by changing $\tau^J\rightarrow  F_\rho(J)/(J+1)!$. We also want to be able to insert a face weight, which is a function of the spin we will sum over. This face weight could be \emph{a priori} arbitrary, but for the sake of definiteness, let us choose it to be $(2j+1)^\eta$ with $\eta\in\mathbb{R}$ being a free parameter, which keeps track of divergence properties of the Spin Foam model. The method we use allows us of course to modify the face weight to an arbitrary function of spin. To insert the face weight, we follow the calculation in BF theory and rescale the spinor in the loop by an additional factor of $\tau '$, which will keep track of homogeneity of that specific spinor. At the end of the calculation we can restore the face weight by replacing $ \tau'^{2j} \rightarrow (2j+1)^\eta $  in the series expansion. Let us now calculate the constrained loop identity:
\be \label{eq:constrainedloopidentity}
 \int \rd \mu_\rho (z_4)  e^{\tau \sum_{i<j<4}[z_i|z_j\ket [w_i|w_j\ket - \tau \tau ' \sum_{i<4} \bra z_4|z_i\ket [w_i|z_4\ket  }
= \frac{e^{\tau \sum_{i<j<4}[z_i|z_j\ket [w_i|w_j\ket}}{\det \left(\one - \frac{\tau\tau '}{1+\rho^2} \sum_{i<4}|w_i\ket [z_i|\right)}.
\ee
Unsurprisingly, we get nearly the same result as in Section \ref{sec:BFloopagain}, the difference being the additional factor of $1/(1+\rho^2)$, which arises from the modified integration measure $\rd\mu_\rho(z)$. Of course, the $\tau$ also carries a hypergeometric function of $\rho$. We can again use the lemma \ref{eqn_det_lemma} to evaluate the determinant. We arrive thus at the result
\bea
\int  \rd \mu_\rho (z_4) G_\tau (z_1,\ldots , \tau ' z_4; w_1,\ldots , \check{z_4}) = \frac{e^{\tau \sum_{i<j<4}[z_i|z_j\ket [w_i|w_j\ket}}{1 - \frac{\tau\tau '}{1+\rho^2} \sum_{i<4} [z_i|w_i\ket  + \sum_{i<j<4} \frac{\tau^2\tau '^2}{(1+\rho^2)^2} [z_i|z_j\ket [w_i|w_j\ket} \nonumber\\
= e^{\tau \sum_{i<j<4}[z_i|z_j\ket [w_i|w_j\ket}\! \sum_{N,M}\!\! \frac{(N+M)!}{N!M!}\left(\frac{\tau\tau '}{1+\rho^2} \right)^{N+2M}\!\! \left( \sum_{i<4} [z_i|w_i\ket\right)^{\!\!N} \!\!\left(-\sum_{i<j<4}[z_i|z_j\ket [w_i|w_j\ket\right)^{\!\!M}\!\!.
\label{cli}
\eea
We can now expand the exponential, combine the mixing terms and use the homogeneity map to reintroduce the face weight and the hypergeometric function of $\rho$. We hence find that the constrained loop identity for $P_\rho(z_i;w_i)$  is given by
\be
\raisebox{-17mm}{\includegraphics[keepaspectratio = true, scale = 1] {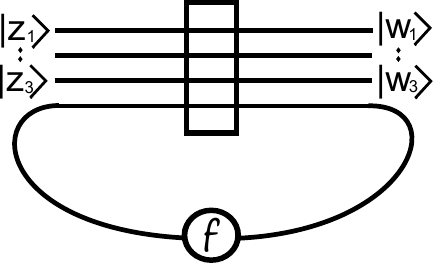}} \ \  = \ \  \sum_{J,J'}C_\rho(J,J')\left( \sum_{i<n} [z_i|w_i\ket\right)^J\left(\sum_{i<j<n}[z_i|z_j\ket [w_i|w_j\ket\right)^{J'},
\ee
with the coefficient $C_\rho(J,J')$ given by
\be
C_\rho(J,J')= \sum_K (-1)^{J'-K}\frac{(J\!+\!J'\!-\!K)!(J\!+\!2J'\!-\!2K\!+\!1)^\eta}{J!(J'\!-\!K)!K!(J\!+\!2J'\!-\!K\!+\!1)!}\frac{F_\rho(J+2J'-2K)}{(1+\rho^2)^{J+2J'-2K}},
\ee
where  $F_\rho(J) =  {}_2F_1(-J-1,-J;2;\rho^4)$. We have hence arrived at an expression very similar to the one in BF theory -- we again got the trivial propagation terms  $\sum_{i<4}  [z_i|w_i\ket$ together with additional mixing terms like $\sum_{i<j<4}[z_i|z_j\ket [w_i|w_j\ket$. Unlike in the BF loop identity however, there is no miraculous cancellation of the $J'\neq 0$ terms, unless we choose $\rho = 0$ and $\eta = 1$, i.e. we reduce this to SU(2) BF theory. Hence the way in which simplicity constraints break the topological symmetry is by introducing additional mixing terms in the loop identity. We can represent this graphically as

\be
\raisebox{-17mm}{\includegraphics[keepaspectratio = true, scale = 1] {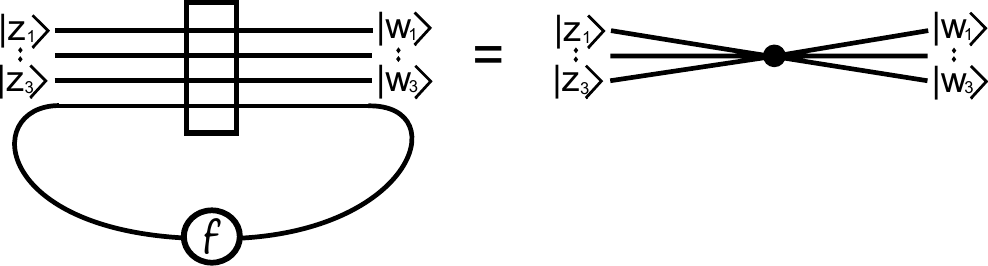}}
\ee

\subsection{The Constrained Loop Identity}\label{sec:constrainedloolidentity}
We are now going to see that  the loop identity we need for Pachner moves is somewhat different with the one we considered in the previous section. When we glue together 4-simplices, we need to glue them along their boundaries, necessitating the gluing of two propagators, i.e. we should work with $P_\rho\circ P_\rho$, rather than a single $P_\rho$. 
The reason for this being that in our model the propagator $P_\rho$ is inserted around each vertex and we get the composition of them along an edge. Since $P_\rho$ is not a projector unless $\rho=0$ we have $P_\rho\circ P_\rho\neq P_\rho$ .
Additionally, the loops arising in all the Pachner moves always are composed of three groups of propagatos $P_\rho\circ P_\rho$, rather than the single one we have considered. Fortunately, two of these can be always gauge fixed by a proper choice of a maximal tree, so that we have to consider the loop identity shown in Fig.\ref{fig:loopidentity}. 
\begin{figure}[h]
	\centering
		\includegraphics[width=0.65\textwidth]{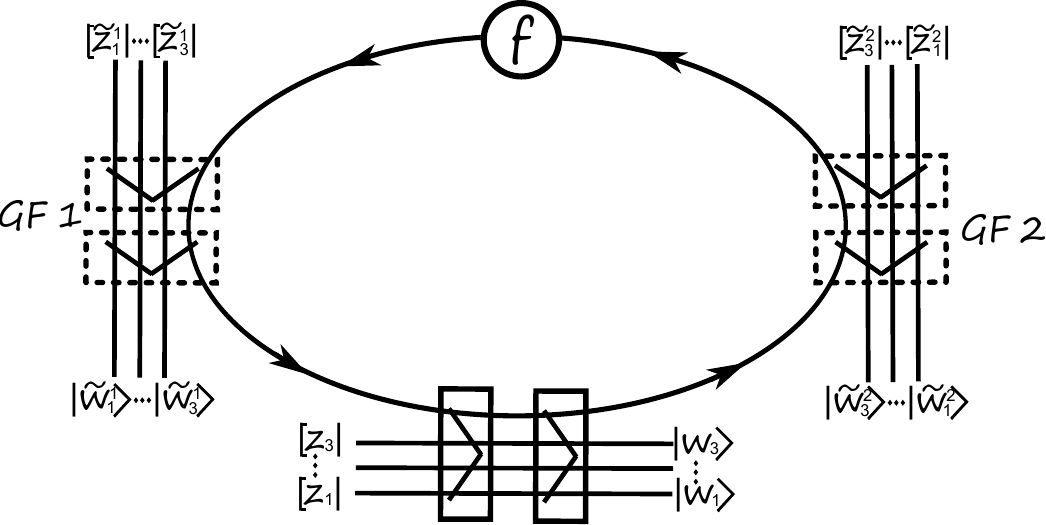}
	\caption{Loop identity with for the constrained projector with two extra gauge fixed projectors}
		\label{fig:loopidentity}
\end{figure}
In BF theory the gauge-fixing reduces the projectors to trivial propagators $\one (z_i;w_i)$, so we did not have to worry about this issue.

We thus have to first find the equivalent of the trivial propagator in the constrained case, i.e. the analog of setting $g = \one$ in (\ref{proj}) to get (\ref{trivial_proj}) but for the propagator (\ref{eq:PH}).  We thus have to restore the group integration. Fortunately, by tracking  homogeneity for each term, we know that
\be
\frac{\left( \sum_{i<j} [z_i|z_j\ket[w_i|w_j\ket\right)^{J}}{J!(J+1)!} = \sum_{\sum j_i = J} \int \rd g  \prod_{i=1}^{4}  \frac{[z_i|g|w_i\ket^{2j_i}}{(2j_i)!}.
\ee
Setting this SU(2) group element to identity and summing over all $J$ allows us to get the partially gauge fixed propagator, which we denote $\one_\rho$
\be
  \one_\rho(\tilde{z}_i;\tilde{w}_i)  = \sum_{J} {}_2F_1\left(-\frac{J}{2}-1,-\frac{J}{2};2;\rho^4\right) \frac{\left( \sum_{i} [\tilde{z}_i|\tilde{w}_i\ket\right)^{J}}{J!}\label{eq:PH0}.
\ee
Note that for the convenience of notation later, we will always add a tilde on the spinors which belong to the partially gauge fixed propagator.  As in the case of the propagator, we find that setting $\rho = 0$ we recover the BF trivial propagator $\one(z_i;w_i)$, as we would expect. We can now use the homogeneity map to define a homogenized trivial propagator $\one_{\tilde{\tau}}$ as
\be
\one_{\tilde{\tau}}=e^{\tilde{\tau} \sum_i [\tilde{z}_i|\tilde{w}_i\ket} \qquad \text{with}\qquad   \tilde{\tau}^J \rightarrow F_\rho(J/2) \qquad \text{for} \qquad  \one_{\tilde{\tau}} \rightarrow \one_\rho .
\ee
We thus have arrived at the expression for the gauge fixed propagators that are necessary for the loop identity. We will have to consider however $P_\rho\circ P_\rho$ and $\one_\rho\circ\one_\rho$, rather than single propagators, as we have mentioned above. We will thus use the following homogeneity maps: for the pair of  gauge-fixed propagators we will have
\be\label{eq:homtrivial}
\one_{\tilde{\tau}}\circ \one_{\tilde{\tau}} = e^{\tilde{\tau} \sum_i [\tilde{z}_i|\tilde{w}_i\ket} \quad \text{with} \quad \tilde{\tau}^J \rightarrow  \frac{F_\rho(J/2)^2}{(1+\rho^2)^{J}} \quad\text{for}\quad \one_{\tilde{\tau}}\circ \one_{\tilde{\tau}}\rightarrow \one_\rho\circ \one_\rho ,
\ee
while for the pair of propagators $P_\rho$ we get
\be\label{eq:homprop}
G_\tau\circ G_\tau = e^{\tau \sum_{i<j}  [z_i|z_j\ket [w_i|w_j\ket} \quad \text{with} \quad \tau^J  \rightarrow \frac{F_\rho(J)^2}{(1+\rho^2)^{2J} (J+1)!} \quad \text{for} \quad G_\tau\circ G_\tau \rightarrow P_\rho\circ P_\rho. 
\ee
With this, we are ready to perform the calculation of this loop identity. The addition of the extra two pairs of gauge-fixed propagators leads to very simple contractions, using our results of spinor Gaussian integrals in the Appendix. Integrating over the three strands inside the loop leads to nearly the same calculation as in the previous section, with the difference being the addition of the trivial propagation in the extra strands connected to the gauge-fixed propagators. Using the homogeneity map, we finally find that the constrained loop identity is given by
\be
\begin{split}
&\raisebox{-12mm}{\includegraphics[keepaspectratio = true, scale = 0.45] {loopidentity.pdf}}\ \ \ \  = \ \ \ \  \sum^\infty_{A,B,J,J'=0}\!\!\!\!   \frac{N\left(A,B, J, J',\rho\right) }{A! B! J! J'!} \times \\
& \times \underbrace{ \left(\sum_{i=1}^3 [\tilde{z}^1_i|\tilde{w}^1_i\ket \right)^A }_{GF 1}  \ \ \underbrace{\left(\sum_{i=1}^3 [\tilde{z}^2_i|\tilde{w}^2_i\ket \right)^B}_{GF2}  \ \ \underbrace{\left(\sum_{i=1}^3 [z_i|w_i\ket \right)^J} _\text{Trival projection} \ \ \underbrace{\left(  \sum_{i<j<4}[z_i|z_j\ket [w_i|w_j\ket \right)^{J'}\! \!}_\text{Mixing terms},
\end{split}
\label{fullloop}
\ee
with the coefficient $N\left(A,B, J, J',\rho\right)$ given by
\be
\begin{split}
N\left(A,B, J, J',\rho\right) \equiv & \sum_{K=0}^{J'}   \frac{ J'!(J\!+K)!(J\!+\!2K\!+\!1)^\eta  }{K! (J'\!-\!K)! (J\!+\!J'\!+\!K\!+\!1)! } \frac{(-1)^{K}}{(1+\rho^2)^{(A+B +12K+7J+2J')} }  \times \\
&\times F_\rho^2\left(J+J'+K\right)  F_\rho^2\left( (A+J)/2+K\right) F_\rho^2\left( (B+J)/2+K\right),\nonumber  
\end{split}
\ee
where we have defined  $ F_\rho(J)  \equiv  {}_2F_1(-J-1,-J;2;\rho^4)$. The variables $|\tilde{z}^1_i\ket, |\tilde{w}^1_i\ket$ appear in the strands attached to the first gauge fixing term,  similarly $|\tilde{z}^2_i\ket, |\tilde{w}^2_i\ket$ appear in the second gauge fixing, while $|z_i\ket, |w_i\ket$ are are labelled for the strands we haven't gauge fixed.
 The face weight coupling constant $\eta$  should be fixed by requirements of divergence, which we will discuss in a later section.  A more detailed calculation of this loop identity can be found in the Appendix.

Even though the expression in Eq.(\ref{fullloop}) has a few layers of summations like a Russian nesting doll and the coefficients look complicated, the physical meaning behind the expression is quite clean -- up to a weight, we get the trivial propagation, like in BF theory, but we also get additional mixing terms for $J' \neq 0$. We will study the properties of this identity in section \ref{sec:looptruncation}. 

For the purpose of calculating the 4-dimensional Pachner moves, it will be convenient to again define an exponentiated expression for this loop identity, which can then be transformed into the proper expression by the homogeneity map. Before using the homogeneity map in Eq. (\ref{fullloop}), we would have an expression purely in terms of $\tau$'s that can be exponentiated. We hence define the exponentiated loop identity to have the following very simple form:
\be\label{eq:exploopid}
L_\tau(z_i, w_i; \tilde{z}^1_i, \tilde{w}^1_i; \tilde{z}^2_i, \tilde{w}^2_i)=\exp\!\!{\left(\!\sum_{i=1}^3 \tilde{\tau}_1[\tilde{z}^1_i|\tilde{w}^1_i\ket  +\tilde{\tau}_2 [\tilde{z}^2_i|\tilde{w}^2_i\ket+\tau_N  [z_i|w_i\ket + \tau_M\!\!\!\! \sum_{i<j<4}[z_i|z_j\ket [w_i|w_j\ket\!\right) }.
\ee
The full loop identity can then be recovered through the following homogeneity map:
\be\label{eq:loophommap}
 \tau_N^J  \tau^{J'}_M  \rightarrow \sum_{K=0}^{J'}\frac{(-1)^{K}(J+K)!J'!}{K!(J'-K)!}(J+2K+1)^\eta\tau^{J'-K}\left(\frac{ \tilde{\tau}_1 \tilde{\tau}_2\tau}{(1+\rho^2)^3}\right)^{J+2K}
\ee
and the $\tilde{\tau}$'s and $\tau$ keep track of the $F_\rho$ factors according to the rules given in Eq. (\ref{eq:homtrivial}) and Eq. (\ref{eq:homprop}).

\subsection{Computing Pachner Moves with Simplicity Constraints}
In this section we  compute all the Pachner moves in the 4-d holomorphic Spin Foam model based on the techniques we have developed in the previous sections.  All these moves are based on the configurations of $6$ vertices 
$(ABCDEF)$. In the following we adopt the following   notation:  a simplex $A$ indicates the 4-simplex opposite to the vertex $A$, i.e. it is composed by $[BCDEF]$.  $AE$ indicates the tetrahedron $A\cap E$ composed of the vertices $[BCDF]$, with vertex $A$ and $E$ removed from the triangulation. Triangle $ABD$  indicates  the one composed by $[CEF]$. Also in order to keep track of which vertex is  ``active'', i-e dual to a 4-simplex   and which vertex is ``inactive'', i-e not dual to a 4-simplex, we introduce  a distinction in our notation: an upper case letter $A,B\cdots$ denotes an active vertex, while a lower case letter $c,d\cdots$ denotes an inactive vertex.

\subsubsection{3--3 move}
According to these conventions the move 3--3 corresponds to 
\be
ABCdef \to abcDEF .\nonumber
\ee
The 3--3 move is shown as Fig.\ref{fig:33triangulation}. In the first figure  the 4-simplices $ A, B, C$ are sharing the blue triangle. After the move 
the configuration is changed into three  4-simplices $D, E, F$ which share the green triangle. 

\begin{figure}[h]
	\centering
		\includegraphics[width=0.7 \textwidth]{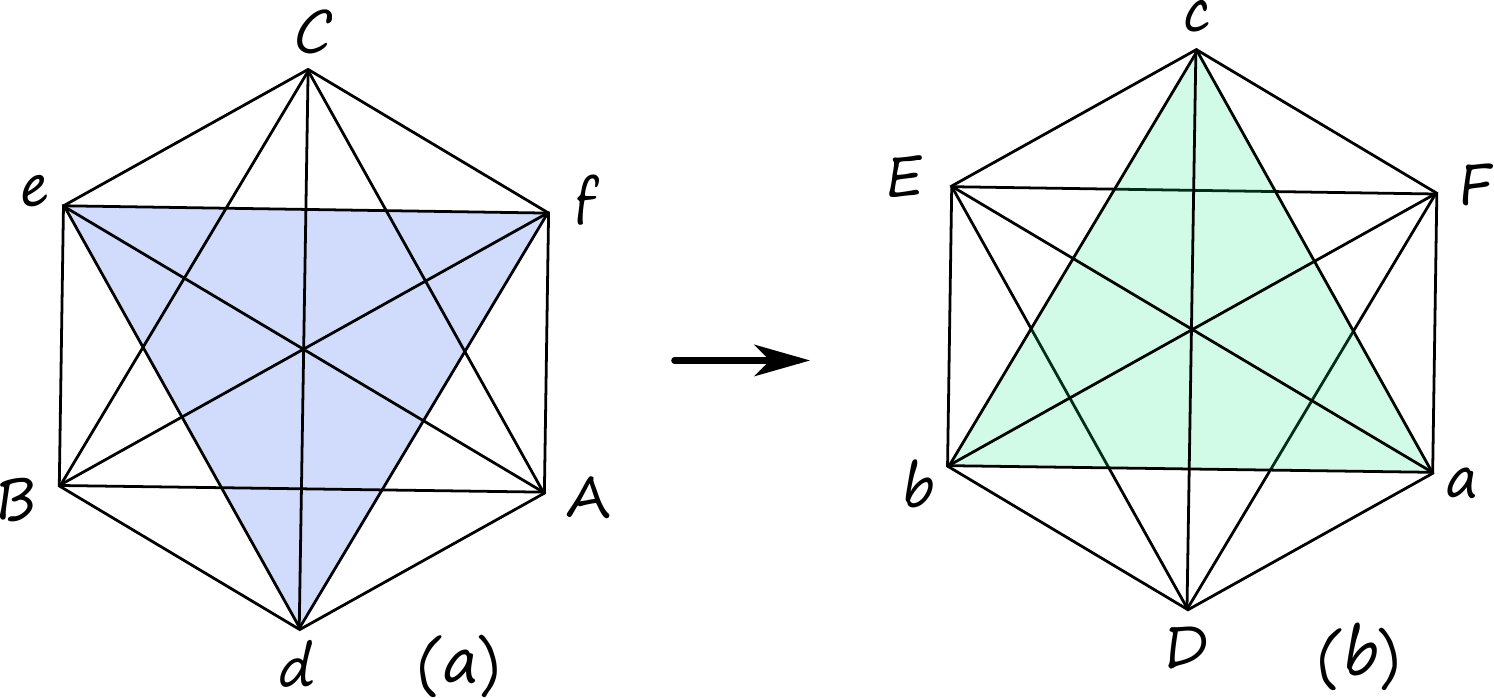}
	\caption{Triangulations for the 3--3 move.}
		\label{fig:33triangulation}
\end{figure}

The corresponding cable diagram is shown in Fig.\ref{fig:33move}.  The various colours of strands in the graph are used to indicate the different positions of triangles.  The blue loop to be integrated out corresponds to the triangle $ABC$. The purple strands in $(a)$ for example  are dual to the triangles $Adf \subset A,\ Bde \subset B,\ Cef \subset C$ and they run from the tetrahedra $Af\to Ad$, $Bd\to Be$, $Ce\to Cf$. 
After performing the 3--3 Pachner move, the same triangles (still indicated by the purple strands) are no longer shared by two tetrahedra within a given 4-simplex. They become    commonly shared by tetrahedra belonging to the three different 4-simplices: $aDF \subset (D\bigcap F ) ,\ bDE \subset (D \bigcap E),\  cEF \subset (E\bigcap F)$.  
The same happens to the black strands, whereas the opposite happens for the red and light blue strands. 
In summary, on one hand,  due to the 3--3 move from $(a)$ to $(b)$, the  red and light blue strands, shared  between different simplices in $(a)$  become unshared strands which  belong to one simplex in $(b)$. On the other hand, the unshared strands (the black and purple strands) in $a$ become the commonly shared ones in $(b)$.   The dark blue loop and the green loop correspond to  faces which are dual to the internal triangles $ABC$ and  $DEF$  respectively.

\begin{figure}[h]
	\centering
		\includegraphics[width=1\textwidth]{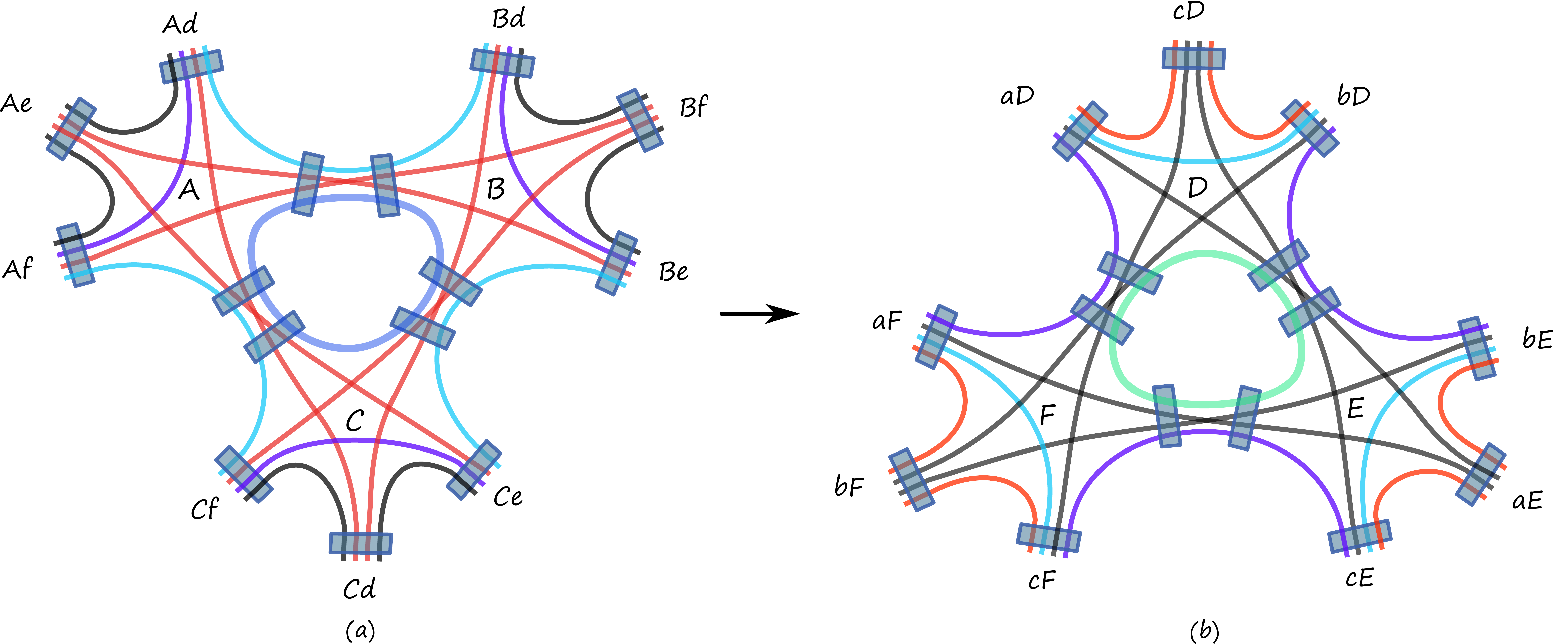}
	\caption{Cable diagram for the 3--3 move $ABCdef \to abcDEF$.  }
		\label{fig:33move}
\end{figure}
To compare the partition function/amplitudes between the configurations (a) and (b), we need to integrate out the shared loop on both sides. Based on the discussion in section \ref{sec:gf}, we can gauge fix two out of three pairs of the constrained propagators around the loop by a choice of a maximal tree in a way that leaves the amplitude invariant. We then need to apply only once  the constrained loop identity which we obtained in the previous section to complete the 3--3 Pachner move. 
In order to do so, it is important to introduce some notation for the spinors.
Let us describe the parametrization of $(a)=(ABCdef)$. For each 4-simplex $\alpha \in \{A,B,C \}$ we need to introduce a collection of spinors associated with each strand within that 4-simplex. Each strand carries a label which corresponds to a pair of tetrahedra $\alpha \beta$ sharing a face. Within $A$ we have two types of tetrahedra: three external ones $Ad, Ae, Af$ and two  internal ones $AB,AC$. 
The strands run either between  two internal tetrahedra or from one internal to one external tetrahedron.
Accordingly, we label the external strands by boundary spinors $z^{\alpha \beta}_\gamma$  where 
$\alpha \in \{A,B,C \}$, \  $\beta\in\{ d,e,f\}$, \ $\gamma\in\{ A,B,C,d,e,f\}$ for $(a)$  in Fig.\ref{fig:33move} , and $\alpha \in \{D,E,F \}$, \  $\beta\in\{a,b,c\}$, \ $,\gamma\in\{ a,b,c,D,E,F\}$ for $(b)$.  $\alpha \beta$ are the indices labeling boundary tetrahedra, and $z^{\alpha \beta}_\gamma$ indicate boundary spinors. The boundary propagators are then labeled as $P_\rho (z^{\alpha \beta}_\gamma; w^{\alpha \beta}_\gamma)$. 

\begin{figure}[h]
	\centering
		\includegraphics[width=0.45\textwidth]{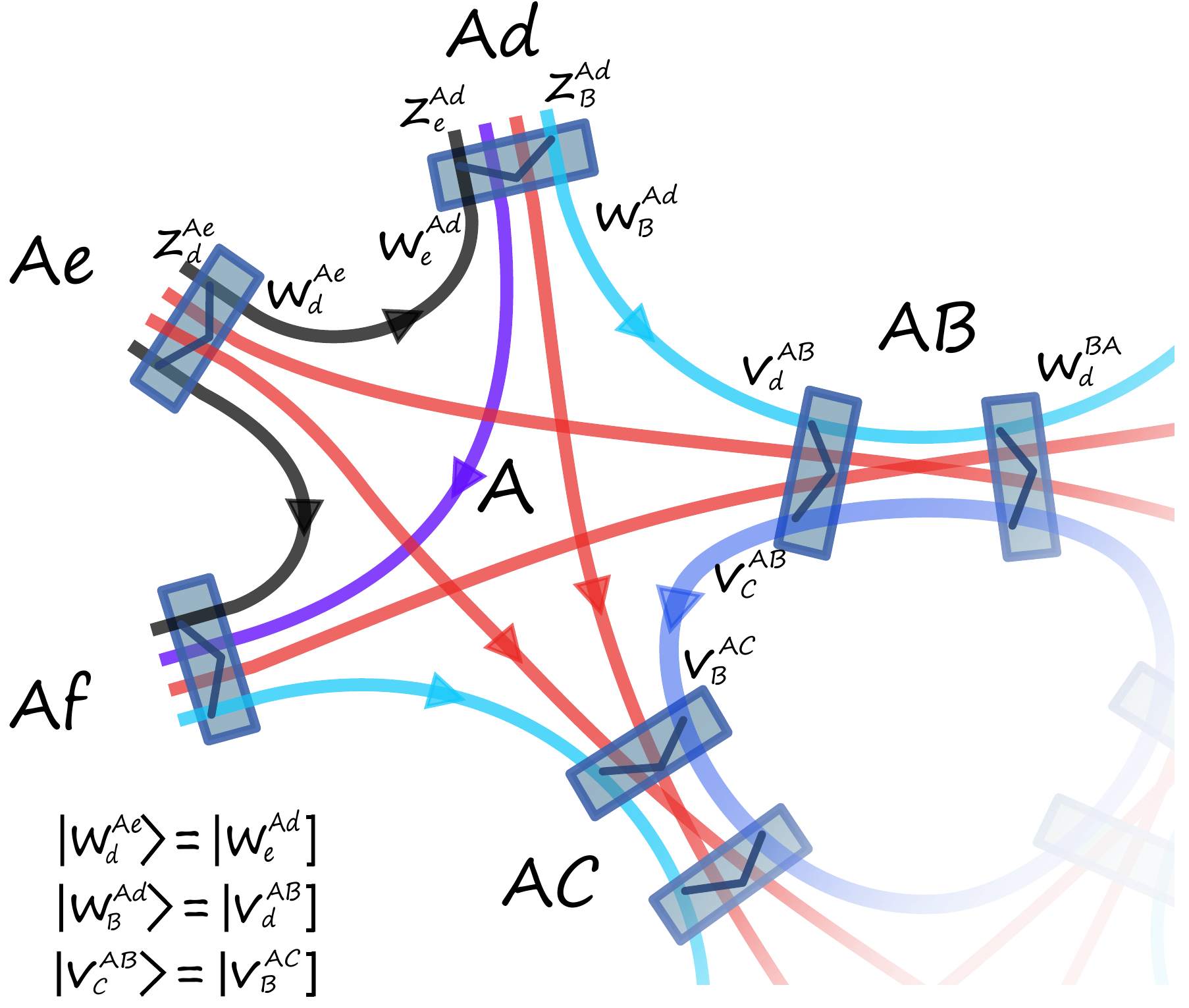}
	\caption{Zoomed in part of the cable diagram for the 3--3 move with some of the labels and contractions of spinors explicitly written down.}
		\label{fig:33zoom}
\end{figure}

Let us label the internal pairs of propagators by $P_\rho\circ P_\rho (v_\gamma^{\alpha\alpha '}, w_\gamma^{\alpha '\alpha})$, where $\alpha,\alpha' \in \{A,B,C \}$ for $(a)$ and  $\alpha,\alpha' \in \{D,E,F \}$ for $(b)$. We need to contract these spinors with the spinors $w_\gamma^{\alpha\beta}$ of the external propagators. An example of this is shown in Fig. \ref{fig:33zoom} with all the labels and orientations written explicitly of a part of $(a)$. The contractions are done according to the orientations of strands, and for example we have $|w^{Ad}_B\ket = |v_d^{AB}]$.  In summary, the amplitude is constructed from $z^{\alpha \beta}_\gamma$ and $w^{\alpha\beta}_\gamma$ for the external propagators and on $ w^{\alpha\alpha'}_\gamma , v^{\alpha\alpha'}_{\gamma'}$ for the internal ones. The amplitude is obtained then after integration over the internal spinors after imposing the contractions, thus becomes a function of $z^{\alpha \beta}_\gamma$ only.

We thus find that the amplitude for three 4-simplices combined as in Fig.\ref{fig:33move} can be written  as
\be
\mathcal{A}_{3} (z^{\alpha \beta}_\gamma) =\int \prod_{all} d\mu_\rho(v)d\mu_\rho(w) \prod_{\alpha \beta} P_\rho (z^{\alpha \beta}_\gamma; w^{\alpha \beta}_\gamma) \cdot \ \  \raisebox{-10mm}{\includegraphics[keepaspectratio = true, scale = 0.4] {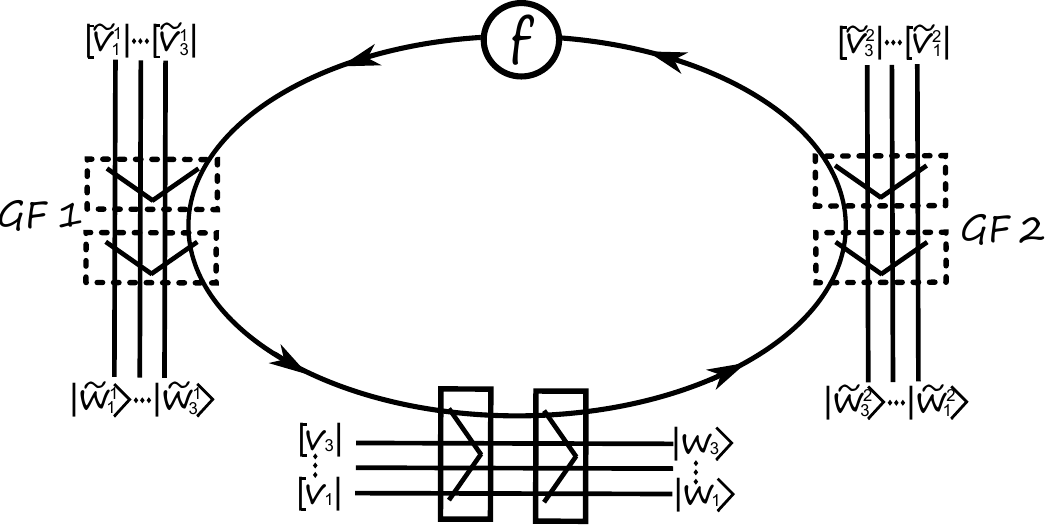}}.
\ee
The spinors of the three internal  propagators which share a loop are labeled by $v$ and $w$ and each of them is contracted with different boundary constrained propagators, with the gluing depending on the orientation of the graph. 

The crucial difference between amplitude $(a)$ and $(b)$ is that  the non-trivial coefficient $N(J,J',A,B,\rho)$ of Eq.(\ref{fullloop}) encodes the spin information of different strands. In $(a)$, the coefficient  $N$  encodes the spin information of the blue and red strands in one configuration, while in (b) it encodes the spin of the black and purple strands. Unless the corresponding boundary spins are chosen to be  the same, the 3--3 move cannot be invariant. 

It is thus very easy to see where the topological invariance of BF theory is broken. Let us come back to BF theory and look at the 3--3 move. The BF loop identity (\ref{eqn_loop_identity})  does not have any factor depending on spins and hence gives a trivial equality, as the diagrams in Fig. \ref{fig:33moveBF} are combinatorially equivalent. Thus for BF theory, the partition function/amplitudes are invariant under 3--3 move.

\begin{figure}[h]
	\centering
		\includegraphics[width=1\textwidth]{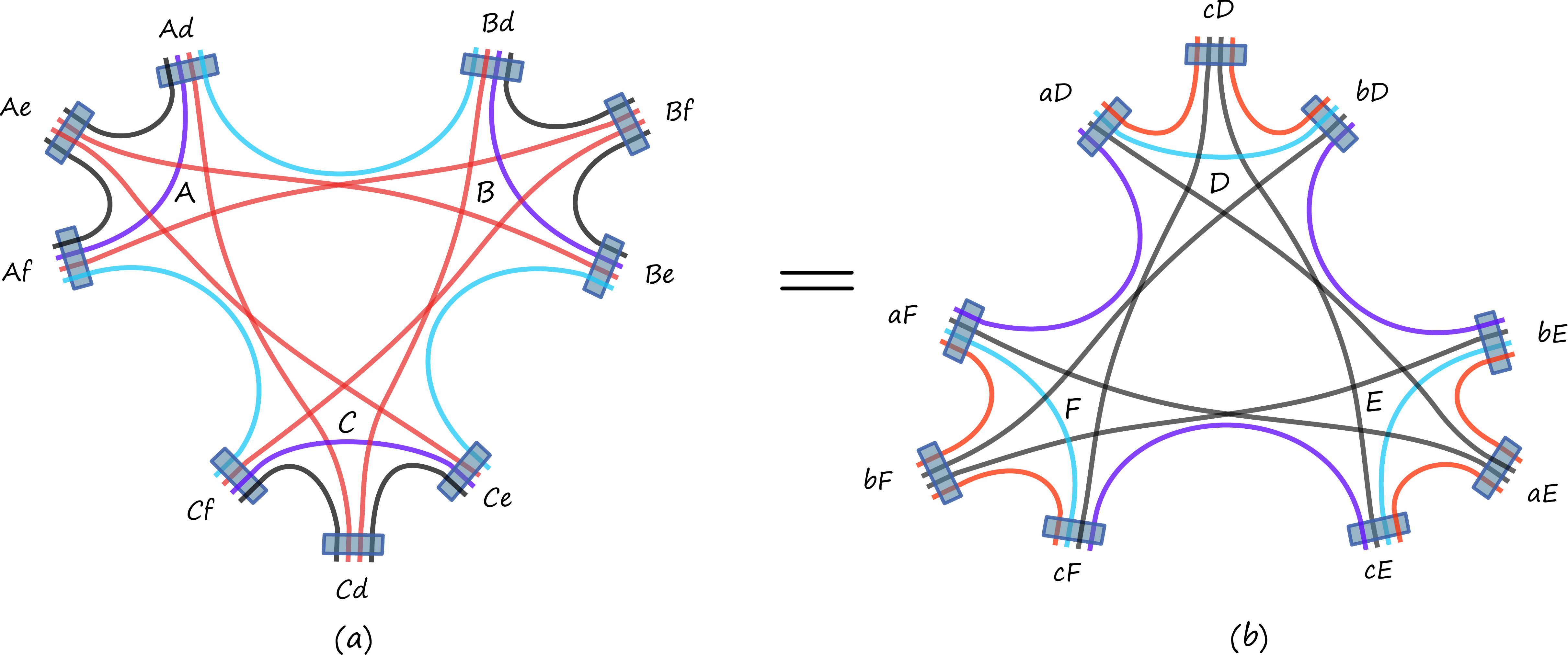}
	\caption{For 4-d BF theory, after integrating out the middle loops in the 3--3 move, the rest of the strands are combinatorially equivalent.}
		\label{fig:33moveBF}
\end{figure}

\subsubsection{4--2 move}
The 4--2 move $ ABCDef \mapsto abcdEF$ is shown in Fig.\ref{fig:42movetri}. In $(a)$, four 4-simplices $A,B,C,D$ are sharing 6 tetrahedra. After removing four triangles (or four loops in the dual cable graph) and changing the combinatorial structure, the four 4-simplices are rearranged in two 4-simplices $E,F$ glued by one tetrahedron. The corresponding cable diagram of the four 4-simplices is shown in Fig.\ref{fig:42movecab}. 

\begin{figure}[h]
	\centering
		\includegraphics[width=0.75\textwidth]{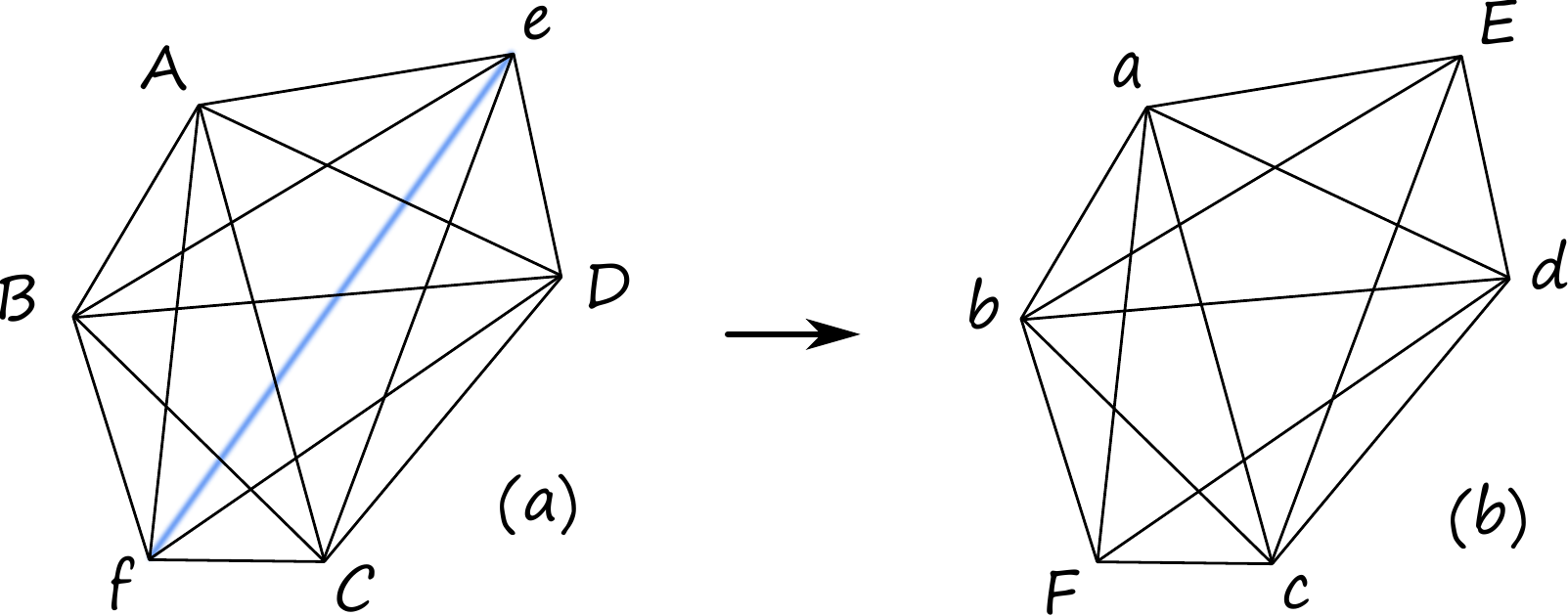}
	\caption{Triangulations for the 4--2 move.}
		\label{fig:42movetri}
\end{figure}

\begin{figure}[h]
	\centering
		\includegraphics[width=0.65 \textwidth]{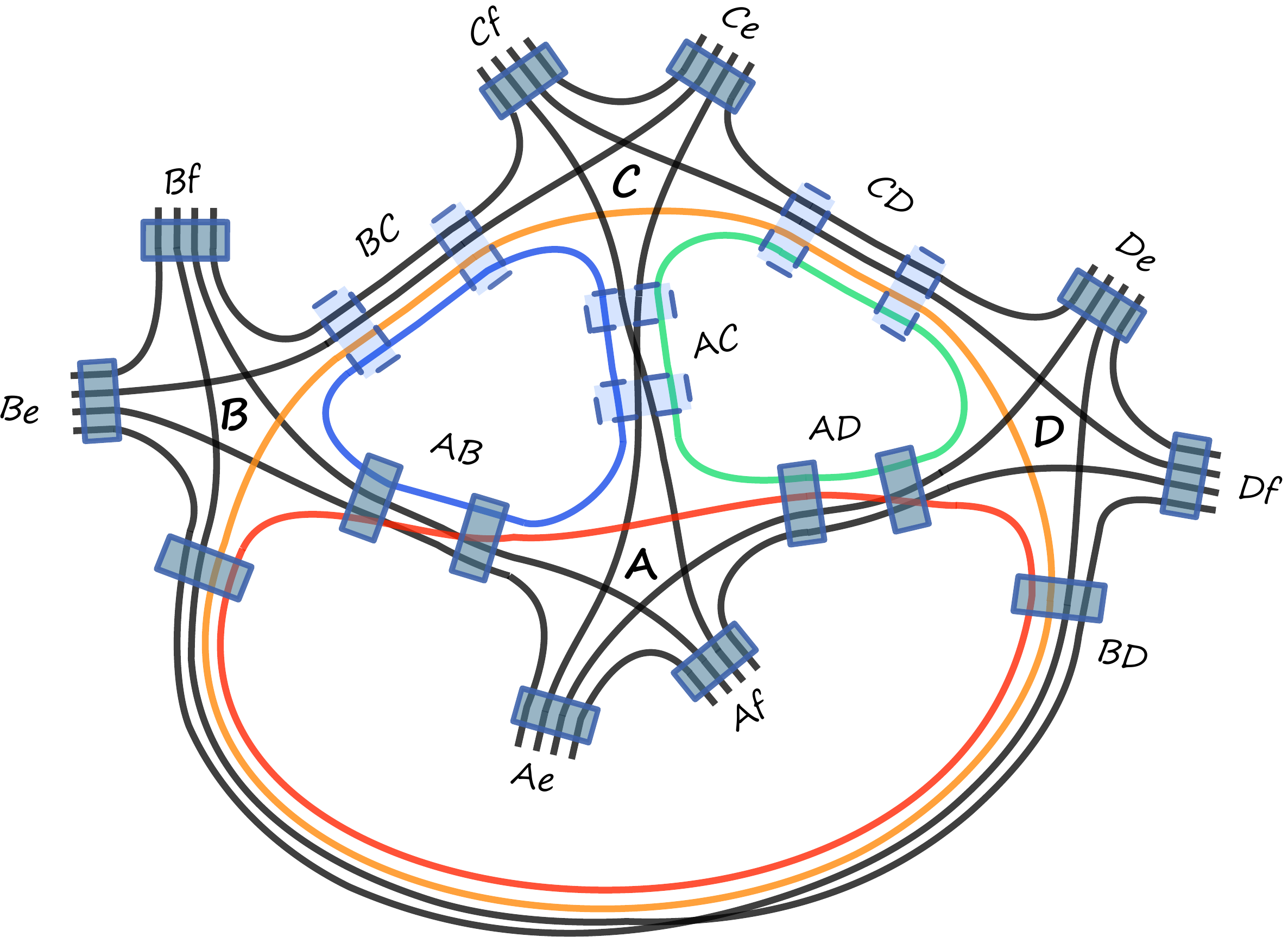}
	\caption{Cable diagram for the 4--2 move with gauge fixing along  $BC, AC, CD$.}
		\label{fig:42movecab}
\end{figure}
We can perform gauge fixing of this graph by choosing vertex $C$ as the root of the maximal tree in such a way that we can gauge fix 3 couples of propagators $BC, AC, CD$. This allows us to apply the constrained loop identities Eq.(\ref{fullloop}) to three of the four loops. More specifically, we can apply the constrained loop identity to the propagators $(AB, BC, CA)$ to drop the blue loop, then apply it to the propagators $(AC,CD,DA)$ to integrate  the green loop and propagators $(BC, CD, DB)$ to remove the big yellow loop. This results in integrating out all couples of constrained propagators, and hence we are left with one last (red) loop, which is mixed with the external strands, as can be seen in Fig.\ref{fig:nonlocalgluing}.

\begin{figure}[h]
	\centering
		\includegraphics[width=0.3\textwidth]{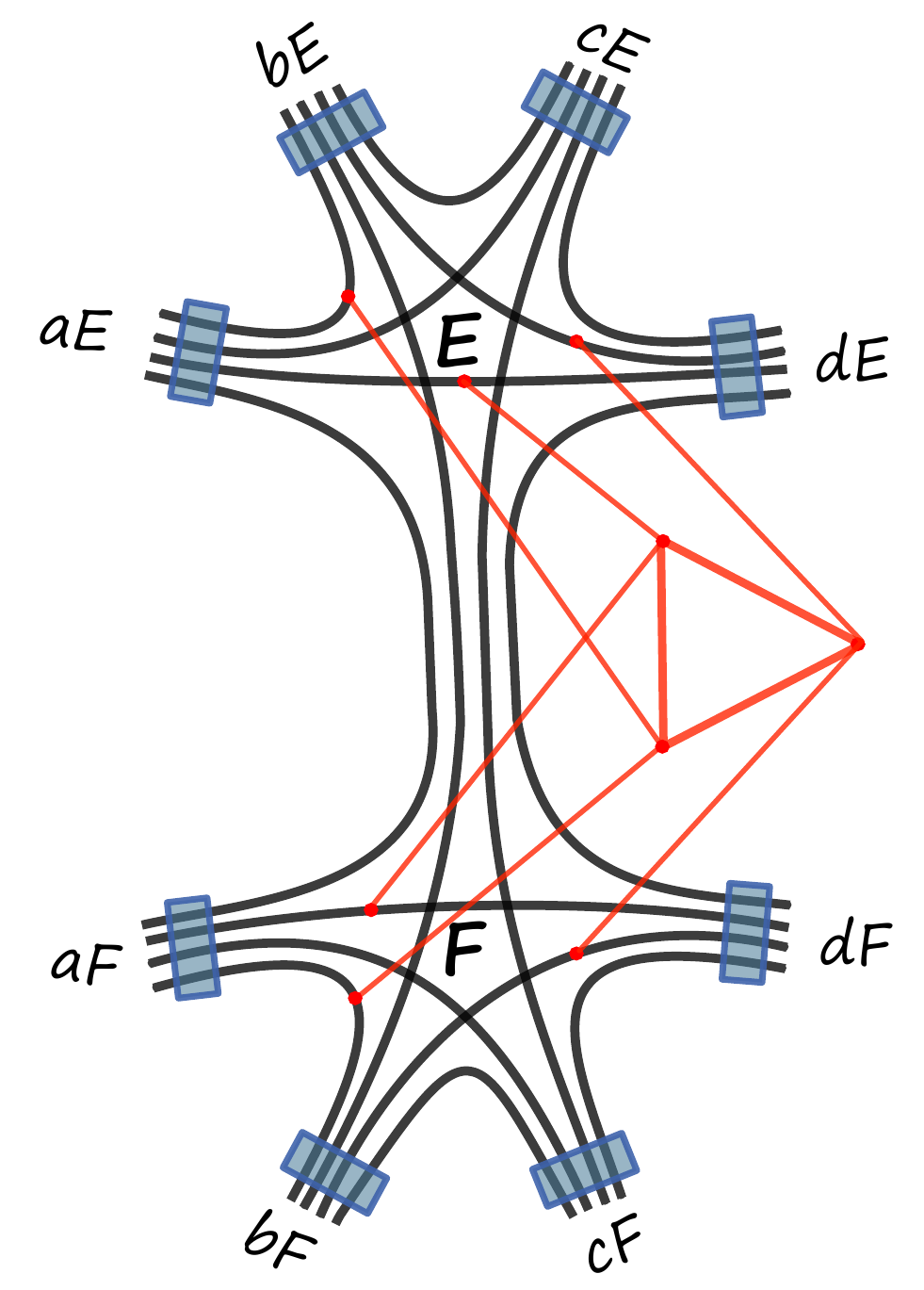}
	\caption{Performing the calculation we get a configuration of two 4-simplices with a nonlocal gluing.}
		\label{fig:nonlocalgluing}
\end{figure}

Note, that we have applied the three loop identities, but the last loop is left without any extra group averaging. 
Similar to the  case of the loop identity, we have to add in a face weight for this last loop. We will do so again by inserting a factor of $\tau'$ on one of the strands of the left-over loop (say the red strand for edge AD), so that we can use homogeneity map $\tau '^{2j}\rightarrow (2j+1)^\eta$. 

Similar as in the previous section, we will denote the spinors on the boundary as $z$, and spinors in the bulk as $w$ and $v$ with indices labeling the propagator and the strand they belong to.  Each spinor carries three indices:  $z^{\alpha \beta}_\gamma$ with indices $\alpha$ labeling the 4-simplex, $\alpha \beta$ labelling the tetrahedron they belong to, $\gamma$ labelling which strands they represent. With assuming a specific orientation of the graph as $C \rightarrow A, C \rightarrow B, C \rightarrow D, A\rightarrow B, D\rightarrow A ,D\rightarrow B$ \footnote{When one reverses the orientation of one propagator,  the corresponding $[v|w\ket \rightarrow [w|v\ket = - [v|w\ket  $}, the amplitude in terms of  the exponentiated loop identity Eq.(\ref{eq:exploopid}) is given then by
\be\label{eq:42move}
\begin{split}
\mathcal{A}_{4-2}^\tau (z^{\alpha \beta}_\gamma)& = \int \left\{\prod_{\text{all}} d\mu_\rho(v) d\mu_\rho(w)\right\}  \prod_{{\alpha \beta} }  P_{\rho } (z^{\alpha \beta}_\gamma ; w^{\alpha \beta}_\gamma)   \cdot \exp \left[\small{\sum}_{  \sigma i} {\tilde{\tau}_{ \sigma C}  [\tilde{v}^{ C \sigma }_i|\tilde{w}^{\sigma C}_i \ket}  \right] \\
 &\times \exp \left[\small{\sum}_{ \mu \nu} { (\tau^{ \mu \nu}_N \small{\sum}_j \alpha_j^{\mu\nu}[v^{\mu \nu}_j|w^{\nu \mu}_j \ket +\tau^{ \mu \nu}_M \small{\sum}_{j < k} \alpha_j^{\mu\nu}[w^{\nu \mu}_j|w^{ \nu \mu}_k\ket [v^{\mu \nu}_j|v^{\mu \nu}_k \ket  }  ) \right] .
\end{split}
\ee
For the external propagators $\alpha \in\{A,B,C,D\}$ and $\beta \in \{e,f\}$ label the tetrahedron, while $\gamma \in\{A,B,C,D,e,f\}$ labels the strands in each tetrahedron. For internal gauge fixed propagators, $\sigma \in \{ A, B, D\},\  i \in \{ e,f\}$, and for the non-gauge fixed propagators, $\mu \nu \in \{AB,AD,BD\},\  j,k \in \{e,f,r \}$, where $r$ indicates the red strand of the left-over loop.  We define $\alpha_j^{\mu\nu}$ as
\be
\alpha_j^{\mu\nu} = 1 + \delta^{\mu\nu}_{AD}\delta_j^r \left(\tau'-1\right)
\ee
for keeping track of the homogeneity factor for the face weight of the last loop.

The equation (\ref{eq:42move}) gives a compact and  explicit expression for the amplitude associated with the 4--2. It is obtained by  using the exponentiated loop identity Eq.(\ref{eq:exploopid}), which then can be transformed using the homogeneity map to obtain the full expression after performing all of the contractions of spinors and all the Gaussian integrals. The homogeneity maps we  need to apply to this expression to get the full result were defined in  Eq. (\ref{eq:homtrivial}) for the $\tilde{\tau}$,  in Eq.(\ref{eq:loophommap}) for $\tau_N$ and $\tau_M$ and the homogeneity map for $\tau '$ is  $\tau '^{2j}\rightarrow (2j+1)^\eta$. The calculation can be straightforwardly done, but the resulting expression itself is a complicated,  one with lots  of mixed strands that is  difficult to manipulate.
The integrals also contain potential divergences that have to be taken care of.
 We will delay the discussion of  the resulting expression and the significance of the mixing terms   until the next section, 
 as we first  encounter a similar bahaviour for the 5--1 Pachner 
 move as well.  Here in the expression Eq.(\ref{eq:42move}), we intentionally leave the last red loop unintegrated to pave the way for truncation in section \ref{sec:coarse}.

\subsubsection{5--1 move}

We  now  calculate the 5--1 Pachner move. The 5--1 move corresponds to a change of a configuration of five 4-simplices sharing an internal vertex into a single 4-simplex by removing the common vertex, see Fig. \ref{fig:51triangulation}.

\begin{figure}[h]
       \centering
               \includegraphics[width=0.75\textwidth]{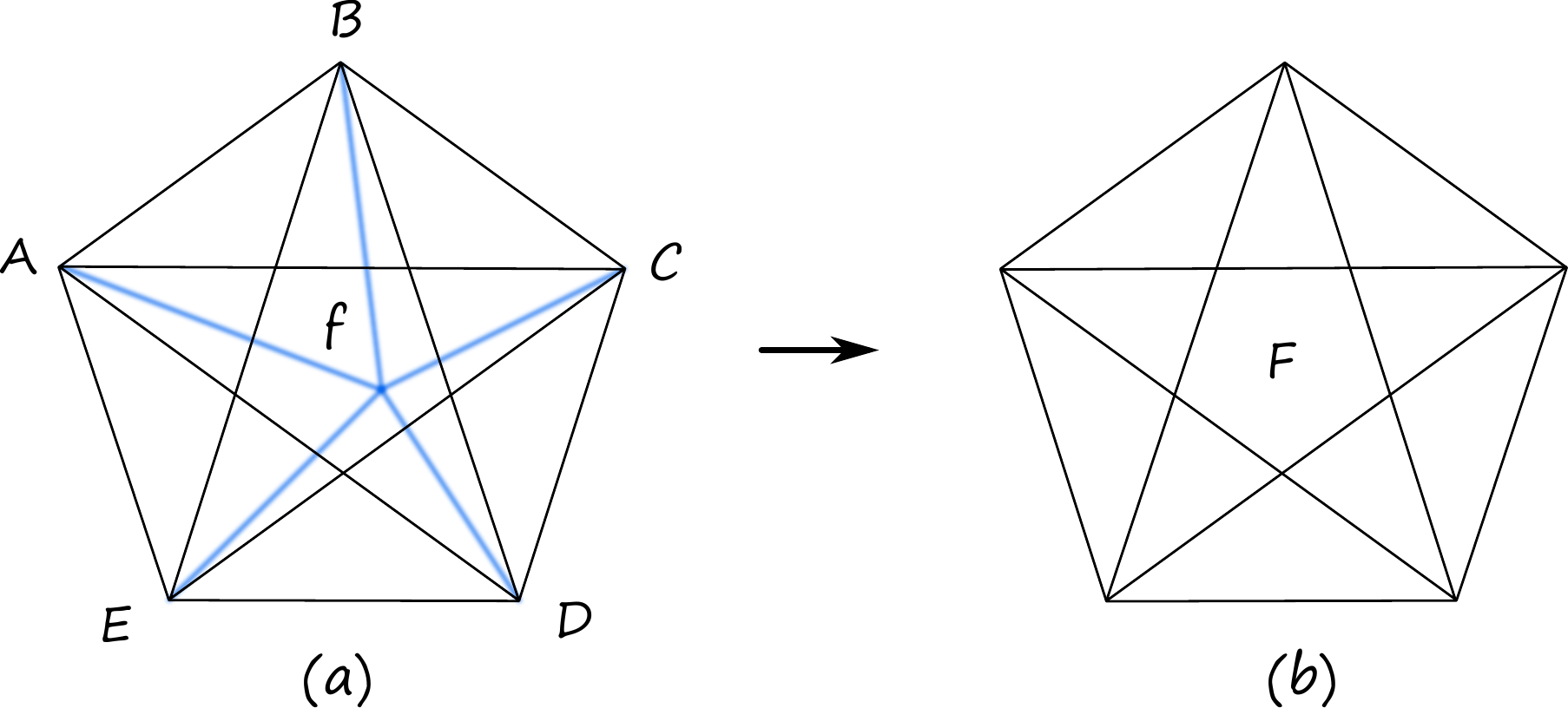}
       \caption{Triangulations for the 5--1 Pachner move.}
               \label{fig:51triangulation}
\end{figure}

  The cable diagram for this move can be seen in Fig. \ref{fig:51move}.
\begin{figure}[h]
	\centering
		\includegraphics[width=0.6\textwidth]{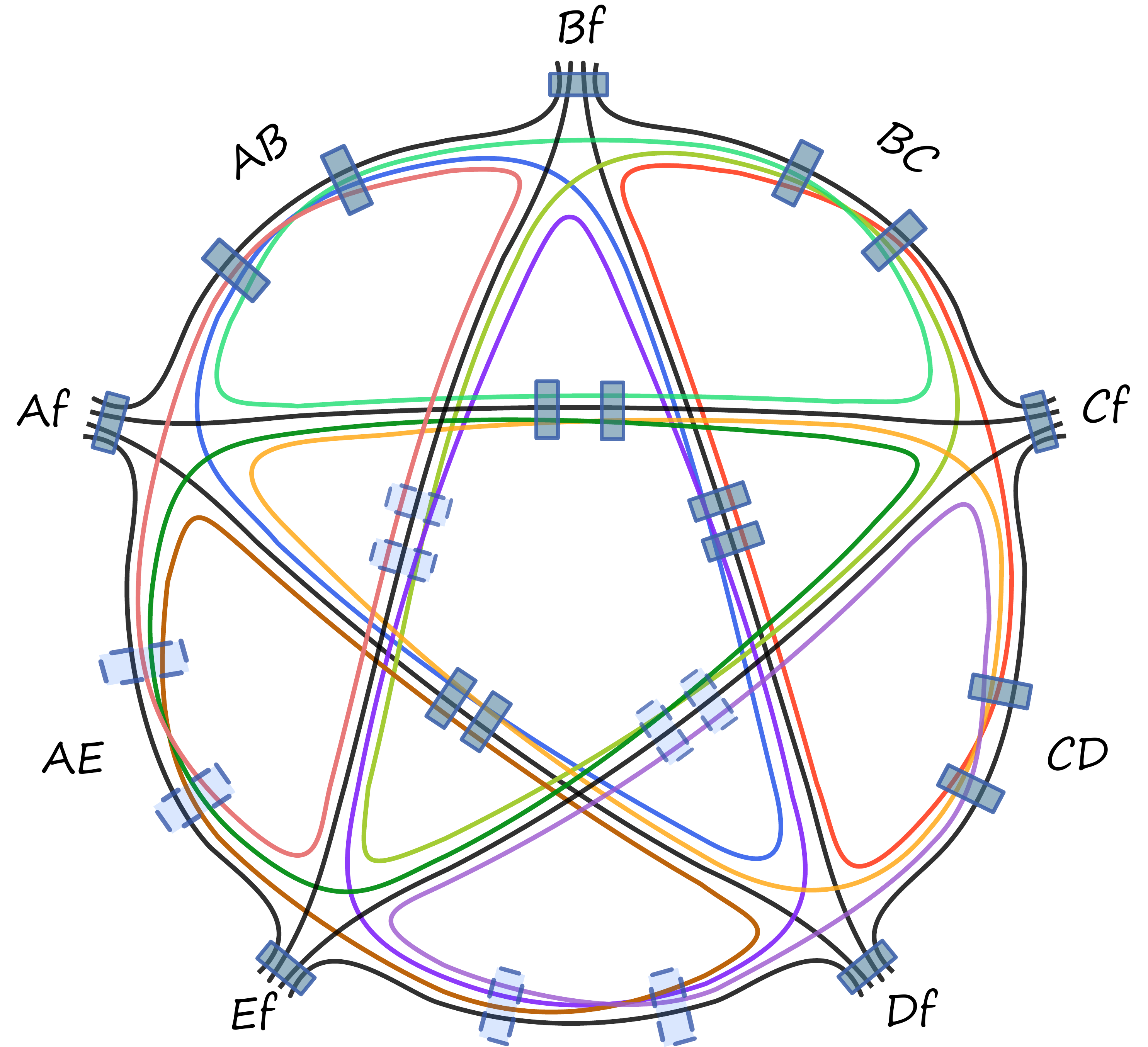}
	\caption{Cable diagram for the 5--1 move. The loops inside are colored.}
		\label{fig:51move}
\end{figure}
We have a total of 10 loops and 10 pairs of constrained propagators inside the bulk of the graph. Even though there is an increase in complexity, compared to the 4--2 move, the calculation will go over in nearly the same way. We start by choosing a maximal tree in the diagram, which allows us to gauge fix 4 of the pairs of propagators. A careful choice of this tree corresponds to a root at one of the 4-simplices and allows us to apply loop identities to 6 of the loops, leaving us with 4, as can be seen in Fig. \ref{fig:51fixed}.

\begin{figure}[h]
	\centering
		\includegraphics[width=0.6\textwidth]{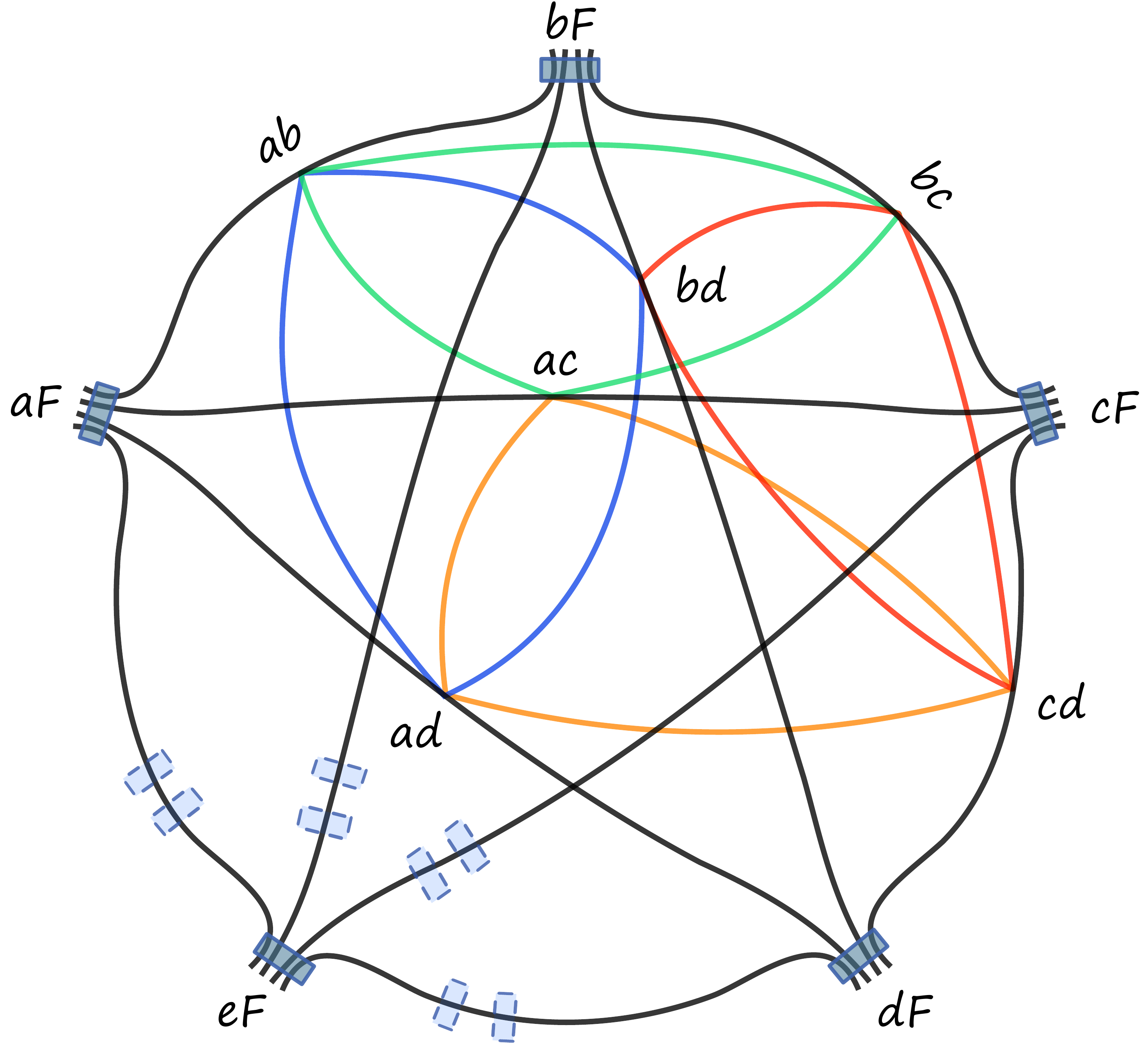}
	\caption{Gauge-fixing 4 strands allows to apply loop identities 6 times, leaving the 4 colored loops.}
		\label{fig:51fixed}
\end{figure}

We can write the amplitude for the 5--1 move using the exponentiated loop identity Eq.(\ref{eq:exploopid}) as in the case of the 4--2 move. We will again have to add the face weights for the last four loops by adding factors of $\tau '$. The expression for the full Pachner move then would be obtained by applying the homogeneity map to the resulting power series. We keep to the notation of inside spinors being $w$ and $v$ labeled by the strands and propagators they belonged to.
With assuming the orientation of the graph as $E \rightarrow A, E \rightarrow B, E \rightarrow C, E \rightarrow D$, the amplitude in terms of boundary spinors $z$ is formally given then as 
\be
\begin{split}\label{eq:51tau}
\mathcal{A}_{5-1}^\tau (z^{\alpha f}_\gamma)& = \int \left\{\prod_{\text{all}} d\mu_\rho(v) d\mu_\rho(w)\right\}   \prod_{{\alpha } }  P_{\rho } (z^{\alpha f}_\gamma ; w^{\alpha f}_\gamma)   \cdot \exp \left[\small{\sum}_{   \beta } {\tilde{\tau}_{E \sigma}  [\tilde{v}^{E \sigma}|\tilde{w}^{\sigma E } \ket}  \right] \\
 &\times \exp \left[\small{\sum}_{ \mu \nu} { (\tau^{ \mu \nu}_N \small{\sum}_i \beta_i^{\mu\nu}[v^{\mu \nu}_i|w^{\nu \mu}_i \ket +\tau^{ \mu \nu}_M \small{\sum}_{i < j} \beta_i^{\mu\nu}[w^{\nu \mu}_i|w^{\nu \mu}_j\ket [v^{\mu \nu}_i|v^{\mu \nu}_j \ket  }  ) \right] ,
\end{split}
\ee
where the the indices run over the following ranges: $ \sigma \in \{ A, B, C,D\},\  \mu \nu \in \{AB, AC, AD, BD, BC, CD\},\  i,j \in \{f,b,r,y,g \}$, where $b,r,y,g $ indicates the blue (ABD), red (BCD), yellow (ACD), green (ABC) strands of the left-over loops respectively, and $f$ indicates the black strands  which compose the simplex F after the move. The external propagators $ P_{\rho } (z^{\alpha f}_\gamma ; w^{\alpha f}_\gamma)$ are defined the same way as in previous sections, namely $\alpha \in\{A,B,C,D,E\}$ labels the simplices in which the boundry tetrahedra belong to, and $\gamma$ labels the strands in each tetrahedra.  The coefficients $\beta_i^{\mu\nu}$ that keep track of homogeneity of the face weights are defined this time as
\be
\beta_i^{\mu\nu}= 1 + \delta^{\mu\nu}_{AD}\delta_i^y \left(\tau_y'-1\right) + \delta^{\mu\nu}_{AC}\delta_i^g \left(\tau_g'-1\right)+\delta^{\mu\nu}_{AB}\delta_i^b \left(\tau_b'-1\right)+\delta^{\mu\nu}_{BC}\delta_i^r \left(\tau_r'-1\right) .
\ee
The formal expression of 5--1 is of similar structure as the 4--2 move, with the difference being the range of the indices due to bigger number of loops and propagators. The expression (\ref{eq:51tau}) is relatively compact for such a complicated calculation and it contains all the information necessary to evaluate the amplitude after the Gaussian integrations are performed. In order to do so we just need to specify is  the homogeneity map 
\be
H_{5-1}[\mathcal{A}_{5-1}^\tau] = \mathcal{A}_{5-1}.
\ee
The 5--1 homogeneity map $H_{5-1}$  is  given by the composition of : 
\be\label{eq:pachnerhom}
\begin{split}
& \tau_N^{\mu\nu J}  \tau^{\mu\nu J'}_M  \!\rightarrow \!\sum_K\!\frac{(-1)^{J'\!-\!K}(J\!+\!J'\!-\!K)!J'!}{K!(J'\!-\!K)!}(J\!+\!2J'\!-\!2K\!+\!1)^\eta\tau_{\mu\nu }^K\left(\frac{ \tilde{\tau}_{E\mu} \tilde{\tau}_{E\nu}\tau_{\mu\nu}}{(1+\rho^2)^3}\right)^{\!\!J\!+\!2J'\!-\!2K} \\
&\qquad\quad\tau_{\mu\nu}^J  \rightarrow \frac{F_\rho(J)^2}{(1+\rho^2)^{2J} (J+1)!},\qquad\tilde{\tau}_{E\sigma}^J \rightarrow  \frac{F_\rho(J/2)^2}{(1+\rho^2)^{J}},\qquad\tau_i '^{2j}\rightarrow (2j+1)^\eta ,
\end{split}
\ee
with $F_\rho(J)$ previously defined as the hypergeometric function $F_\rho(J) = {}_2F_1(-J-1,-J;2;\rho^4)$. The same map can be used to find the full expression for the 4--2 Pachner move as well. The Gaussian integrals for the last four loops can be performed explicitly. Using the results from \cite{Freidel:2012ji}, we can write this as an inverse of a determinant of a large matrix. We leave these integrals undone however to make the trucation procedure in the next section more clear.

Let us now try to understand our result. In BF theory the 5--1 Pachner move would lead to 4 decoupled  loops, each giving a factor of a SU(2) delta function evaluated at identity. This would correspond to setting all the $\tau_M$s to 0 and all the other $\tau$s to 1 in our expression. For the constrained propagator,  as in the previous case of the 4-2 move, the loops inside are coupled to each other and  to the strands of the boundary spinors. This means that as expected  the spin foam model we consider is not  invariant under both the 4--2 and 5--1 Pachner moves. It is natural to conjecture here, that this would be the case for the other spin foam models as well.

The new feature of the model is the mixing between internal loops and external edges that creates a coupling between all the different strands not present in the original form of the vertex amplitude.

Let us try to study this mixing in some more detail.
By splitting the 6-valent vertices in the loops, as in Fig. \ref{fig:vertex with a heart}, it is obvious that we can try to interpret these coupled loops as an insertion of an operator. 
\begin{figure}[h]
	\centering
		\includegraphics[width=0.4\textwidth]{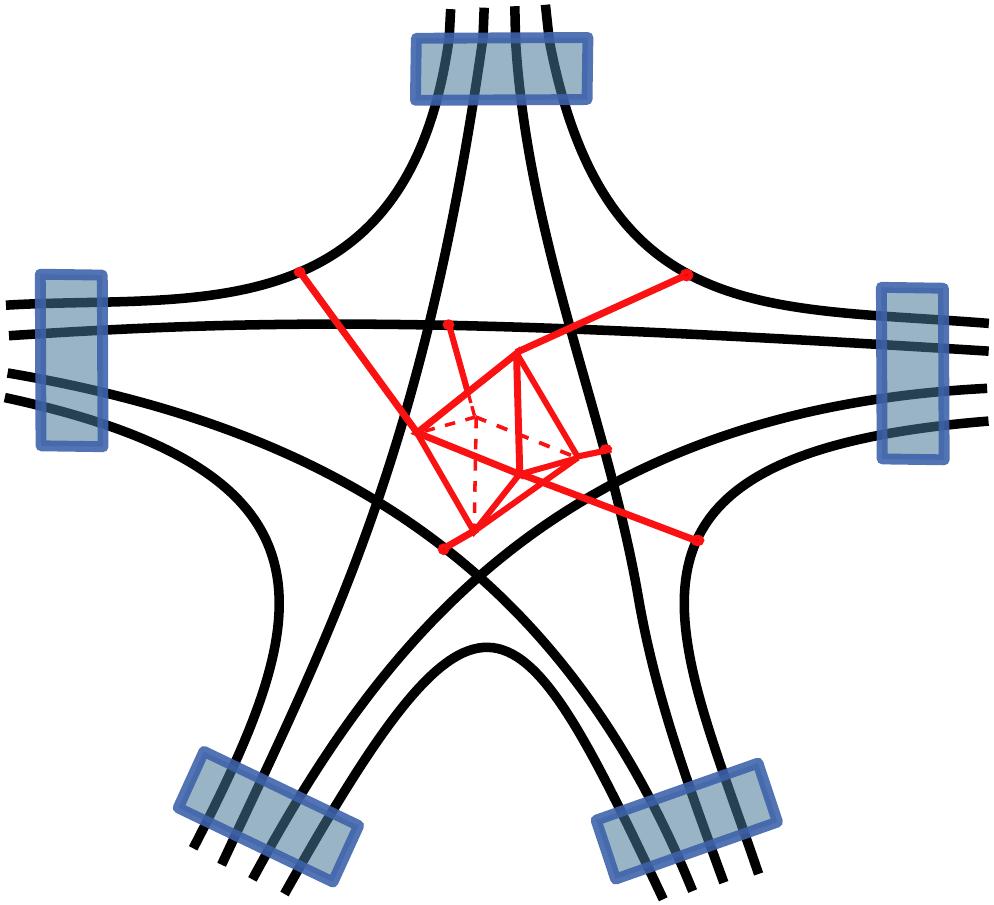}
	\caption{Performing the calculation we get a 4-simplex with an insertion of a nonlocal operator.}
		\label{fig:vertex with a heart}
\end{figure}
The connections between loops and the boundary spinors corresponds to gauge invariant operators inserted inside the 4-simplex amplitude.
It is well known that such operators can be expressed as a sum of grasping operators \cite{FreidelK}.

In the holomorphic context these operators are due to the insertions of the SU(N) operators  \cite{Freidel:2010tt},  from which all geometrical operators are made. The insertion of Wilson loops and the action of  SU(N) operators  are two sides of the same coin \cite{Livine:2013zha} -- they are constructed from the same type of gauge-invariant observables, which in our language are the products $[z|w\ket$ and $\bra z|w\ket$.  The operators we get for the 4--2 and 5--1 moves can be thus thought as an exponentiated combination of SU(N) grasping operators  and Wilson loops. Iteration of 5-1 moves leads to a new kind  of loop expansion, reminiscent of higher order diagrams in perturbative quantum field theory. It might be interesting to flesh out more this correspondence and understand  if this series converges to some interesting object. 
We leave this question for future work since this requires to first  disentangle the divergent part from the part that purely acts as grasping and leads to mixing of strands. We will now try a different approach to understanding these operators.

\section{Towards coarse graining}\label{sec:coarse}

In the previous section we have see that the mixing of strands could be understood as the insertion of a SU$(N)$ grasping operator.  In this section we mainly focus on the 5--1 move. This move can be understood as a coarse graining move which maps one choice of the vertex amplitude to another one obtained after coarse graining. All the other course graining moves have to be built out of non-trivial combinations of 3--3, 4--2 and 5--1 moves. As we have shown that 5--1 move generates non-local couplings via the mixing terms, similarly to what happens in Real Space Renormalization Group calculations. Remarkably it turns out that the mixing terms are clearly subdominant.
This motivates a truncation scheme in which we keep only the non-mixing terms in the 5--1 move leading to a specific renormalisation scheme for the vertex amplitude. This is what we analyse in this section.

A truncation scheme is usually associated with a choice of what are  the relevant and  irrelevant couplings. In the usual setting this choice is tied up with the assumptions of  locality but also has to be compatible with the symmetries like Lorentz invariance and eventually should be compatible with unitarity.
These concepts needs to be replaced by others in the case of Spin Foam renormalisation.
The current Spin Foam models, including the holomorphic one we study here, are defined in such a way that they possess the correct leading semi-classical behaviour at the level of a single 4-simplex. In hopes of defining  a continuum theory down the line, the requirement of correct asymptotics should be kept unchanged at each step of truncation in the coarse graining procedure.
Apart from this requirement, the only other one that is obvious is the preservation of gauge symmetries. 
In the next subsection, we will see that a natural truncation scheme does seem to exist for the Pachner moves already at the level of the constrained loop identity and it preserves the above requirements.

In order to successfully coarse grain the non-local operators in the Pachner moves, we need to understand and deal with their  divergences, which we study in section \ref{sec:divergences}. It is important to appreciate that the divergences in the 5--1 move  are welcomed in Spin Foam models, since  ultimately we would like to understand them as coming from a left over of diffeomorphism  invariance. 
More precisely they should represent  a  translation symmetry of the internal vertex, that should be removed by some appropriately defined Fadeev-Popov gauge-fixing procedure, similarly to what has been achieved in 3d \cite{Freidel:2002dw}. 
It is natural in our context, to control  the presence of potential divergences by introducing  parameters like $\eta$  determining the strength of the  face weights and absorb the divergence into them (and perhaps into the other coupling constants already present, like $\rho$, the Newton constant $G_N$ together with a cosmological constant $\Lambda$, or even the $\tau$ parameters that we treated so far as book-keeping parameters).

After having understood the truncation and divergences, we  have to perform the renormalization step, which  entails absorbing the relevant part of the operator into the definition of the constrained projector. This  define for us a flow $P_\rho \rightarrow \tilde{P}_\rho$ in the space of constrained propagators. We  leave the detailed study of this last step to future investigations.

It is interesting to note that besides the truncation we perform, we could also study the effect of the insertion of the mixing terms which are subleading contributions. For the 5--1 Pachner move, we could in principle integrate out the non-local operator. Once the divergence is removed the effect of the mixing terms leaves us with an amplitude that is more involved than a simple  4-simplex graph. It corresponds to a more general structure of all strands being mixed in the middle of the vertex, giving rise to higher-valent intertwiners. This suggests that the additional contributions  would  allow the theory to flow to higher-valent vertex amplitudes. 

\subsection{Truncation of the loop identity}\label{sec:looptruncation}
In this section we  introduce a truncation scheme for the Pachner moves, that will ultimately allow us to define the renormalization flow. The expression for the 5--1 Pachner move in Eq.(\ref{eq:51tau}) is very compact, but requires us to perform many extra integrations over spinors, each of which in itself is straightforward, but the resulting answer is rather long.  To simplify the discussion, let us drop the dependence on the external spinors, which corresponds to setting the boundary spins to zero. As we will discuss in the next section, this selects out the most divergent part of the Pachner move. With this simplification, we can use the techniques introduced in \cite{Freidel:2012ji} and perform all of the spinor integrals immediately, with the result being again the inverse of a determinant. The power series expansion is however very large, depending on the order of $\mathcal{O}(150)$ sums over integers. Nonetheless, its structure is  simple -- it is a large summation of a product of six functions $N(J,A,B,J',\rho)$ defined in the constrained loop identity in Eq.(\ref{fullloop}). Thus, instead of trying to truncate the whole 5--1 Pachner move, which is a daunting task, first we can simplify the problem by just studying the properties of a single constrained loop identity -- a much more tractable problem.

Let us then take a look at the constrained loop identity. Recall that in section \ref{sec:constrainedloolidentity} , after we integrated out the loop,  additional mixing terms appeared, which were not there in BF theory and which seem to be non-geometrical. We can analyze Eq.(\ref{fullloop}) to see how much these extra terms contribute to the amplitude. The mixing terms are characterized by their total spin $J'$ in Eq.(\ref{fullloop}). The larger $J'$ is, the higher order polynomials of complicated mixings appear. The mixed strands disappear only when $J'=0$. 

Let us look at the large spin behaviour first. As an illustration, the Fig. \ref{fig:logplots} presents logarithmic plots for the coefficient function $N(J,A,B,J',\rho)$ when $J,A,B$ are universally large (as an example, we set them to 100, but it can be any large enough number), while $J'$ picks small values $J'\in \{0,1,2\}$. 
\begin{figure} [h]
\centering
\begin{minipage}{7cm}
\includegraphics[width=1\textwidth]{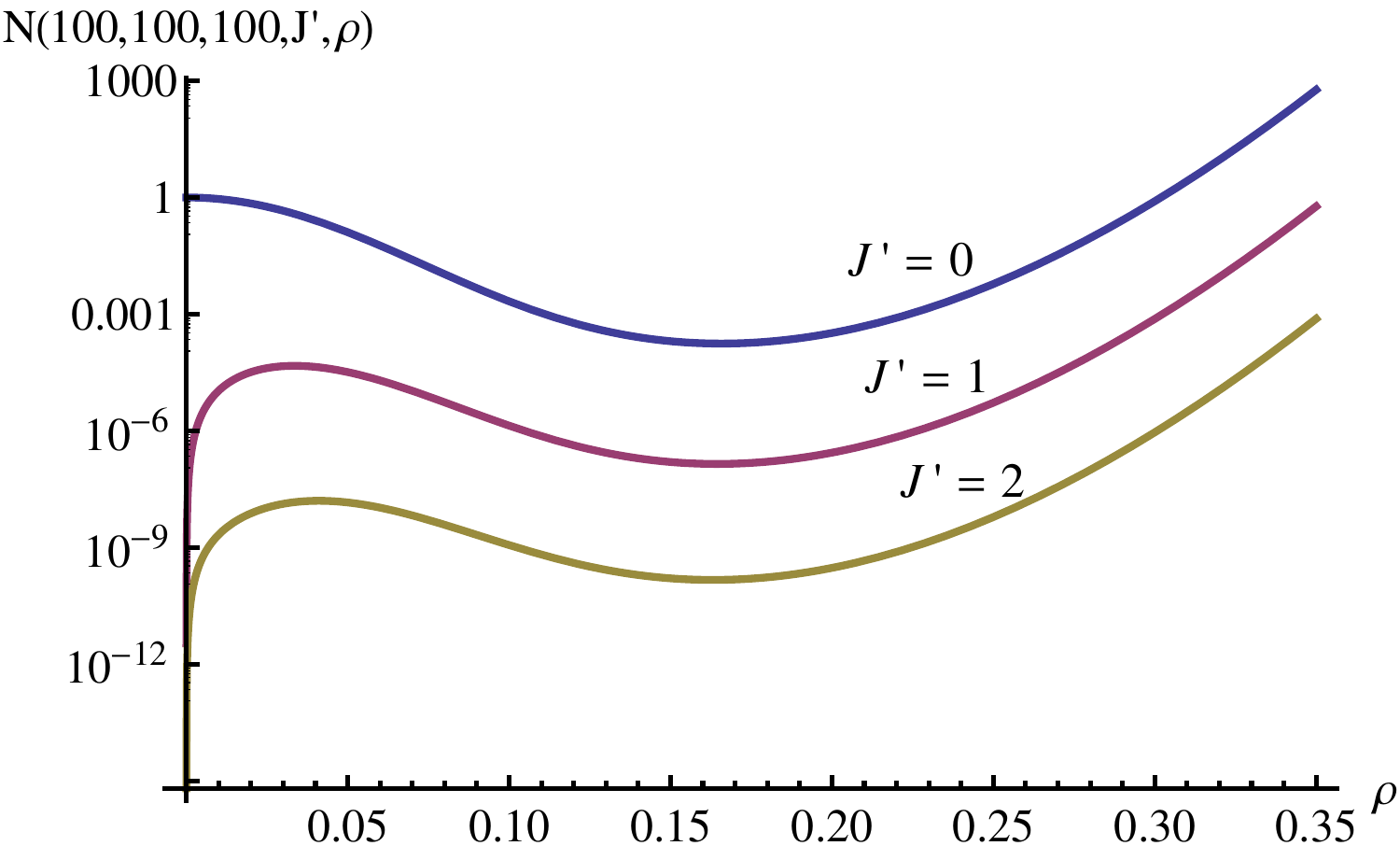}
\end{minipage}
\ \ \ \ \ \ \ \ \ \ 
\begin{minipage}{7cm}
\includegraphics[width=1\textwidth]{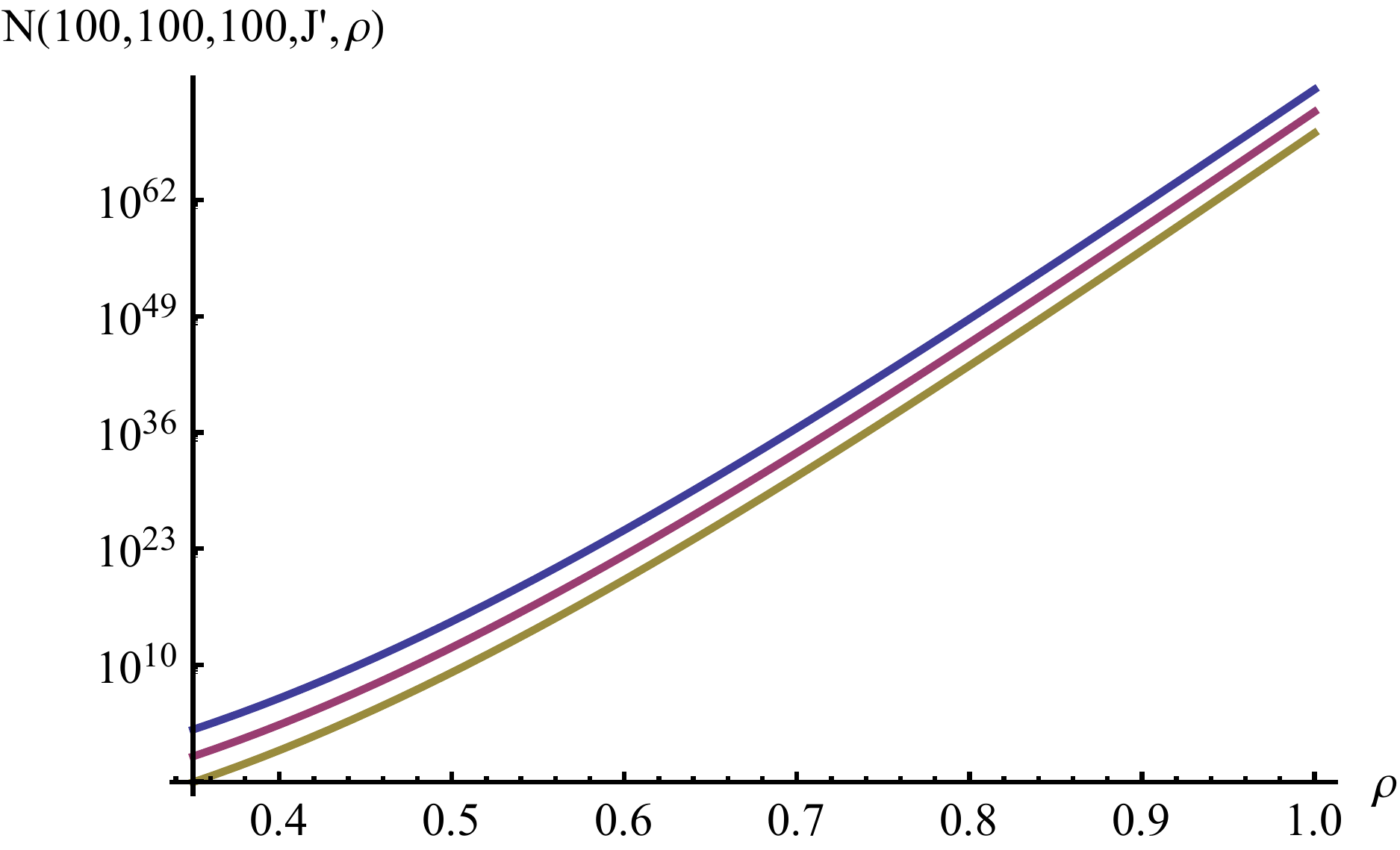}
\end{minipage}
\caption{Logarithmic plots for the coefficient  $N$ when $J=A=B=100$ and face weight scaling is  $\eta=1$. The  blue, red, yellow lines correspond to $J' = 0,1,2$ respectively. \label{fig:logplots}}
\end{figure}
We can observe  that for any $\rho \in [0,1]$, $N(J'=0)$ is at least more than $J$ times larger than the next order $N(J'=1)$, which is also approximately more than $J$ times larger than the next order $N(J'=2)$. Actually, we can plot the ratio between the coefficient of the first term $N(J'=0)$ and the sum of a few subleading coefficients  $\small{\sum}_{J'=1}^{10}  N(J')$ in Fig.\ref{ratio} as a function of $\rho$. When $\rho=0$, the expression converges to the behaviour of BF theory, $N(J,A,B,0,0) =1$ and $N(J,A,B,J',0) =0$ for any $J' \neq 0$, as desired. For $\rho\neq 0$, we get a smooth deformation of the BF result, with a similar behaviour, in the sense that the constrained loop identity is dominated by the $J'=0$ term. The smaller the $\rho$, the more dominating the unmixed term is. The same behavior holds when spins are large but not uniformally large -- the constrained loop identity is always dominated by the terms of $J'=0 $. 

\begin{figure} [h]
\centering
\begin{minipage}{7.5 cm}
\includegraphics[width=1\textwidth] {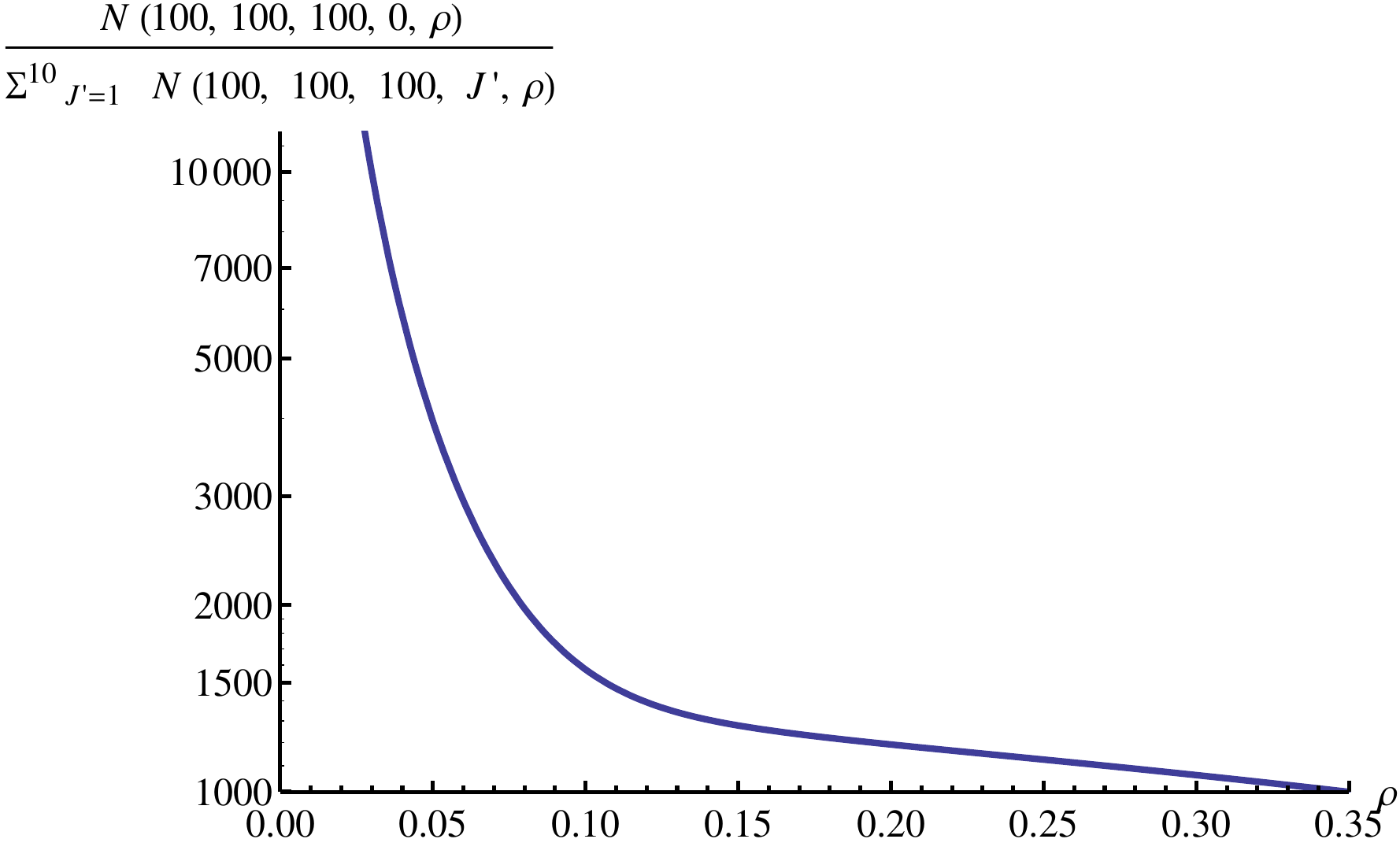}
\end{minipage}
\ \ \ \ \ \ \ \ \ \ 
\begin{minipage}{7.5 cm}
\includegraphics[width=1\textwidth]{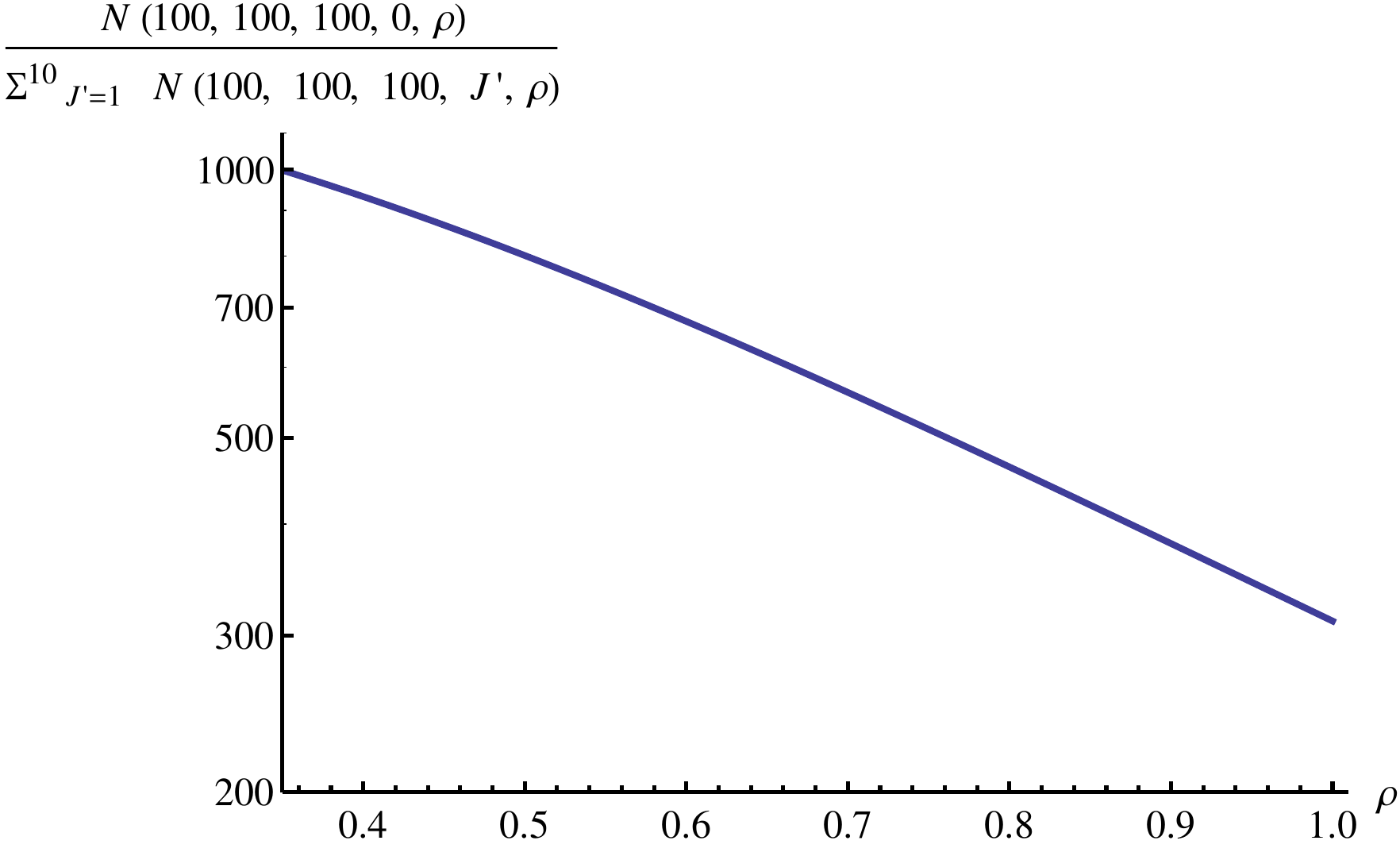}
\end{minipage}
\caption{Plots of the ratio between $N(J'=0)$ and the sum of subleading coefficients  $\small{\sum}_{J'=1}^{10}  N(J')$ with $J=A=B=100$ and face weight scaling $\eta=1$.  \label{ratio} }
\end{figure}

What about the case when the spins are not large? The plots in Fig.\ref{smallspin} illustrate that actually the $J'=0 $ terms are still dominating even when the spins $J, A, B$ are small. This means that the dominance of $J'=0$ terms surprisingly holds not only for large spins, but also for the small ones, even though the suppression is less pronounced compared with large spins cases. For small spins with the value of $\rho \rightarrow 1$, the dominance of $J'=0$ term is the least pronounced but still valid. 

\begin{figure} [h]
\centering
\begin{minipage}{6.8cm}
\includegraphics[width=1\textwidth]{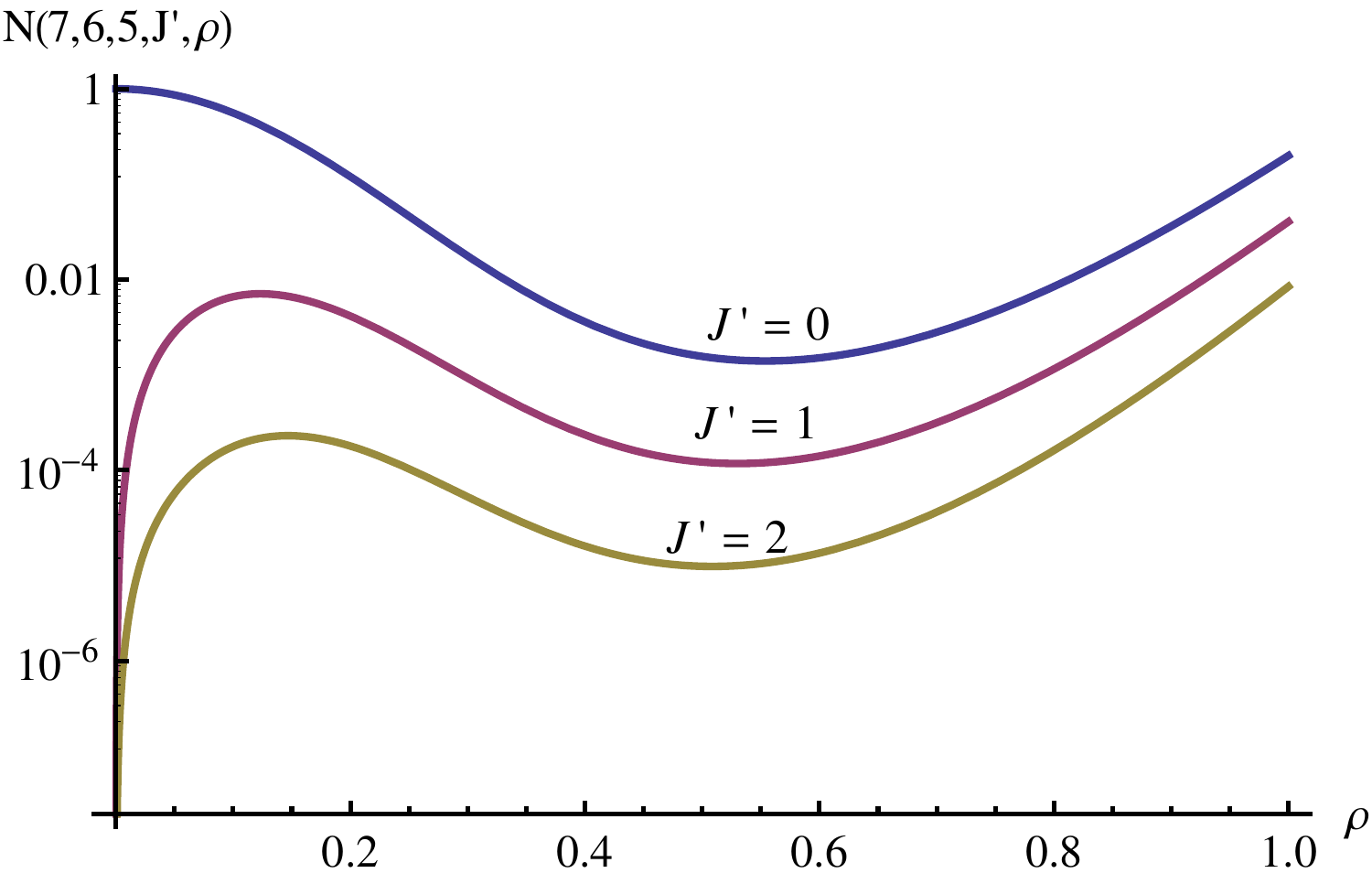}
\end{minipage}
\ \ \ \ \ \ \ \ \ \ 
\begin{minipage}{7cm}
\includegraphics[width=1\textwidth]{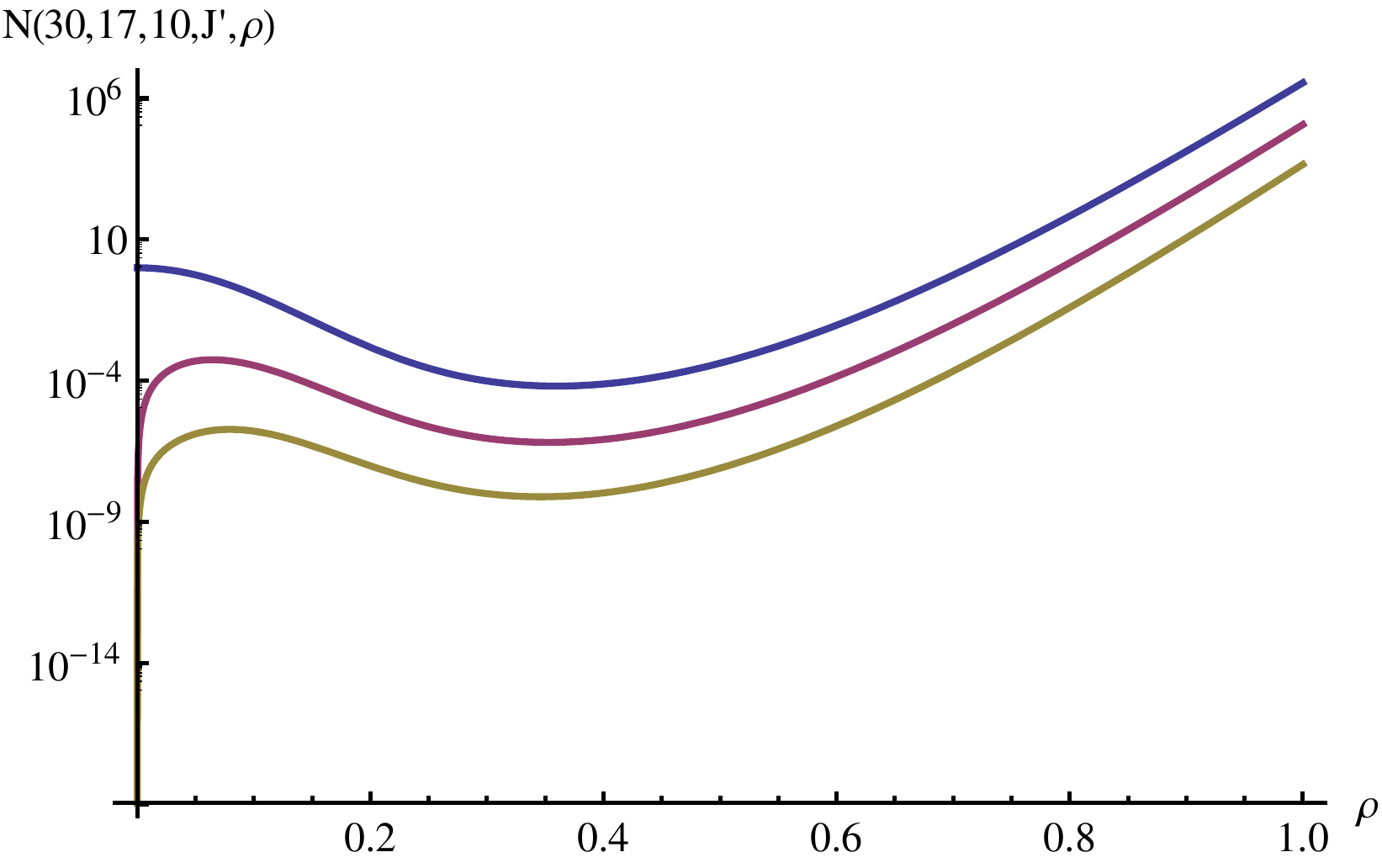}
\end{minipage}
\caption{Logrithmic plots for the coefficient $N$ when  $J=7, A=6, B=5$, and  $J=30, A=17, B=10$. The blue, red, yellow lines correspond to $J' = 0,1,2$ respectively.
\label{smallspin}}
\end{figure}

All of these results so far have been for the choice of face weight corresponding to $\eta = 1$. One could worry that perhaps the dominance of $J'=0$ fails for bigger face weights. We find however that the increasing of the face weight $\eta$ makes the effect stronger, as it is illustrated for small spins in the Fig.\ref{fwratio}. 

\begin{figure} [h]
\centering
\begin{minipage}{7.5 cm}
\includegraphics[width=1\textwidth]{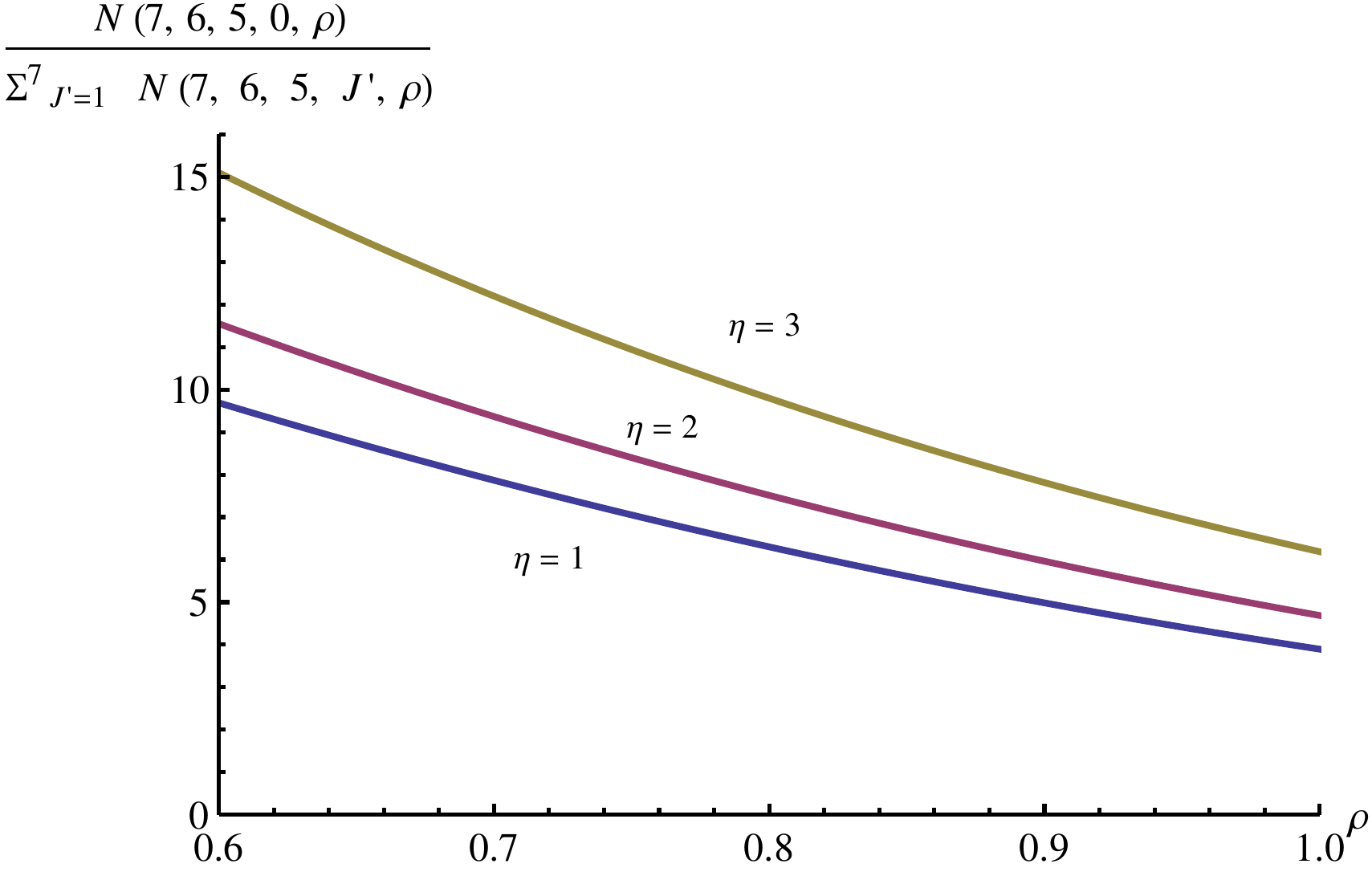}
\end{minipage}
\ \ \ \ \ \ \ \ \ \ 
\begin{minipage}{7.5 cm}
\includegraphics[width=1\textwidth]{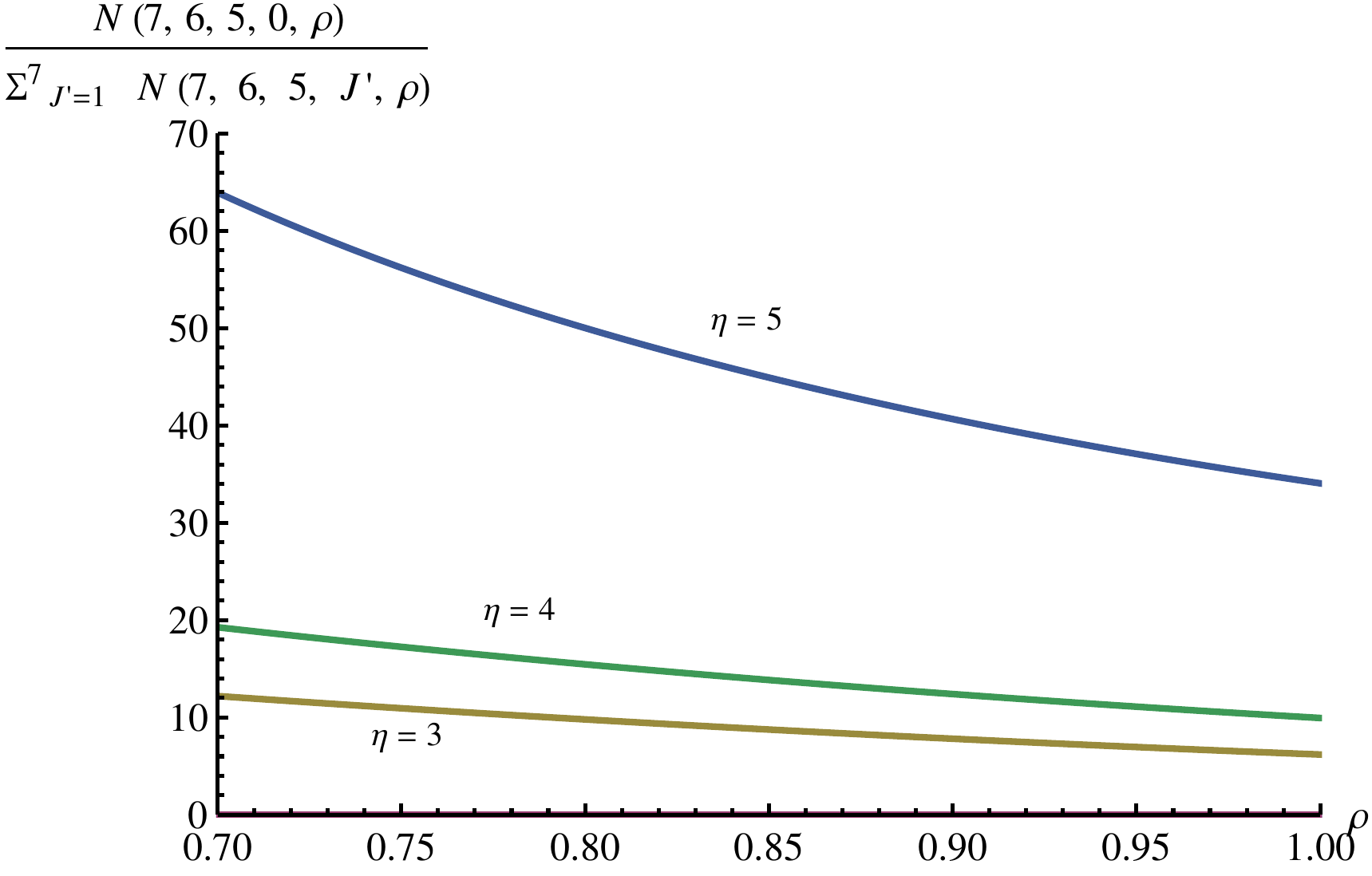}
\end{minipage}
\caption{Plots of the dependence of the ratio between $N(J'=0)$ and the sum of subleading coefficients with face weight $\eta$ when $J=7,A=6,B=5$ and $\rho$ is close to $1$.  \label{fwratio} }
\end{figure}

We thus  propose a natural truncation of keeping just the $J'=0$ terms and throwing away all the mixing terms $J' \neq 0$:
\be
\begin{split}
N\left(J, A,B,  J',\rho\right) &\approx N\left(J, A,B, 0, \rho\right)   \\
&=    \frac{(J\!+\!1)^{\eta-1} }{ (1+\rho^2)^{A+B +7J} }   F^2_\rho\left(J\right)  F^2_\rho\left( \frac{A+J}{2}\right) F^2_\rho\left(\frac{B+J}{2}\right).
\end{split}
\ee
This truncation dramatically simplified the expression of $N $, making all the mixing and non-geometrical terms disapear.  The truncation scheme can be graphically expressed as 
\be
\raisebox{-8mm}{\includegraphics[keepaspectratio = true, scale = 0.33] {loopidentity.pdf}}=\sum_{A,B,J}N(J'=0) \!\!\!\!\raisebox{-8mm}{\includegraphics[keepaspectratio = true, scale = 0.3] {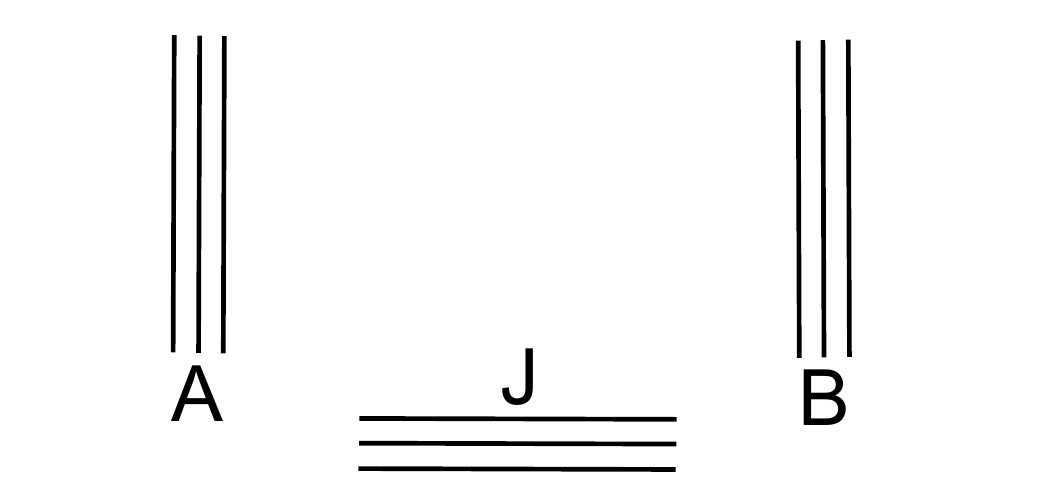}}\!\!\!\!\!\!+ \!\!\sum_{A,B,J,J'}N(J'\neq 0)\!\!\!\!\raisebox{-8mm}{\includegraphics[keepaspectratio = true, scale = 0.3] {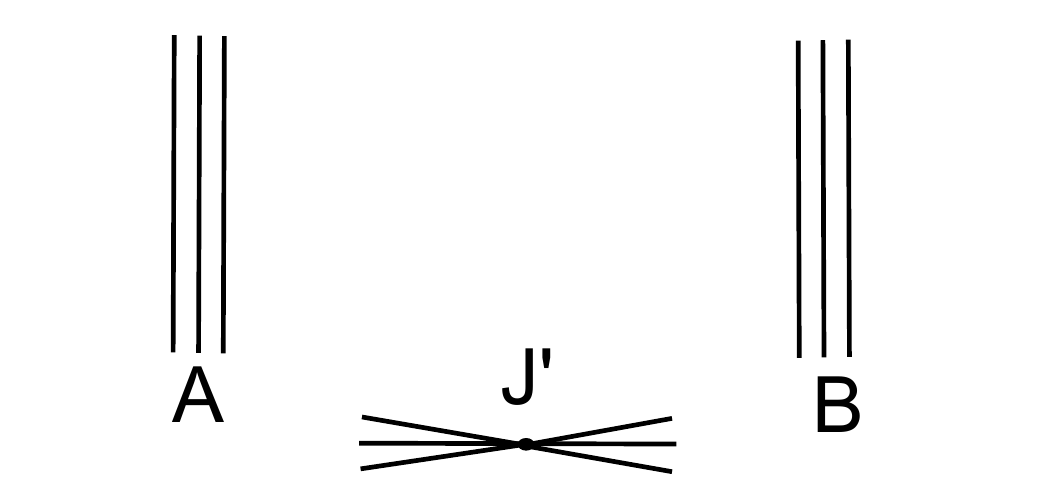}}\!\!\!\!\!\!.
\ee

Note that the left over strands will have to be integrated over in a calculation of a Spin Foam amplitude. And the contractions of these spinors  give additional factors of $1/(1+\rho^2)$, see the Appendix \ref{app_gauss}. These factors will lead to  additional suppression of amplitude for bigger $\rho$, making it more convergent. This however does not spoil the truncation. We will see the effect of this suppression in calculating the degree of divergence of Pachner moves in the coming section.

\subsection{Counting the degree of divergence}\label{sec:divergences}
Before we write out the truncated Pachner moves, let us first calculate how divergent the 5--1 move is as a function of face weight. The question of divergence is closely related to the one of symmetries. Indeed it is expected that in a physical model divergences of the partition function should be related to symmetries. This  has been only shown exactly in 3 dimensions \cite{Freidel:2002dw} so far.

 In a model describing 4d gravity we would expect the 5--1 move to be invariant up to a divergence coming from the freedom of translation of the added vertex inside the 4-simplex. Hence we would expect that for gravity the divergence should scale as $(length)^4$.  Of course at this stage this is a naive guess but it would be harder to argue for a diffeomorphism symmetry otherwise. In the case of translational symmetry this divergence is due to the possibility of moving the internal vertex outside the geometrical simplex.
 It can be tamed by incorporating orientation dependent factors as shown in
 3 dimensions  \cite{Christodoulou:2012af}.
 The Spin Foam models at our disposal  do not yet incorporate orientation dependence so it is unlikely that this phenomena can be used in our context.

The easiest way to count the degree of divergence is to set the external spins to 0, so that only the internal loops contribute. The calculation for the mixed 4 loops in the 5--1 move is rather involved, but thanks to the natural truncation  discussed in the previous section we can do the calculation. Let us however first try to estimate the degree of divergence arising from a single loop in the 4--2 move. It is important to stress here that in this case we do not need to do the truncation, as  setting the external spins to 0 corresponds to dropping all the products of spinors that contain the external ones in Eq. (\ref{eq:42move}), and hence naturally makes all the mixing terms drop out\footnote{This is another reason for seeing that the mixing terms might not be important.}.  This allows us to write the amplitude for the single loop in 4--2 move as
\be\label{eq:420}
\mathcal{A}^{4-2}_{\tau}(0)\!
=\!\!\int\!\!  \rd \mu_\rho(w) e^{\frac{\tau^{AB}_N\tau^{AD}_N\tau^{BD}_N\tau '}{(1+\rho^2)^2}\bra w|w\ket} \!=\! \frac{1}{\left(1-\frac{\tau^{AB}_N\tau^{AD}_N\tau^{BD}_N\tau '}{(1+\rho^2)^3}\right)^2}\!=\!\sum_j (2j+1)\!\left(\frac{\tau^{AB}_N\tau^{AD}_N\tau^{BD}_N\tau '}{(1+\rho^2)^3}\right)^{\!\!2j},
\ee
where, recall we have labeled the three loops, on which we applied the loop identity, by $\{AB, AD, BD\}$.    Using the homogeneity map defined in Eq. (\ref{eq:pachnerhom}), we can reintroduce the factors of face weight and the functions of $\rho$ from loop identities. Regularizing the expression by putting a cut-off of $\Lambda$ on spins, we get that a single loop in the 4--2 move is given by
\be
D_{4-2}(\Lambda, \rho, \eta)=\sum_{j=0}^\Lambda \frac{(2j+1)^{4\eta - 2}}{(1+\rho^2)^{24\times 2j}}  [{}_2F_1(-2j-1,-2j;2;\rho^4)]^{12}.
\ee
It is easy to see that, since ${}_2F_1(-2j-1,-2j;2;0)=1$, for $\rho=0$ and $\eta=1$ we recover the SU(2) BF theory's divergence of a delta function $\delta_{SU(2)}(\id)$. It may seem surprising that the exact result is this simple. For the purpose of analysing the divergence let us write $D_{4-2}(\Lambda, \rho, \eta)=\sum_{j=0}^\Lambda X_{4-2}(2j, \rho, \eta)$.
\begin{figure} [h]
\centering
\parbox{6cm}{
\includegraphics[width=0.4\textwidth]{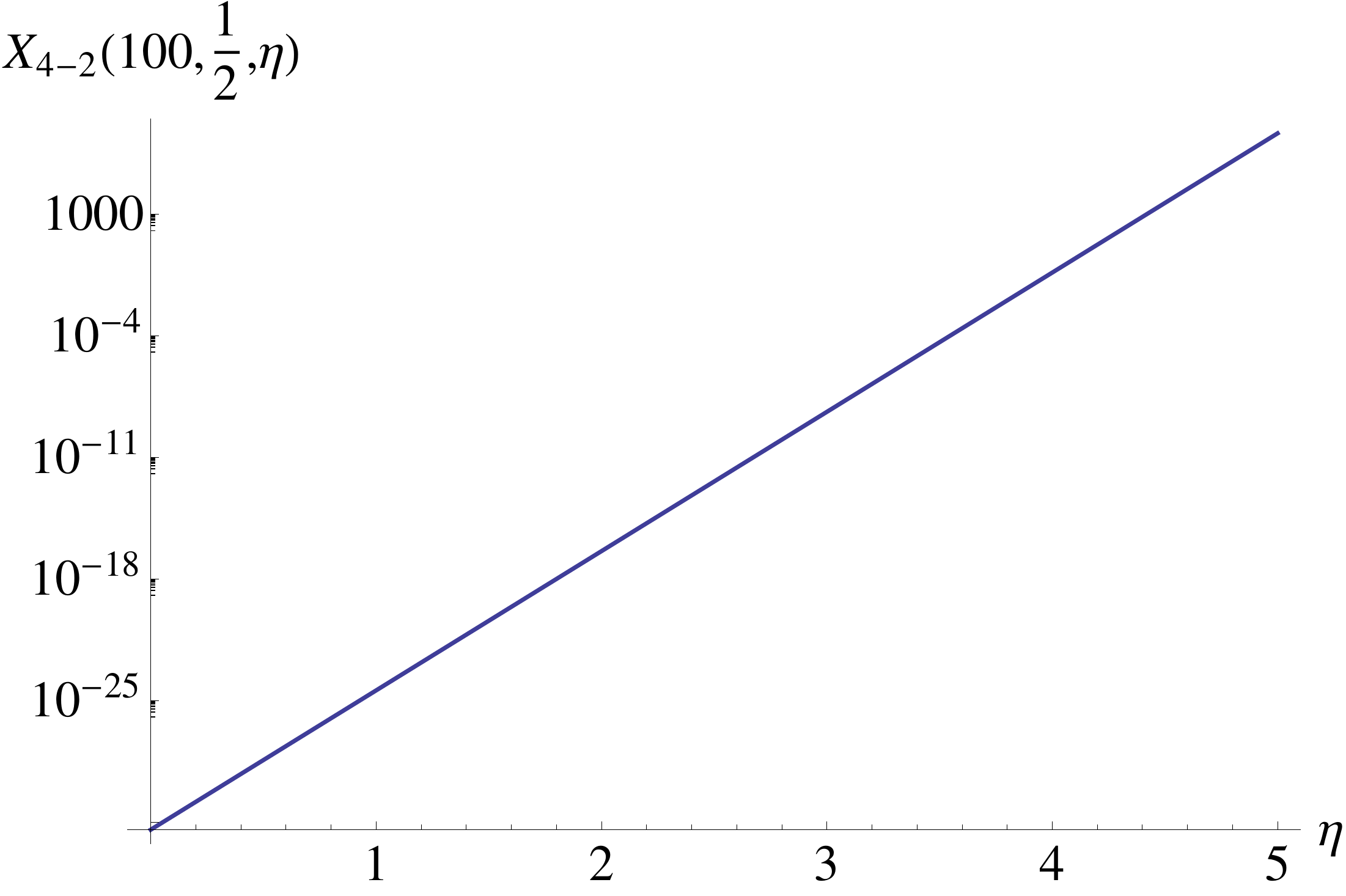}
}
\qquad
\begin{minipage}{6cm}
\includegraphics[width=1.1\textwidth]{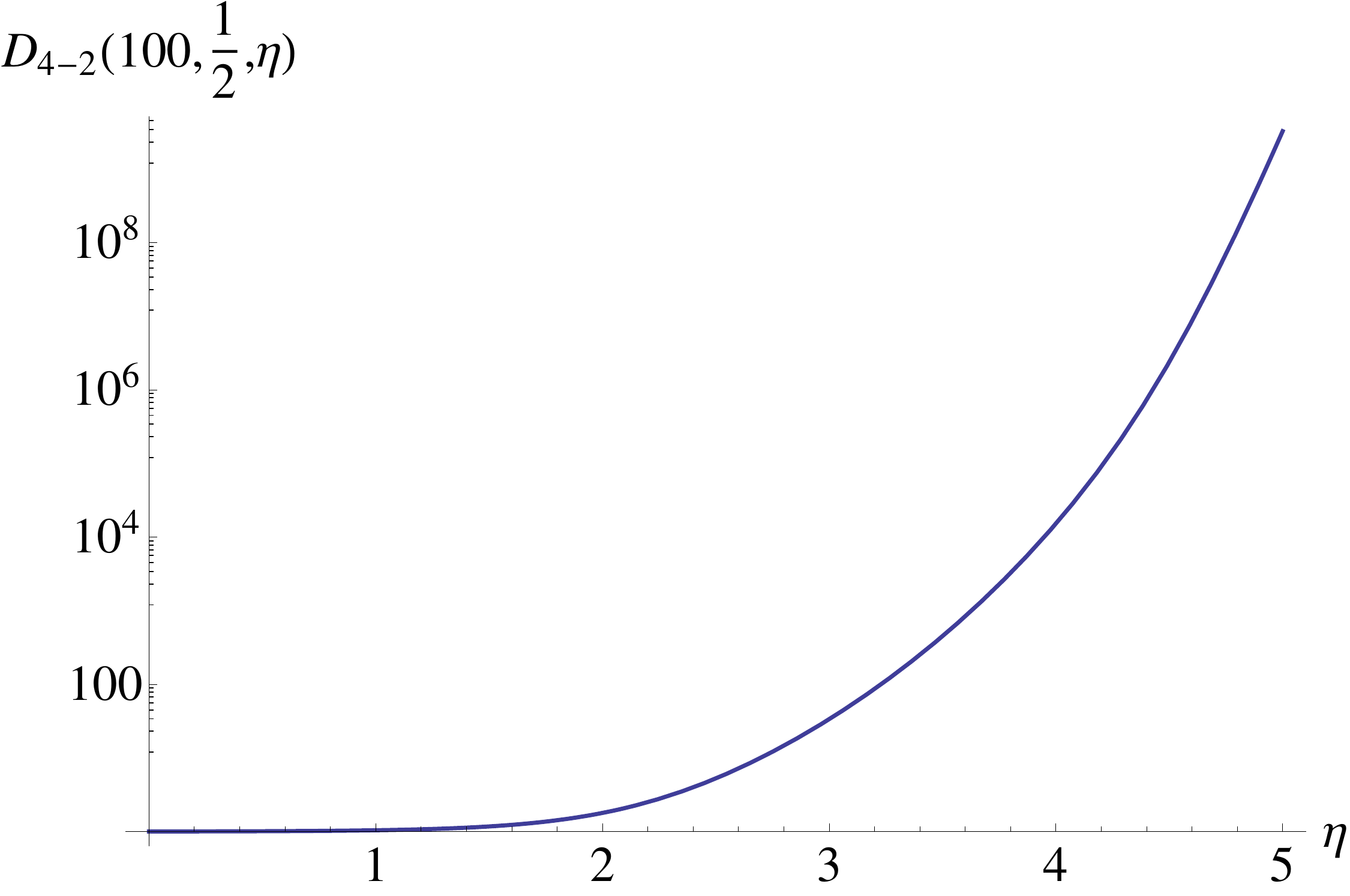}
\end{minipage}
\caption{$X_{4-2}$ has an obvious dependence on $\eta$ on a logarithmic plot. The sum diverges a lot faster with increasing $\eta$.\label{fig:X42eta}}
\end{figure}

Let us start with analysing the behaviour of $X_{4-2}$ and $D_{4-2}$ as a function of $\eta$. This is shown in Fig. \ref{fig:X42eta}. Quite obviously, at fixed spin, both $X_{4-2}$ and $D_{4-2}$ are diverging with increasing $\eta$. We get the opposite behaviour for increasing $\rho$ -- both  $X_{4-2}$ and $D_{4-2}$ are heavily suppressed for increasing $\rho$, as can be seen in Fig. \ref{fig:X42rho1}. This is the effect of the additional suppression by factors of $1/(1+\rho^2)$ that we mentioned in the previous section. We can thus expect intresting competition between $\rho$ and $\eta$ in concerning divergences.

\begin{figure} [h]
\centering
\parbox{6cm}{
\includegraphics[width=0.4\textwidth]{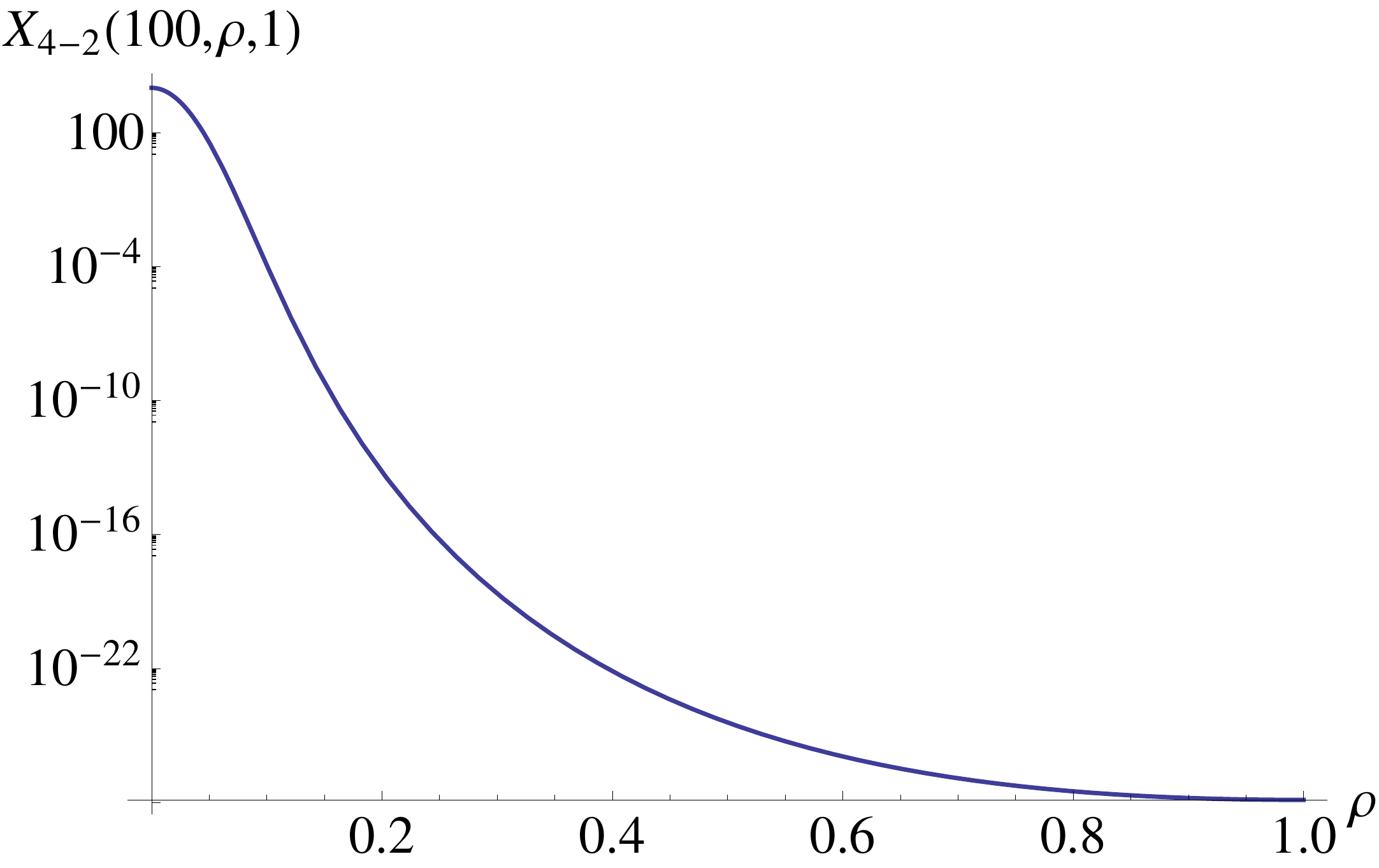}
}
\qquad
\begin{minipage}{6cm}
\includegraphics[width=1.1\textwidth]{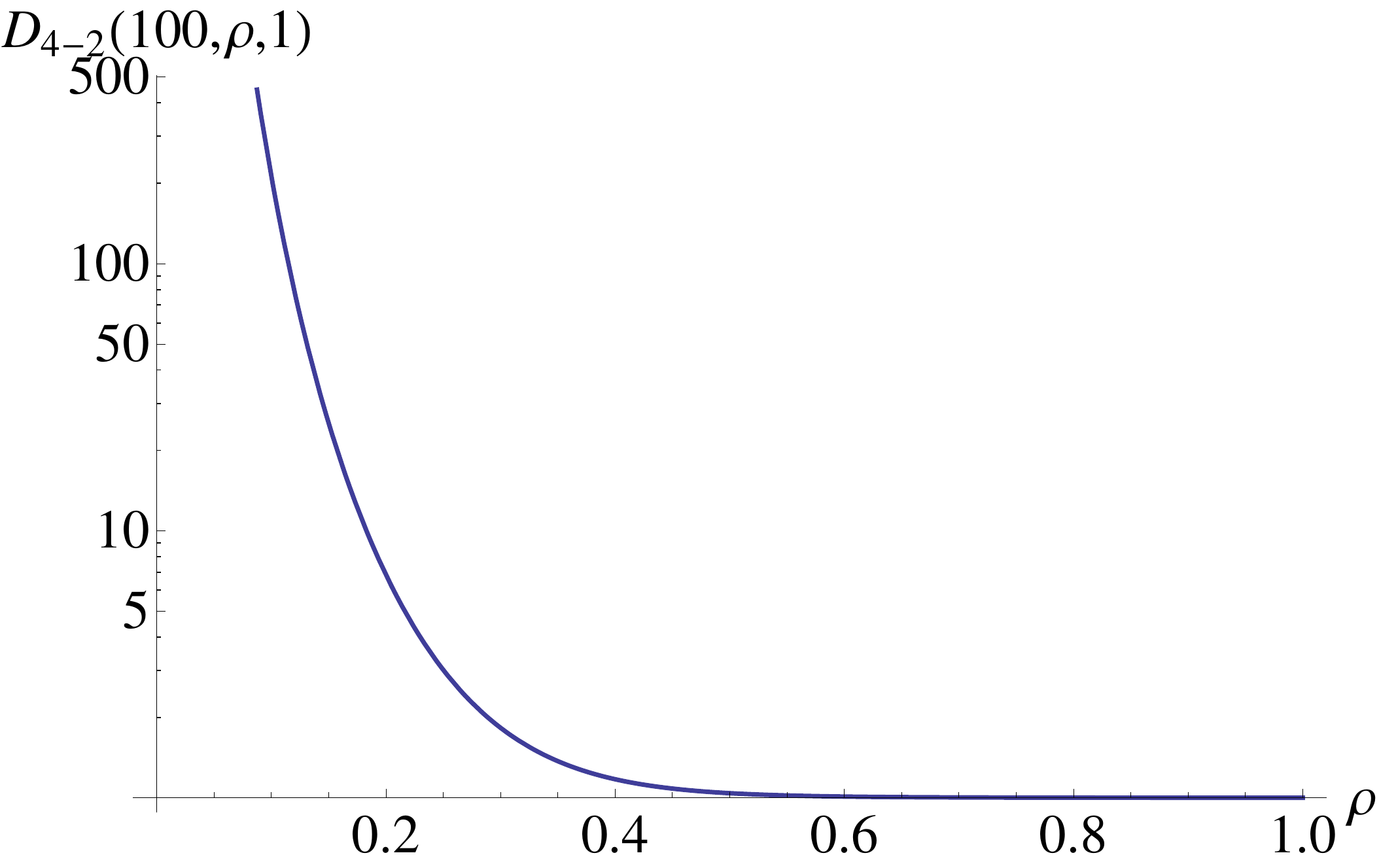}
\end{minipage}
\caption{$X_{4-2}$ gets suppressed with increasing $\rho$, with $\rho=0$ being the limit of SU(2) BF divergence. The sum is even more suppressed with increasing $\rho$.\label{fig:X42rho1}}
\end{figure}

Fixing $\rho$ to a specific value, we can analyze now the divergence of $X_{4-2}$ for different values of $\eta$, as a function of spin. We numerically find that $X_{4-2}$ is suppressed with increasing spin, but at $\eta = 5$ there is a transition to divergence, see Fig. \ref{fig:X42j}. This seems to be independent of the value of $\rho$, though the exact degree of divergence depends heavily on its value. 

\begin{figure} [h]
\centering
\parbox{5cm}{\vspace{-0.4cm}
\includegraphics[width=0.3\textwidth]{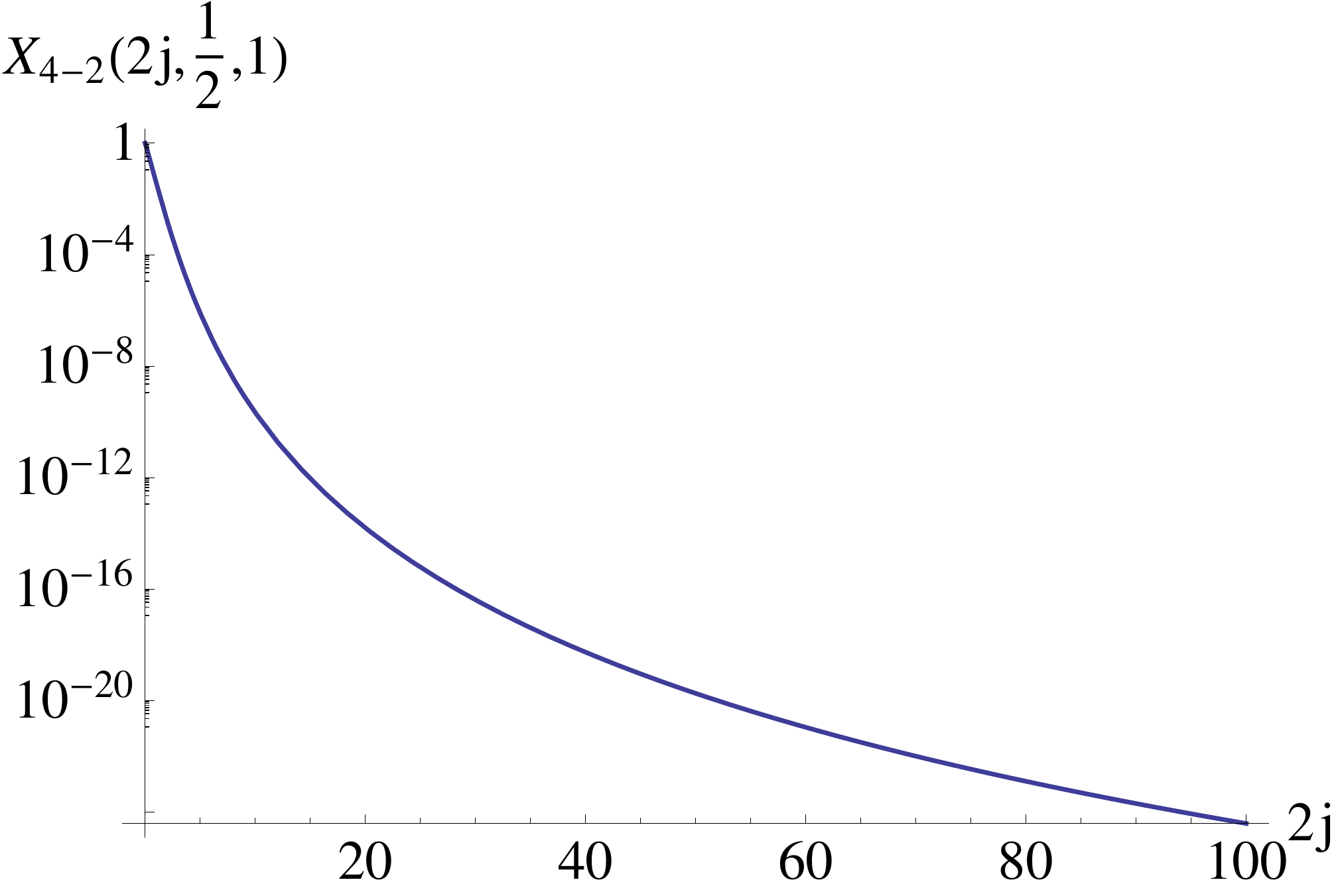}
}
\quad 
\begin{minipage}{5cm}
\includegraphics[width=1\textwidth]{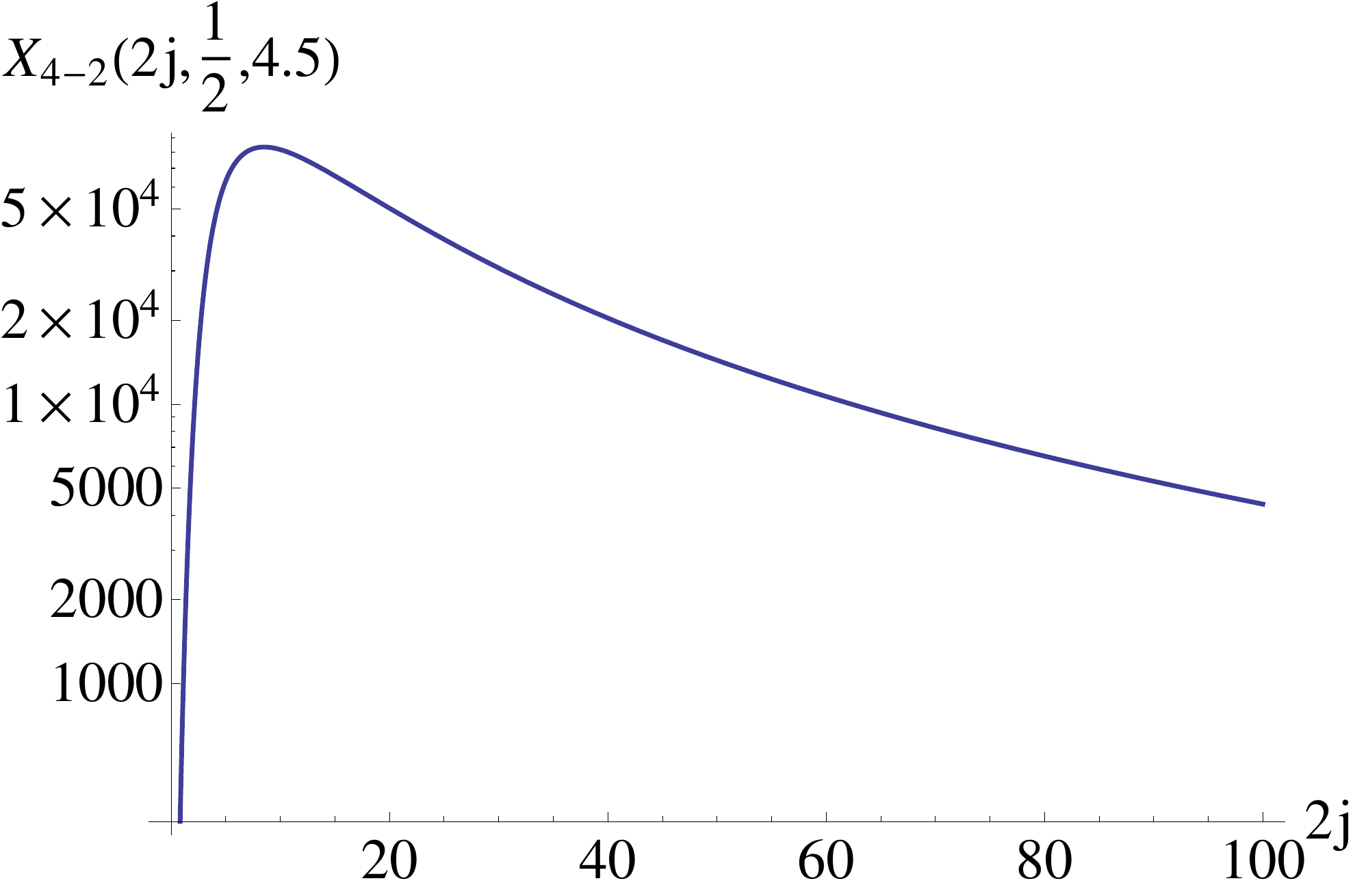}
\label{fig:X42j45}
\end{minipage}
\quad
\begin{minipage}{5cm}
\includegraphics[width=1\textwidth]{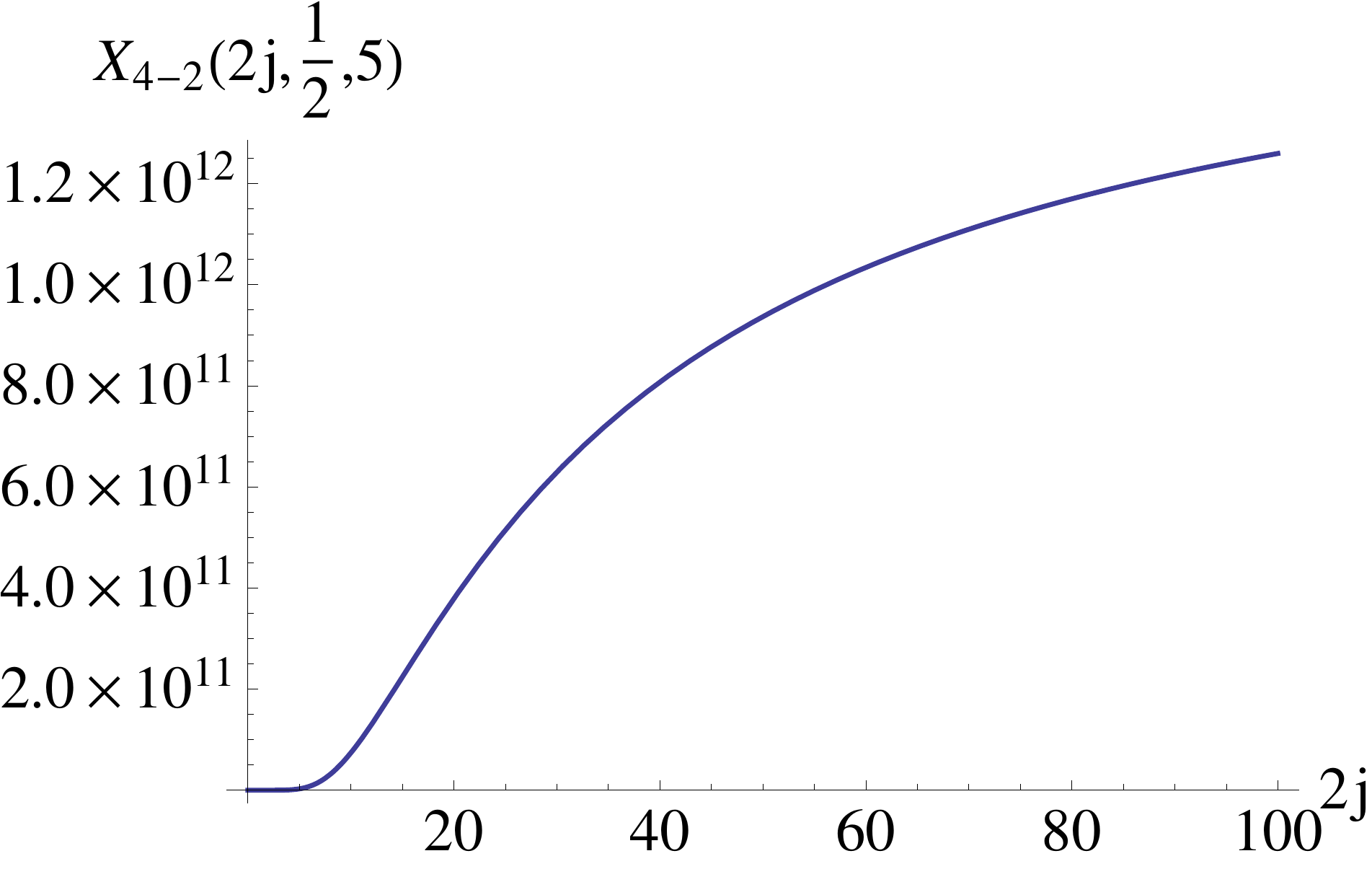}
\label{fig:X42j5}
\end{minipage}
\caption{For small values of $\eta$, $X_{4-2}$ gets suppressed with increasing spin. There seems to be a transition in the behaviour for $\eta=5$.\label{fig:X42j}}
\end{figure}

Let us now move onto the calculation of the degree of divergence for the 5--1 Pachner move. 
Truncating the loop identities in the 5--1 move allows us to perform the Gaussian integrals easily and write the four remaining loops as
\begin{equation}\label{eq:51trunc0}
\!\!\!\!\!\!\mathcal{A}^{5-1}_{\tau \ \text{truncated}}(0)\!
=\!\frac{1}{\left(1\!+\!\frac{\tau^{AC}_N\tau^{AD}_N\tau^{CD}_N\tau_y'}{(1+\rho^2)^3} \right)^{\!\!2}\!\left(1\!+\!\frac{\tau^{AB}_N\tau^{AD}_N\tau^{BD}_N\tau_b'}{(1+\rho^2)^3} \right)^{\!\!2}\!\left(1\!+\!\frac{\tau^{AB}_N\tau^{AC}_N\tau^{BC}_N\tau_g'}{(1+\rho^2)^3} \right)^{\!\!2}\!\left(1\!+\!\frac{\tau^{BC}_N\tau^{BD}_N\tau^{CD}_N\tau_r'}{(1+\rho^2)^3} \right)^{\!\!2}},
\end{equation}
where, similarly as in the 4--2 move, the six loops that we have integrated out were labeled by the set $\{AB, AC, AD, BC, BD, CD\}$ and the left over four loops are labeled by $\{y, g, b, r\}$.
Comparing this to the 4--2 move expression (\ref{eq:420}), we see that clearly we have 4 loops, that are not connected by any strands, but which are nonetheless coupled by sharing the $\tau$s, and hence functions of spin and $\rho$. We can now expand this in a power series for the fours spins $j_y, j_g, j_b, j_r$ and reintroduce the factors of the hypergeometric functions and face weights by using the homogeneity map from  Eq. (\ref{eq:pachnerhom}). 
Letting $a,b,c\in\{y, g, b, r\}$ we can write the full expression for the degree of divergence as
\begin{equation}
\begin{split}
\!\!\!\!\!\!\!\!D_{5-1}\!=\!\!\!\!\!\!\sum_{j_y, j_p, j_b, j_r}\! \frac{\prod_{a}(2j_a\!+\!1)^{\eta+1}}{(1\!+\!\rho^2)^{24\sum_a \!2j_a}} \! \left(\prod_{a<b} F_\rho(2j_a\!+\!2j_b)^2(2j_a\!+\!2j_b+1)^{\eta-1}  \right) \!\left(\prod_{a<b<c}F_\rho(2j_a\!+\!2j_b\!+\!2j_c)^2\right)\!,
\end{split}
\end{equation}
where, recall, we have previously defined $F_\rho(J) = {}_2F_1(-J-1,-J;2;\rho^4)$ for simplification.  Let us define $D_{5-1}=\sum_{\{j\}}X_{5-1}(j)$.

This general expression is rather long when expanded, but numerically it turns out that it is peaked around all the spins being equal.  Hence for all spins equal to $j$, we have a nice simplification
\begin{equation}
X_{5-1}\left(\{j\}\right)=\frac{(2j+1)^{4(\eta+1)}(4j+1)^{6(\eta-1)}}{(1+\rho^2)^{96\times 2j}}F_\rho(4j)^{12}F_\rho(6j)^{8}.
\end{equation}
Again, it is easy to see that for $\rho = 0$ and $\eta=1$ we recover the result of $\delta_{SU(2)}(\id)^4$ for the SU(2) BF theory. We can now analyze the behaviour of $X_{5-1}$ as a function of $\rho$. The results are qualitatively similar to those of 4--2 move, in the sense that the expression is suppressed for increasing $\rho$, see Fig. \ref{fig:X51rho}.

\begin{figure}[h]
	\centering
		\includegraphics[width=0.4\textwidth]{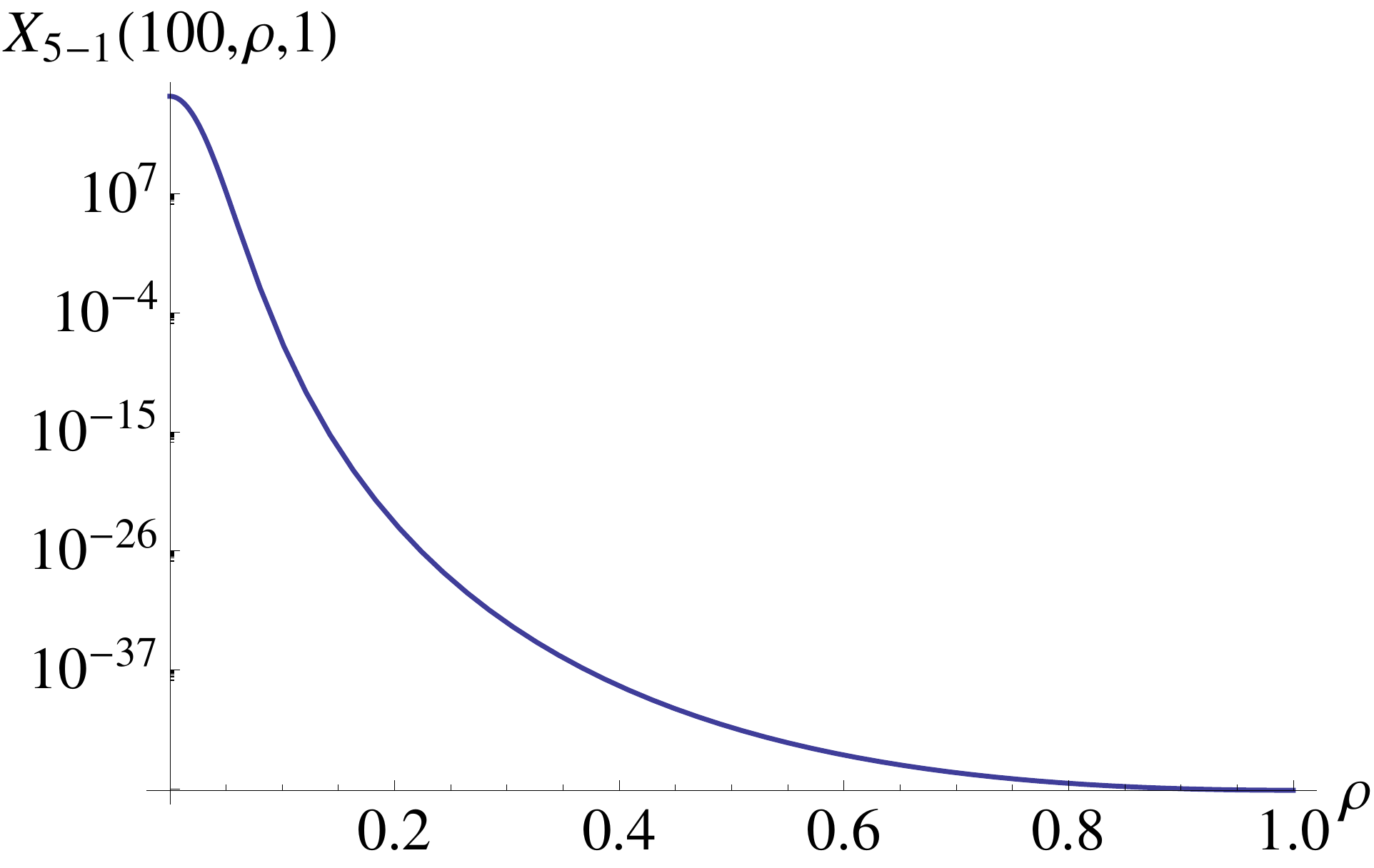}
	\caption{$X_{5-1}$ is suppressed with increasing $\rho$ for all values of $\eta$ and spin. This plot is evaluated at $2j=100$ and $\eta =1$.}
		\label{fig:X51rho}
\end{figure}

Quite obviously $X_{5-1}$ has similar behaviour to $X_{4-2}$ as a function of $\eta$, so we will not present plots for this. The interesting difference is in the transition from convergence to divergence of each $X_{5-1}$ term in the summation. The point of transition numerically seems to be around $\eta = 3.2$, see Fig. \ref{fig:X51j}.
\begin{figure} [h]
\centering
\parbox{5cm}{\vspace{-0.4cm}
\includegraphics[width=0.31\textwidth]{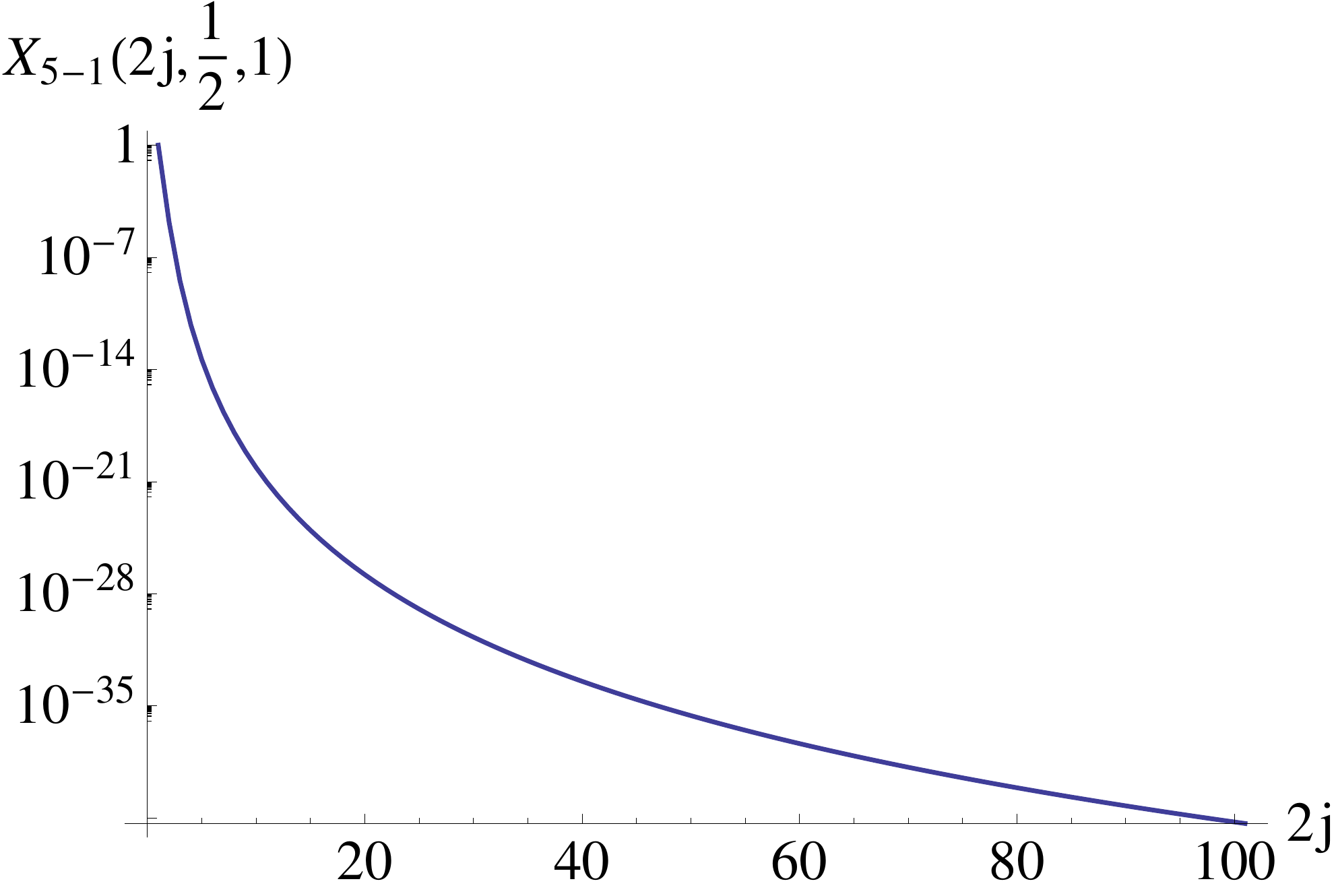}
}
\quad 
\begin{minipage}{5cm}
\includegraphics[width=1\textwidth]{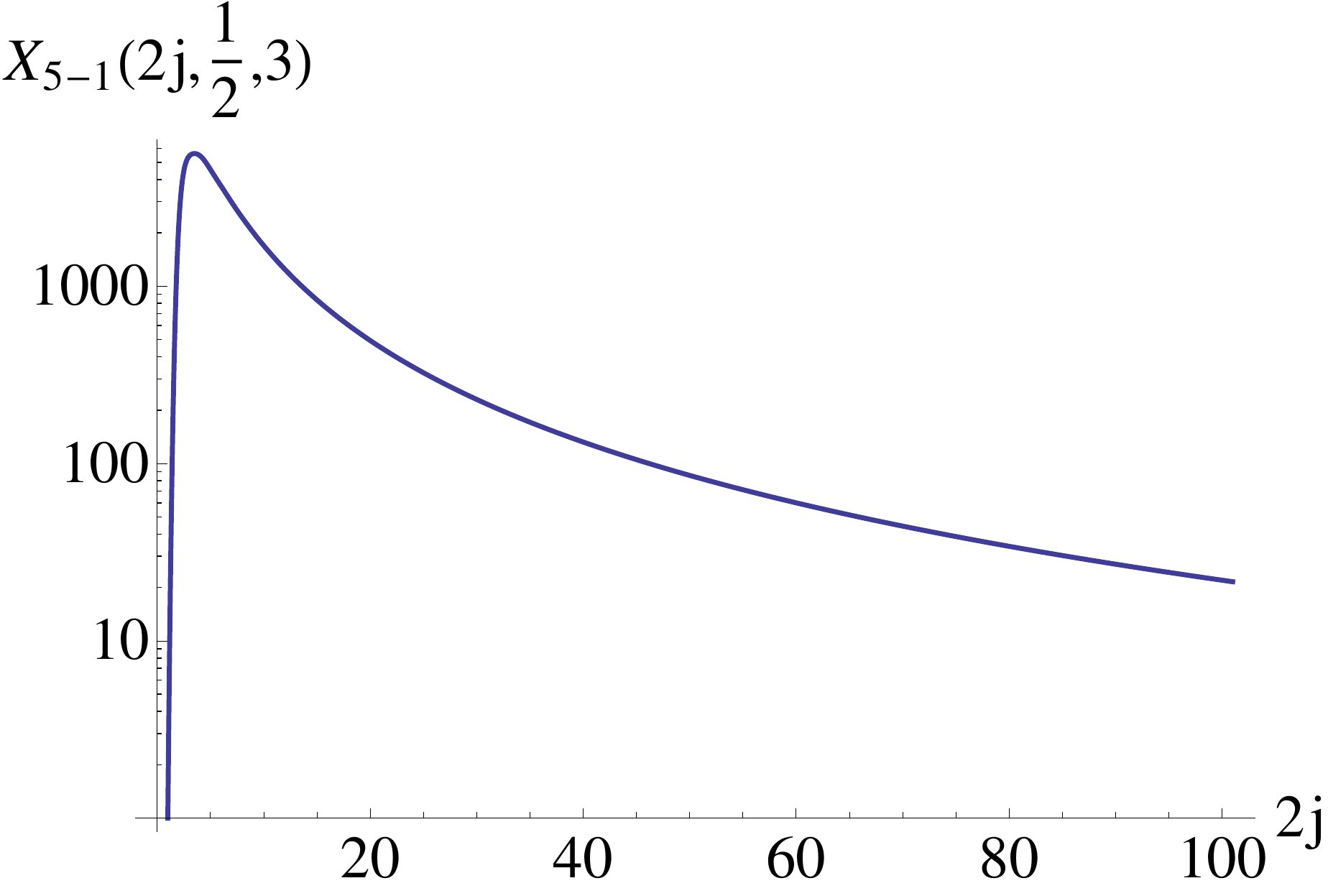}
\label{fig:X42j45}
\end{minipage}
\quad
\begin{minipage}{5cm}
\includegraphics[width=1\textwidth]{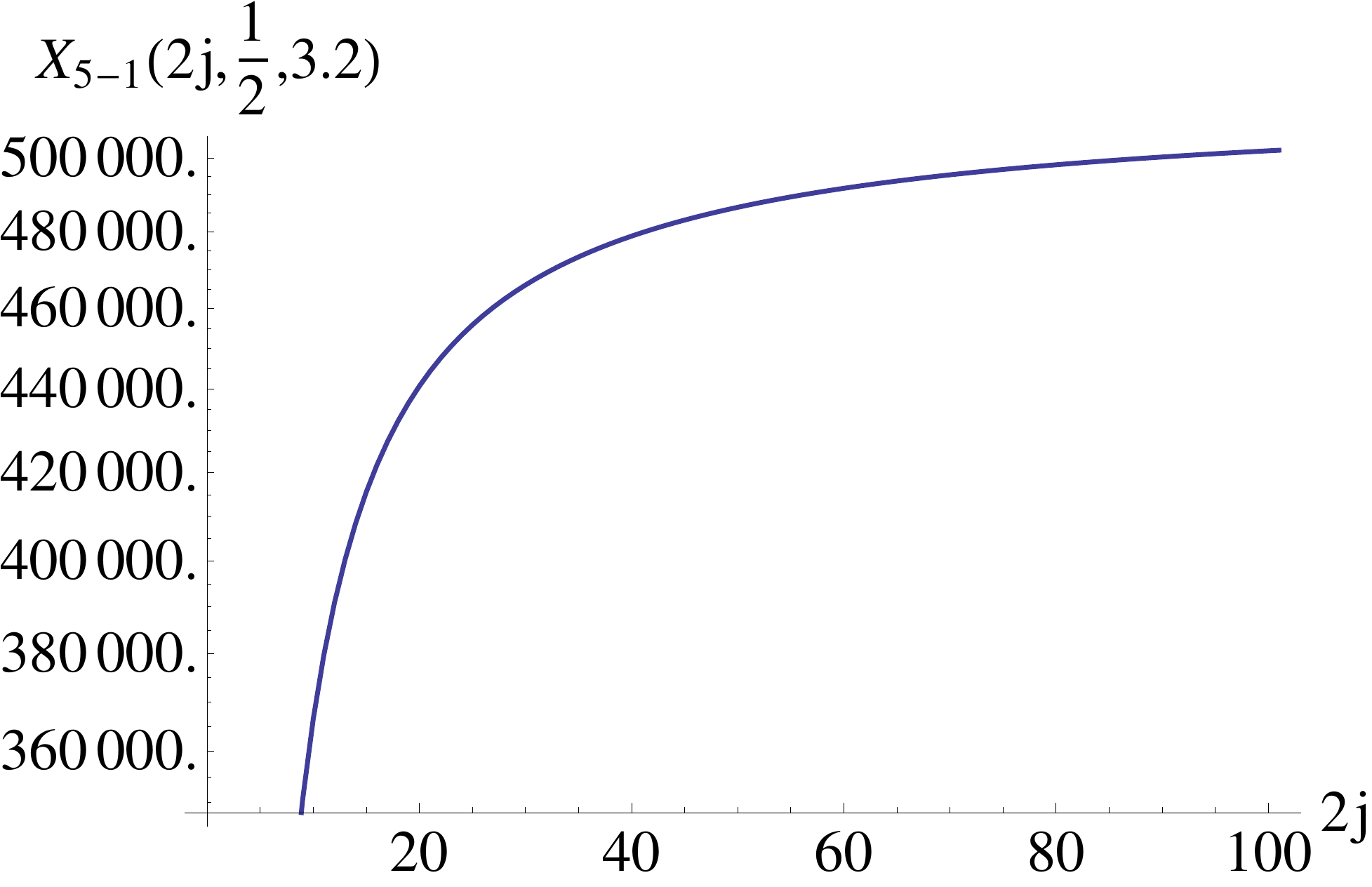}
\label{fig:X42j5}
\end{minipage}
\caption{For small values of $\eta$, $X_{5-1}$ gets suppressed with increasing spin. There seems to be a transition in the behaviour for $\eta=3.2$.\label{fig:X51j}}
\end{figure}
Note, that the expression of $D_{5-1}$ includes four summations, so it could become divergent even before $\eta=3.2$. We hence find that there is a range of the parameters $(\rho,\eta)$ for which 4--2 Pachner move is finite and 5--1 move is divergent.

\subsection{Truncated Pachner moves}
Now that we have already studied their divergence properties, we can write down the full expression for the 4-dimensional Pachner moves after truncation of the loop identities. As we will see, even though the loops in the moves are no longer mixed, there is still non-local coupling by spins.

Let us start with the simple observation, that the truncation does not change the non-invariance of the 3--3 Pachner move. The loop inside does decouple, but the truncation of the constrained loop identity does not change the fact that the hypergeometric functions of $\rho$ depend on different boundary spins in the two configurations. Thus even after the truncation, the 3--3 Pachner move is obviously not invariant, unless one considers very fine-tuned boundary spins.

Since the amplitude for the 4--2 and 5--1 moves look formally very similar, let us focus on the most interesting case of the 5--1 Pachner move. After truncation, the amplitude in Eq. (\ref{eq:51tau}) becomes

\be\label{eq:51truncated}
\begin{split}
\!\! & \mathcal{A}^{5-1}_{\tau \ \text{truncated}} (z^{\alpha f}_\gamma) = \int \prod_{\text{all}} d\mu_\rho(v,w)  \prod_{{\alpha } }  P_{\rho } (z^{\alpha f}_\gamma ; w^{\alpha f}_\gamma)  \cdot e^{  \sum_{  \sigma } {\tilde{\tau}_{E \sigma}  [\tilde{v}^{E \sigma}|\tilde{w}^{E \sigma} \ket}  +\sum_{ \mu \kappa i}\tau^{ \mu \nu}_N  \beta_i^{\mu\nu}[v^{\mu \nu}_i|w^{\nu \mu}_i \ket}\\
&=\int \prod_{\text{left over}} d\mu_\rho(v,w) \prod_{{\alpha } }  P_{\rho } (z^{\alpha f}_\gamma ; w^{\alpha f}_\gamma)  \cdot e^{  \sum_{  \sigma } {\tilde{\tau}_{E \sigma}  [\tilde{v}^{E \sigma}|\tilde{w}^{E \sigma} \ket} + \sum_{ \mu \nu }\tau^{ \mu \nu}_N  [v^{\mu \nu}_f|w^{\nu \mu}_f \ket} \mathcal{A}^{5-1}_{\tau \ \text{truncated}} (0),
\end{split}
\ee
where recall that indices run over $ \sigma \in \{ A, B, C, D\},\  \mu \nu \in \{AB, AC, AD, BD, BC, CD\}, i\in \{f,b,r,y,g \}, \ \alpha, \gamma\in\{A,B,C,D,E\}$. We have also defined the amplitude with boundary spins set to zero, $ \mathcal{A}^{5-1}_{\tau \ \text{truncated}} (0)$, in the previous section in Eq. (\ref{eq:51trunc0}) to be given by
\be\nonumber
\mathcal{A}^{5-1}_{\tau \ \text{truncated}}(0)\!
=\!\frac{1}{\left(1\!+\!\frac{\tau^{AC}_N\tau^{AD}_N\tau^{CD}_N\tau_y'}{(1+\rho^2)^3} \right)^{\!\!2}\!\left(1\!+\!\frac{\tau^{AB}_N\tau^{AD}_N\tau^{BD}_N\tau_b'}{(1+\rho^2)^3} \right)^{\!\!2}\!\left(1\!+\!\frac{\tau^{AB}_N\tau^{AC}_N\tau^{BC}_N\tau_g'}{(1+\rho^2)^3} \right)^{\!\!2}\!\left(1\!+\!\frac{\tau^{BC}_N\tau^{BD}_N\tau^{CD}_N\tau_r'}{(1+\rho^2)^3} \right)^{\!\!2}}.
\ee
It is  imperative now to notice that this does not trivially factorize, as we still have to apply the homogeneity map to obtain the final expression. The map defined in Eq. (\ref{eq:pachnerhom}) tells us that the $\tau_N$s are actually functions of the $\tilde{\tau}$s from the partially gauge-fixed propagators. The homogeneity map for the truncated 5--1 Pachner move is 
$ H_{5-1}[\mathcal{A}_{\tau \ \text{truncated}}^{5-1}] = \mathcal{A}^{5-1}_{\text{truncated}}$ and is given by 
\be\label{eq:pachnerhomtrunc}
\begin{split}
&H_{5-1}: \tau_N^{\mu\nu J}   \!\rightarrow \!\!F_ \rho (J)^2(J\!+\!1)^{\eta-1}\left(\frac{ \tilde{\tau}_{E\mu} \tilde{\tau}_{E\nu}}{(1+\rho^2)^5}\right)^{\!\!J},\qquad\tilde{\tau}_{E\sigma}^J \rightarrow  \frac{F_ \rho(J/2)^2}{(1+\rho^2)^{J}},\qquad\tau_i '^{2j}\rightarrow (2j+1)^\eta .
\end{split}
\ee
Before applying this homogeneity map, we need to first
 integrate out the  extra spinors on the internal strands -- because of the previously inserted propagators, each strand now has two spinors, instead of one. This is a simple Gaussian integration that we have performed many times before. This however requires us to contract the boundary propagators $P_{\rho }$ with functions of $\tau_N$ and $\tilde{\tau}$. 
 We  have to be careful now to perform these absorptions in a symmetric manner, which  allow us after applying the homogeneity map (\ref{eq:pachnerhomtrunc}) to define new boundary propagators $\tilde{P_{\rho } }$. 
 The amplitude (\ref{eq:51truncated}) is  then expected to become 
\be
 \mathcal{A}^{5-1}_{\text{truncated}}  (z^{\alpha f}_\gamma)=\tilde{D}_{5-1}\int \prod_{\gamma} d\mu_\rho(w_\gamma)  \prod_{{\alpha } }\tilde{P_{\rho }} (z^{\alpha f}_\gamma ; w^{\alpha f}_\gamma) .
\ee
This is the form of an amplitude for a 4-simplex with the modified propagators $\tilde{P_{\rho }}$ weighted by an overall, possibly divergent, factor $\tilde{D}_{5-1}$ which 
we expect, has  the same degree of divergence as the function $D_{5-1}$
 we studied in the previous section.
The exchange $P_{\rho }  \rightarrow \tilde{P_{\rho } }$ now is a proposal for a renormalization flow in the space of propagators (and perhaps the coefficient $\rho$ and face weight as well).  We leave the study of this flow and the behaviour of the other Pachner moves under it to future research.

\section{Discussion}
In this paper we have introduced a new Riemannian holomorphic Spin Foam model, with an alternative way of imposing holomorphic simplicity constraints. Instead of constraining the boundary spin network function, we impose the constraints on BF projectors. This model allows for more general graphs than the usual models built from vertex amplitudes. Surprisingly, it turns out to have the same asymptotics as the Dupuis-Livine model \cite{spinfoamfactory}, and hence the same semi-classical limit as the seminal EPRL-FK model \cite{Conrady:2008mk, Barrett:2009gg, Han:2011rf}. The holomorphic representation allows us to recast difficult integrals of SU(2) Wigner D-matrices into much simpler spinor Gaussian integrals. This allows for exact evaluation of spin network functions in BF theory \cite{Freidel:2012ji}. 
In our view the calculations done here realize the previously stated  hope that the spinor formalism  should allow the evaluation of expressions that 
 could not be done in the standard language of group integrals.
  In this work we have defined a \emph{homogeneity map}, which allowed us to extend these results to constrained models in the holomorphic representation, thus allowing us to evaluate all the Pachner moves in 3-- and 4--dimensions explicitly. 

In 3 dimensions, the results have been long known: the 3--2 move is invariant and 4--1 move is invariant up to a factor of an SU(2) delta function, which results from not fixing the gauge translation symmetry. It is however the first time, that Pachner moves have been calculated explicitly in a simplicity-constrained Spin Foam model of 4--dimensional Quantum Gravity. The calculation of all the Pachner moves followed from a simplicity-constrained version of so-called \emph{loop identity}. The crucial tool that allowed to find the constrained loop identity was the change from integrals over SU(2) group elements to integrals over spinors and the use of the homogeneity map. 
We found that 4d gravity Spin Foam models are not invariant under  3--3 move  unless very specific and symmetric boundary configurations are chosen. This is expected of a model for 4 dimensional gravity. A naive expectation, at least at the level of the classical action, is that the model should be invariant under the 4--2 and 5--1 Pachner moves. We found however this to be not the case for the exact evaluation. For both the 4--2 and 5--1 moves, there is an insertion of a  non-local combination of SU$(N)$ grasping  operators in the final coarse grained simplices, with a  mixing of strands leading to non-geometrical and non-local configuration. 

From the viewpoint of real-space renormalization group however, such non-local operators are expected to appear in each step of coarse graining, and have to be truncated to local ones in a controlled manner. Indeed, we have found that there exists a very accurate, natural and simple truncation scenario, which allows for a dramatic simplification of all of the 4--dimensional Pachner moves. Most crucially, the proposed truncation scheme removes the mixing of strands in the coarse-grained simplices, thus allowing them to remain geometrical, and hence making the 5--1 Pachner move structure preserving. After the truncation, the 4--2 and 5--1 Pachner moves are invariant up to a weight depending on the boundary spins. We should not expect an exact invariance, until a properly gauge-fixed model at a fixed point of renormalization flow corresponding to the continuum limit is found.

After introducing the truncation scheme, we have studied the divergence properties of the 4--2 and 5--1 moves. We find that the degree of divergence is crucially related to the two free  parameters of the model -- $\rho$ (related to the Immirzi parameter) and the power of the face weight $\eta$. More precisely, we find that whether the moves are convergent or divergent  depends on the choice of face weight $\eta$, but the exact scaling depends on $\rho$. The 4--2 Pachner moves transitions from convergence to divergence for $\eta \geq 5$. The 5--1 move becomes divergent much faster -- the transition is numerically evaluated  to be around $\eta\geq 3.2$. As such, there is a range of parameters, for which 4--2 move is convergent, while 5--1 is divergent. The popular naive choice of $(2j+1)^2$ is then not in that range. However, the exact value of the parameters, at which 5--1 move is divergent depends probably on the exact normalization of the model considered, and as such might change under renormalization flow.

An important direction for future is to check whether the truncation scheme we have proposed is robust. As such, we should study other models and find whether removing the mixing of strands in the loop identity is a good approximation. This will be studied for the DL model in \cite{DLPachner}.

The next crucial step that we leave to future study is of course to analyze the renormalization of this constrained propagator model. It would be natural to renormalize both the coupling constants $\rho$ and $\eta$, as well as the propagators $P_\rho(z_i;w_i)$. The evidence from renormalization of spin net models \cite{Dittrich:2013uqe, Dittrich:2013aia, Dittrich:2013voa} seems to point to a very rich structure of fixed points in the renormalization flow for constrained spin foam models. Finding a fixed point invariant under 4--2 and 5--1  Pachner moves would give us a model invariant under discretized diffeomorphisms, and give hope to finding a continuum limit of the theory.

\acknowledgments

We would like to thank Bianca Dittrich, Aldo Riello and Lee Smolin for helpful discussions and comments on this work. Research at Perimeter Institute is supported
by the Government of Canada through Industry Canada and by the Province of Ontario
through the Ministry of Research and Innovation. This work is also part of the research program of the Foundation for Fundamental Research on Matter (FOM), all which is part of the Netherlands Organization for Scientific Research (NWO).

\appendix

\section{Gaussian Integration Techniques}
\label{app_gauss}
In this appendix we compile a list of useful Gaussian spinor integrals. Consider first a standard Gaussian integral over the complex line $\C$
\be
\int_\C \frac{\rd^2\alpha}{\pi^2}e^{-|\alpha|^2 + \bar{x}\alpha + y\bar{\alpha}}=e^{\bar{x}y}.
\ee
This easily generalizes to a Gaussian integration over spinors on $\C^2$ with the Bargmann measure $\rd\mu (z) =  \pi^{-2}e^{-\bra z|z\ket}\rd^4z$ giving us the integral that allows us to contract strands on cable graphs
\be
\int_{\C^2}\rd\mu (z) e^{\bra x|z\ket + \bra z|y\ket} = e^{\bra x|y\ket}.
\ee
It is interesting to note that this contraction also works with anti-holomorphic spinors $|z]$, since $[x|y] =\bra y|x\ket$. We have thus
\be
\int_{\C^2}\rd\mu (z) e^{\bra x|z] + [z|y\ket} = e^{\bra x|y\ket}.
\ee

As with usual Gaussian integrations, we can calculate Gaussian spinor integrals of arbitrary polynomials. The special case worth mentioning is of course how delta function acts on holomorphic functions
\be
\int_{\C^2}\rd\mu (z) f(z) e^{\bra z|w\ket} = f(w).
\ee
Let us now consider the integrals that are crucial to the computations in the paper -- integrals with a matrix $A$. First consider the more familiar case of integrals over vectors of $n$ complex numbers
\be
\int_{\C^n}\prod_{i=1}^n \frac{\rd^2\alpha_i}{\pi^2} e^{-\sum_{i,j} \bar{\alpha}_i . A_{ij}\alpha_j}=\frac{1}{\det (A)}
\ee
This again trivially extends to the integrals over spinors. The expression useful for our paper is
\be
\int_{\C^{2n}}\prod_{i=1}^n\rd\mu (z_i)e^{\sum_{i,j}\bra z_i|A_{ij}|z_j\ket} = \frac{1}{\det (\one-A)}.
\ee
Recall that for the constrained model we had to change the measure of integration over spinors to $\rd\mu_\rho (z)=(1+\rho^2)^2\pi^{-2}e^{-(1+\rho^2)\bra z|z\ket}\rd^4 z$. It is easy to check that this is normalized properly as
\be
\int_{\C^2}\frac{(1+\rho^2)^2\rd^4 z}{\pi^{2}}e^{-(1+\rho^2)\bra z|z\ket}=1.
\ee
This change of measure leads to very simple changes to the above integrals. In particular, for a contraction we have
\be
\int_{\C^2}\rd\mu_\rho (z) e^{\bra x|z\ket + \bra z|y\ket} = e^{(1+\rho^2)^{-1}\bra x|y\ket}.
\ee
Hence for every contraction of spinors we pick up a factor of $1/(1+\rho^2)$. Thus for a loop on which we have three spinors we get the factor of $(1+\rho^2)^{-3}$ -- this appears all the time in loop identity and Pachner moves calculations.

\section{Mapping SU(2) to spinors }
\label{app_A}

\begin{lemma} \label{eqn_SU2_lemma}
Let $f \in L^2(SU(2))$ be homogeneous of degree $2J$, i.e. $f(\lambda g) = \lambda^{2J} f(g)$.  Given a spinor by $|z\ket$ define $g(z) = (|0\ket\bra 0| + |0][0|)g(z) = |0\ket\bra z| + |0][z|$ where $|0\ket = (1,0)^t$.  Then
\be
  \int_{\C^2} \rd\mu(z) f(g(z)) = \Gamma(J+2) \int_{\text{SU(2)}} \rd g \, f(g).
\ee
\end{lemma}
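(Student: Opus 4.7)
The plan is to identify $g(z)$ as a positive real rescaling of an $\mathrm{SU}(2)$ element and then perform a polar decomposition of the integral over $\C^2$. First I would note that the explicit matrix built from the definition reads
\be
g(z) = |0\ket \bra z| + |0][z| = \begin{pmatrix} \bar z_0 & \bar z_1 \\ -z_1 & z_0 \end{pmatrix},
\ee
whose determinant is $\bra z | z \ket$. Therefore, setting $r=\sqrt{\bra z|z\ket}$, the matrix $u(z)\equiv g(z)/r$ lies in $\mathrm{SU}(2)$, and as $z$ ranges over $\C^2\setminus\{0\}$, the map $z \mapsto u(z)$ is surjective onto $\mathrm{SU}(2)$ with fibres exactly the positive reals (the restriction to the unit sphere $S^3 \subset \C^2$ is a diffeomorphism onto $\mathrm{SU}(2)\cong S^3$).

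Next I would use the homogeneity hypothesis $f(\lambda g) = \lambda^{2J} f(g)$ to write $f(g(z)) = r^{2J} f(u(z))$, so that the integral to be evaluated becomes
\be
\int_{\C^2} \rd\mu(z)\, f(g(z)) = \frac{1}{\pi^2}\int_{\C^2} \rd^4 z\, e^{-r^2}\, r^{2J}\, f(u(z)).
\ee
Then I would pass to polar coordinates on $\C^2$ by writing $\rd^4 z = r^3\, \rd r\, \rd\Omega_3$, where $\rd\Omega_3$ is the round measure on $S^3$ of total volume $2\pi^2$. Under the identification $S^3 \cong \mathrm{SU}(2)$ given by $|z\ket/r \mapsto u(z)$, the normalised Haar measure satisfies $\rd\Omega_3 = 2\pi^2\, \rd g$. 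The angular and radial parts decouple, yielding
\be
\int_{\C^2} \rd\mu(z)\, f(g(z)) = \left(\frac{2\pi^2}{\pi^2}\int_0^\infty r^{2J+3} e^{-r^2}\, \rd r\right)\int_{\mathrm{SU}(2)} \rd g\, f(g).
\ee
The substitution $t=r^2$ then gives $2\int_0^\infty r^{2J+3} e^{-r^2}\rd r = \int_0^\infty t^{J+1} e^{-t}\rd t = \Gamma(J+2)$, which is precisely the claimed factor.

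The only delicate step is the normalisation bookkeeping, i.e.\ verifying that the polar factorisation $\rd^4 z = r^3\, \rd r\, \rd\Omega_3$ combined with $\rd\Omega_3 = 2\pi^2 \rd g$ produces the clean factor $\Gamma(J+2)$ without stray powers of $\pi$. Conceptually, however, this is standard Gaussian-radial calculus, so I do not anticipate a genuine obstacle beyond careful normalisation; the nontrivial geometric content of the lemma is entirely encoded in the observation that $g(z)$ is an $\mathrm{SU}(2)$ element rescaled by $\sqrt{\bra z|z\ket}$, which in turn is what enables homogeneity to convert the $\C^2$ Gaussian weight into a radial Gamma integral.
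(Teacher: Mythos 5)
Your proof is correct and follows essentially the same route as the paper's: both exploit the homogeneity to extract $r^{2J}$, split the Gaussian integral on $\C^2$ into radial and angular parts, evaluate the radial piece as $\tfrac{1}{2}\Gamma(J+2)$, and identify the angular integral (after accounting for the $\pi^{-2}$ in $\rd\mu$ and the $S^3$ volume $2\pi^2$) with the normalized Haar integral over $\mathrm{SU}(2)$. The only difference is presentational: you work coordinate-free via $\det g(z)=\bra z|z\ket$ and the round measure on $S^3$, whereas the paper uses an explicit parametrization $|z\ket=(r\cos\theta e^{i\phi}, r\sin\theta e^{i\psi})^t$ with measure $r^3\sin\theta\cos\theta\,\rd r\,\rd\phi\,\rd\theta\,\rd\psi$.
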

\begin{proof}
We can relate the inner product (\ref{barg_in_prod}) to the standard $L^2(\text{SU(2)})$ inner product by parameterizing the spinor as
\be
  |z\ket = \bpm r \cos\theta e^{i\phi} \\ r \sin\theta e^{i\psi} \epm ,
\ee
where $r \in (0,\infty)$, $\theta \in [0,\pi/2)$, $\phi \in [0,2\pi)$, $\psi \in [0,2\pi)$.  The Lebesgue measure in these coordinates is $\rd^4 z = r^3 \sin \theta \cos \theta \rd r \, \rd\phi \, \rd\theta \, \rd\psi$.  Now using the homogeneity property $f(g(z)) = r^{2J} f(\widetilde{g}(z))$ we have
\be
  \int_{\C^2} \rd\mu(z) f(g(z)) = \int_{0}^{\infty} \rd r \, r^{3+2j} e^{-r^2} \int_{0}^{\pi/2} \rd\theta \sin\theta \cos\theta \int_{0}^{2\pi} \rd\phi \int_{0}^{2\pi} \rd\psi f(\widetilde{g}(z)),
\ee
where $\widetilde{g}(z) \in \text{SU(2)}$.  Performing the intgral over $r$ we get
\be
  \int \rd r \, r^{3+2J} e^{-r^2} = \frac{1}{2} \Gamma(J+2)
\ee
and so
\be
  \int_{\C^2} \rd\mu(z) f(g(z)) = \Gamma(J+2) \int_{\text{SU(2)}} \rd g \, f(g),
\ee
where $\rd g$ is the normalized Haar measure on SU(2).  In our case $J$ is an integer so $\Gamma(J+2) = (J+1)!$.
\end{proof}

\section{Group averaging the SU(2) projector}
\label{app_thm_proj}
In this appendix we recall the calculation in  \cite{Freidel:2012ji} which shows that we can perform the integration over $g$ explicitly for the BF projector (\ref{proj}), which we prove in the following theorem.
\begin{theorem} \label{thm_proj_k}
The projector (\ref{proj}) can be expressed as a power series in the holomorphic spinor invariants as
\be  
  P(z_i;w_i) = \sum_{[k]} \frac{1}{(J+1)!} \prod_{i<j} \frac{([z_i|z_j\ket[w_i|w_j\ket)^{k_{ij}}}{k_{ij}!} .
\ee
where the sum is over sets of $n(n-1)/2$ non-negative integers $k_{ij}$ with $1\leq i<j\leq n$.
\end{theorem}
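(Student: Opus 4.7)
The plan is to reduce the Haar integral in \eqref{proj} to a Gaussian spinor integral, evaluate that integral in closed form, and then expand the result by the multinomial theorem. First I would expand the exponential in \eqref{proj} as a power series,
\be
P(z_i;w_i) = \sum_{N=0}^{\infty} \frac{1}{N!}\int_{\mathrm{SU}(2)} \rd g\,\left(\sum_i [z_i|g|w_i\ket\right)^{N},
\ee
so that each term is a homogeneous polynomial of degree $N$ in the matrix elements of $g$. Invariance of the Haar measure under $g\mapsto -g$ kills all odd-$N$ contributions, and for each even $N=2J$ Lemma~\ref{eqn_SU2_lemma} of Appendix~\ref{app_A} converts the group integral into a Bargmann integral over an auxiliary spinor $\zeta$,
\be
\int_{\mathrm{SU}(2)}\rd g\, f(g)=\frac{1}{(J+1)!}\int_{\C^2} \rd\mu(\zeta)\, f\bigl(g(\zeta)\bigr),\qquad g(\zeta)=|0\ket\bra\zeta|+|0][\zeta|,
\ee
valid for any $f$ homogeneous of degree $2J$.

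The second step is to evaluate the auxiliary generating integral $I(z,w):=\int \rd\mu(\zeta)\,\exp\!\left(\sum_i [z_i|g(\zeta)|w_i\ket\right)$ once and for all. Using the resolution of identity $|0\ket\bra 0|+|0][0|=\one$, the exponent splits as $\bra\zeta|A\ket+[\zeta|B\ket$, with $|A\ket=\sum_i[z_i|0\ket\,|w_i\ket$ and $|B\ket=\sum_i[z_i|0]\,|w_i\ket$. The form $[\zeta|B\ket$ can be rewritten as $\bra x|\zeta\ket$ with $|x\ket=-|B]$, so the exponent is brought into the canonical shape for the standard Gaussian contraction of Appendix~\ref{app_gauss}, giving $I(z,w)=e^{[A|B\ket}$. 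A short manipulation, using the antisymmetry of $[w_i|w_j\ket$ together with the elementary spinor identity $[z_i|z_j\ket=[z_i|0\ket[z_j|0]-[z_i|0][z_j|0\ket$, then collapses the unrestricted double sum $[A|B\ket=\sum_{i,j}[z_i|0\ket[z_j|0][w_i|w_j\ket$ to an ordered sum $\sum_{i<j}[z_i|z_j\ket[w_i|w_j\ket$, so that
\be
\int \rd\mu(\zeta)\, e^{\sum_i [z_i|g(\zeta)|w_i\ket}=\exp\!\left(\sum_{i<j}[z_i|z_j\ket[w_i|w_j\ket\right).
\ee

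The last step is pure bookkeeping. The integrand on the left is linear in each $z_i$ (and in each $w_i$), whereas the exponent on the right is bilinear in each, so matching bihomogeneous components term by term in the identity above yields
\be
\frac{1}{(2J)!}\int\rd\mu(\zeta)\left(\sum_i[z_i|g(\zeta)|w_i\ket\right)^{2J}=\frac{1}{J!}\left(\sum_{i<j}[z_i|z_j\ket[w_i|w_j\ket\right)^{J}.
\ee
Combining this with the Lemma~\ref{eqn_SU2_lemma} reduction gives the closed form $P(z_i;w_i)=\sum_{J\ge 0}\frac{1}{J!(J+1)!}\bigl(\sum_{i<j}[z_i|z_j\ket[w_i|w_j\ket\bigr)^{J}$, and the multinomial expansion over non-negative integers $[k]=(k_{ij})_{i<j}$ with $J=\sum_{i<j}k_{ij}$ reproduces the stated formula.

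The one non-trivial step is the Gaussian evaluation of $I(z,w)$: the linear form in $\zeta$ is a mixture of holomorphic and antiholomorphic pieces, so some care is required to bring it to the canonical shape of Appendix~\ref{app_gauss} before contracting, and the subsequent reduction of $[A|B\ket$ to a manifestly pair-antisymmetric sum is the key algebraic identity. Every other step is either a simple homogeneity argument or a multinomial rearrangement.
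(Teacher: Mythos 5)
Your proof is correct and follows essentially the same route as the paper's Appendix~\ref{app_thm_proj}: expand into homogeneous components, use Lemma~\ref{eqn_SU2_lemma} to trade the Haar integral for a Bargmann--Gaussian one at the cost of $1/(J+1)!$, evaluate the Gaussian contraction to get $e^{[A|B\ket}$, and collapse the unrestricted double sum via the resolution $\one = |0\ket\bra 0| + |0][0|$ (your identity $[z_i|z_j\ket = [z_i|0\ket[z_j|0] - [z_i|0][z_j|0\ket$ is exactly this in components), followed by the multinomial expansion. The only, welcome, refinement is that you make explicit, via $g \mapsto -g$ invariance of the Haar measure, why the odd-total-degree terms vanish --- a point the paper leaves implicit when it restricts to integer $J$.
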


\begin{proof} 
Expanding (\ref{proj}) in a power series
\be
  \int_{\text{SU(2)}} \rd g e^{[z_i|g|w_i\ket} = \sum_{j_i} \int \rd g \prod_i \frac{[z_i|g|w_i\ket^{2j_i}}{(2j_i)!},
\ee
we see that each term in the sum is homogeneous of degree $2J = \sum_i (2j_i)$.  This fact allows us to use Lemma \ref{eqn_SU2_lemma} detailed in Appendix \ref{app_A} which says that we can replace the integral over SU(2) with a Gaussian integral paying a factor of $1/(J+1)!$ as in
\be
  (J+1)! \int \rd g \prod_i \frac{[z_i|g|w_i\ket^{2j_i}}{(2j_i)!} = \int \rd\mu(\alpha) \prod_i \frac{\left([z_i|0\ket\bra \alpha|w_i\ket + [z_i|0][\alpha|w_i\ket \right)^{2j_i}}{(2j_i)!}.
\ee
Now resum over $j_i$ to get
\be
  \sum_{j_i} (J+1)! \int \rd g \prod_i \frac{[z_i|g|w_i\ket^{2j_i}}{(2j_i)!} 
  = \int \rd\mu(\alpha) e^{\sum_i \left([z_i|0\ket\bra \alpha|w_i\ket + [z_i|0][\alpha|w_i\ket \right)} 
  = e^{\sum_{i,j} [z_i|0][0|z_j \ket[w_i|w_j\ket },
\ee
where we've performed the Gaussian integration in the second equality.  Using the antisymmetry $[w_i|w_j\ket = -[w_j|w_i\ket$ and recognizing the identity $1 = |0\ket\bra 0| + |0][0|$ in
\be
  \sum_{i,j} [z_i|0][ 0|z_j \ket[w_i|w_j\ket = \sum_{i<j} [z_i\left( |0\ket\bra 0| + |0][0| \right)|z_j \ket[w_i|w_j\ket =  \sum_{i<j} [z_j|z_i\ket[w_i|w_j\ket .
\ee
Finally we have
\be
  \sum_{j_i} (J+1)! \int \rd g \prod_i \frac{[z_i|g|w_i\ket^{2j_i}}{(2j_i)!} = e^{\sum_{i<j} [z_j|z_i\ket[w_i|w_j\ket} = \sum_{[k]} \prod_{i<j} \frac{([z_i|z_j\ket[w_i|w_j\ket)^{k_{ij}}}{k_{ij}!}
\ee
and since $J = \sum_{i<j} k_{ij}$ is just the total homogeneity of each term we can move the $(J+1)!$ to the RHS and complete the proof.
\end{proof}

\section{Proof of Lemma (\ref{eqn_det_lemma})}\label{app_det_lemma}
\begin{proof}
For a $2\times 2$ matrix $2\det M = \tr(M)^2-\tr(M^2)$.
If one consider   $M = \one - \sum_i C_i |A_i\ket[B_i|$,  we have
$$
  \tr(M^2) = 2 - 2\sum_i C_i [B_i|A_i\ket + \sum_{i,j} C_i C_j [B_i|A_j\ket[B_j|A_i\ket
$$
and
$$
  \tr(M)^2 = 4 - 4\sum_i C_i[B_i|A_i\ket + \sum_{i,j} C_i C_j [B_i|A_i\ket[B_j|A_j\ket ,
$$
therefore
$$
  2\det(M) = 2 - 2\sum_i C_i [B_i|A_i\ket + \sum_{i,j}  C_i C_j  \left([B_i|A_i\ket[B_j|A_j\ket - [B_i|A_j\ket[B_j|A_i\ket \right)
$$
and using $[A_i|B_i\ket[B_j|A_j\ket - [A_i|B_j\ket[B_i|A_j\ket = [A_i|A_j\ket[B_j|B_i\ket$ gives the result.
\end{proof}

\section{Explicit calculation of the Constrained Loop Identity}\label{app_loopid}
In this appendix we explicitly show how to calculate the constrained loop identity (\ref{fullloop}). Let us consider the loop composed of two pairs of partially gauge-fixed propagators $\one_\rho\circ\one_\rho$ and one pair of propagators $P_\rho\circ P_\rho$. To calculate this loop, let us use the homogenized propagators $\one_{\tilde{\tau}}\circ\one_{\tilde{\tau}}$ and $G_\tau\circ G_\tau$ instead and at the end of the calculation use the homogeneity maps (\ref{eq:homtrivial}) and (\ref{eq:homprop}), which we recall are given by 
\be\nonumber
\one_{\tilde{\tau}}\circ \one_{\tilde{\tau}} = e^{\tilde{\tau} \sum_i [\tilde{z}_i|\tilde{w}_i\ket} \quad \text{with} \quad \tilde{\tau}^J \rightarrow  \frac{F_\rho(J/2)^2}{(1+\rho^2)^{J}} \quad\text{for}\quad \one_{\tilde{\tau}}\circ \one_{\tilde{\tau}}\rightarrow \one_\rho\circ \one_\rho
\ee
for a pair of gauge-fixed propagators and by
\be\nonumber
G_\tau\circ G_\tau = e^{\tau \sum_{i<j}  [z_i|z_j\ket [w_i|w_j\ket} \quad \text{with} \quad \tau^J  \rightarrow \frac{F_\rho(J)^2}{(1+\rho^2)^{2J} (J+1)!} \quad \text{for} \quad G_\tau\circ G_\tau \rightarrow P_\rho\circ P_\rho.
\ee
for the pair of propagators $P_\rho$. We will also insert a face weight by tracking the homogeneity of spin in the loop by a factor of $\tau '$. The contractions of the spinors around the loop are as follows: $|w_4\ket = |\tilde{w}_4^2]$, $|\tilde{z}_4^2\ket = |\tilde{z}_4^1]$ and $|\tilde{w}_4^1\ket = |z_4]$. The cable diagram with all the labels is shown in Fig. \ref{fig:loopidentity2}.
\begin{figure}[h]
	\centering
		\includegraphics[width=0.8\textwidth]{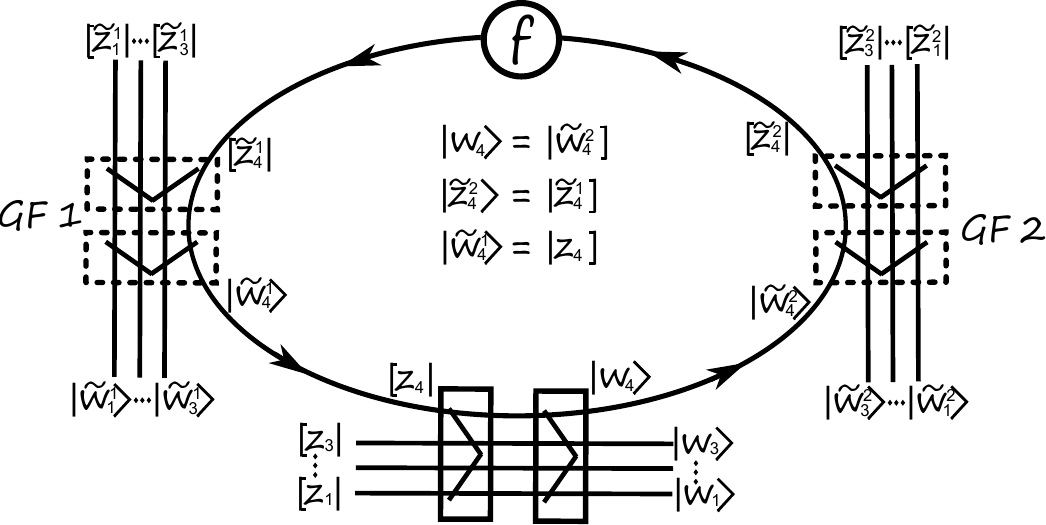}
	\caption{Cable diagram with all the labels for the constrained loop identity.}
		\label{fig:loopidentity2}
\end{figure}

 We can thus finally calculate the loop identity:
\be
\begin{split}
 &\int \!\!   \rd\mu_\rho (z_4, w_4, \tilde{z}^1_4) G_\tau^2(z_1,\ldots , \tau ' z_4; w_1,\ldots , \check{\tilde{w}}^2_4)  \one_{\tilde{\tau}_1}^2 (\tilde{z}^1_1,\ldots ,  \tilde{z}^1_4; \tilde{w}^1_1,\ldots , \check{z}_4) \one_{\tilde{\tau}_2}^2 (\tilde{z}^2_1,\ldots , \check{\tilde{z}}^1_4; \tilde{w}^2_1,\ldots , \tilde{w}^2_4)  \\ 
&= \frac{e^{\tau \sum_{i<j<4}[z_i|z_j\ket [w_i|w_j\ket+\sum_{i<4} \tilde{\tau}_1 [\tilde{z}^1_i|\tilde{w}^1_i\ket +  \tilde{\tau}_2 [\tilde{z}^2_i|\tilde{w}^2_i\ket} }{1 - \frac{\tau\tilde{\tau}_1\tilde{\tau}_2\tau '}{(1+\rho^2)^3} \sum_{i<4} [z_i|w_i\ket  + \left(\frac{\tau\tilde{\tau}_1\tilde{\tau}_2\tau '}{(1+\rho^2)^3}\right)^2 \sum_{i<j<4}  [z_i|z_j\ket [w_i|w_j\ket} \\ 
&= \exp\left(\tau \sum_{i<j<4}[z_i|z_j\ket [w_i|w_j\ket +\sum_{i<4} \tilde{\tau}_1 [\tilde{z}^1_i|\tilde{w}^1_i\ket +  \tilde{\tau}_2 [\tilde{z}^2_i|\tilde{w}^2_i\ket\right)\times \\
& \quad \times \sum_{N,M} \frac{(N+M)!}{N!M!}\left(\!\frac{\tau\tilde{\tau}_1\tilde{\tau}_2\tau '}{(1+\rho^2)^3} \right)^{ N+2M} \left( \sum_{i<4} [z_i|w_i\ket\right)^{N} \left(-\sum_{i<j<4}[z_i|z_j\ket [w_i|w_j\ket\right)^{M}.\nonumber
\end{split}
\ee
The factor of $1/(1+\rho^2)^3$ arises from the three spinor integrations. Compared to the toy loop, the result is thus an exchange of $\frac{\tau\tau '}{1+\rho^2} \rightarrow \frac{\tau\tilde{\tau}_1\tilde{\tau}_2\tau '}{(1+\rho^2)^3}$ and the addition of the trivial propagation of the gauge-fixed strands.
Before we can use the homogeneity maps  we have to expand the exponentials in a power series. Doing this we arrive at
\be
\begin{split}
& \sum_{A,B,C,M,N}\frac{(-1)^M(N+M)!}{A!B!C!M!N!}\left(\frac{\tau\tilde{\tau}_1\tilde{\tau}_2\tau '}{(1+\rho^2)^3} \!\right)^{ N+2M}\tilde{\tau}_1^A\tilde{\tau}_2^B\tau^C \times \\
& \times\left(\sum_{i<4} [\tilde{z}^1_i|\tilde{w}^1_i\ket \right)^{ A}\left(\sum_{i<4} [\tilde{z}^2_i|\tilde{w}^2_i\ket \right)^{ B} \left( \sum_{i<4} [z_i|w_i\ket\right)^{N} \left(\sum_{i<j<4}[z_i|z_j\ket [w_i|w_j\ket\right)^{M+C}.\nonumber
\end{split}
\ee
Relabeling $N\rightarrow J$ and $M+C\rightarrow J'$ and using the above homogeneity maps for $\tau , \tilde{\tau}_1, \tilde{\tau}_2$ and $\tau '^{2j}\rightarrow (2j+1)^\eta$,  we recover the result for the constrained loop identity  (\ref{fullloop}).


\end{document}